\newcommand{\be}{\begin{equation}}
\newcommand{\ee}{\end{equation}}
\newcommand{\derv}[1]{\frac{\partial}{\partial #1}}
\newcommand{\deriv}[2]{\frac{\partial #1}{\partial #2}}
\newcommand{\beqn}{\begin{equation}}
\newcommand{\eeqn}{\end{equation}}
\newcommand{\beqnar}{\begin{eqnarray}}
\newcommand{\eeqnar}{\end{eqnarray}}
\newtheorem{theorem}{Theorem}[section]
\newtheorem{proposition}[theorem]{Proposition}
\newenvironment{example}[1][Example]{\begin{trivlist}
\item[\hskip \labelsep {\bfseries #1}]}{\end{trivlist}}
\newenvironment{remark}[1][Remark]{\begin{trivlist}
\item[\hskip \labelsep {\bfseries #1}]}{\end{trivlist}}
         \let\leq=\leqslant
\providecommand\boldsymbol[1]{\mbox{\boldmath $##1$}}}
\newsavebox{\astrutbox}
\sbox{\astrutbox}{\rule[-5pt]{0pt}{20pt}}
\title[Multi-Symplectic Magnetohydrodynamics]
{Multi-Symplectic Magnetohydrodynamics}
\author[G. M. Webb, J. F. McKenzie and G. P. Zank]%
{G.\ns M.\ns W\ls E\ls B\ls B$^1$%
 \thanks{Email address for correspondence: gmw0002@uah.edu},\ns
J.\ls F.\ns M\ls c\ls K\ls E\ls N\ls Z\ls I\ls E$^{1,3}$\break
 \and G.\ns P.\ns Z\ls A\ls N\ls K$^{1,2}$}
\affiliation{$^1$Center for Space Plasma and Aeronomic Research, 
The University of Alabama in Huntsville, 
Huntsville AL 35805, USA\\[\affilskip]
$^2$Department of Space Science, The University of
Alabama in Huntsville, Huntsville AL 35805, USA\\
$^3$Department of Mathematics and Statistics, 
Durban University of Technology,\\
Steve Biko Campus, Durban South Africa, and School of Mathematical Sciences,University of Kwa-Zulu Natal, Durban South Africa}
\begin{document}


\maketitle

\begin{abstract}
A multi-symplectic formulation of ideal magnetohydrodynamics (MHD) 
is developed based on a Clebsch variable  
 variational 
principle  in which the Lagrangian 
consists of the kinetic minus the potential energy of the MHD fluid
modified by constraints using 
Lagrange multipliers, that ensure mass conservation, entropy advection with
the flow, the Lin constraint and Faraday's equation (i.e the magnetic flux
is Lie dragged with the flow). The analysis is also carried out 
using the magnetic vector potential $\tilde{\bf A}$ where 
$\alpha=\tilde{\bf A}{\bf\cdot}d{\bf x}$ is Lie dragged with the flow, and
${\bf B}=\nabla\times\tilde{\bf A}$. 
The multi-symplectic conservation laws give rise to 
the Eulerian momentum and energy conservation laws. The symplecticity 
or structural conservation laws for the multi-symplectic system corresponds 
to the conservation of phase space. It corresponds to 
taking derivatives of the momentum and energy conservation laws 
and combining them to produce $n(n-1)/2$ extra conservation 
laws, where $n$ is the number of independent variables. Noether's 
theorem for the multi-symplectic MHD system is derived, including 
the case of non-Cartesian space coordinates, where the metric plays a
role in the equations. 
\end{abstract}


\maketitle


\section{Introduction}
Multi-symplectic equations for
Hamiltonian systems with two or more
independent variables $x^\alpha$ have been developed 
as a useful extension of
Hamiltonian systems with one evolution variable $t$.
This development has connections with dual
variational formulations
of travelling wave problems (e.g. \cite{Bridges92}), and is useful in 
numerical schemes for
Hamiltonian systems. Bridges and co-workers used the multi-symplectic
approach to study
 linear and nonlinear wave propagation, 
 generalizations of 
wave action, wave modulation 
theory, and wave stability problems (\cite{Bridges97a,Bridges97b}). 
 \cite{Reich00} and  \cite{Bridges06} develop multi-symplectic difference schemes. 
Multi-symplectic Hamiltonian systems 
have been studied by 
\cite{Marsden99} and  \cite{Bridges05}.  
\cite{Webb07,Webb08,Webb14d} discuss 
travelling waves in multi-fluid plasmas using a multi-symplectic formulation 
analogous to \cite{Bridges92} paper on travelling water waves. \cite{Holm98}
give an overview of Hamiltonian systems, semi-direct 
product Lie algebras and Euler-Poincar\'e equations.  

\cite{Cotter07} developed a multi-symplectic, Euler-Poincar\'e
formulation of fluid mechanics.  They showed  
that  multi-symplectic ideal fluid mechanics type systems are 
 related to  Clebsch variable formulations  
 in which the Lagrange multipliers 
play the role of canonically conjugate momenta to the 
constrained variables {\bf (\cite{Zakharov97}, \cite{Morrison98})}.  
Thus, the Clebsch variable formulation involves a momentum map.  
 As a part of the paper, we give 
a brief introduction to  
 multi-symplectic systems (Sections 3 and 4),  
based on the work of \cite{Hydon05} and \cite{Cotter07}  
(see also \cite{Brio10}). In multi-symplectic  
 Hamiltonian systems, both the space and the time variables 
can be thought of as evolution variables.   
In this paper we obtain multi-symplectic equations for ideal gas dynamics 
and MHD, based on the Clebsch variables formulation. 
 The energy and momentum conservation laws  
for gas dynamics and MHD are obtained from the multi-symplectic formalism.
 We also discuss and give examples of the symplecticity  or structural 
conservation laws, which are obtained by combining derivatives of the
energy and momentum conservation laws. The symplecticity conservation 
law thus impose higher order derivative constraints on the energy 
and momentum conservation laws. 

The present paper is related to recent work by 
\cite{Webb14a,Webb14b,Webb14c} and \cite{WebbMace14} 
on conservation laws,  Hamiltonian and symmetry group approaches to the 
MHD equations. In \cite{Webb14a} advected invariants in ideal fluid 
mechanics and MHD were obtained by using Lie dragging techniques 
(see also \cite{Tur93}). In particular helicity  
conservation laws were discussed 
(i.e. fluid helicity conservation in ideal fluids (e.g. \cite{Moffatt69}), 
cross helicity and magnetic helicity in MHD (\cite{Woltjer58}).  
 \cite{Berger84} investigated magnetic helicity 
and relative magnetic helicity (see also \cite{Finn85,Finn88}). 
\cite{Berger90}, \cite{Ruzmaikin94} and 
\cite{Akhmetiev95} investigated higher order MHD knot invariants 
known as Sato-Levine invariants, which can be described using Seifert surfaces
(e.g. used to describe the Whitehead link and Borromean rings). 
 For special ideal MHD flows in which the magnetic helicity density 
${\bf A}{\bf\cdot}{\bf B}$ vanishes, there is a higher order magnetic 
topological charge, namely the Godbillon Vey invariant (e.g. 
\cite{Tur93},  
 \cite{Webb14a,Webb14c}) which also describes the
magnetic field topology (i.e. magnetic helicity is not the only quantity 
describing magnetic topology). For the Godbillon Vey invariant, the  gauge 
of the magnetic vector potential ${\bf A}$ is chosen so that the one-form 
$\alpha={\bf A}{\bf\cdot} d{\bf x}$ is Lie dragged by the 
background plasma flow. \cite{Webb14b} derive MHD conservation laws 
using Noether's first and second theorems ({\bf see \cite{Hydon11} for 
an insightful treatment of Noether's second theorem}). In the most general case, the 
fluid helicity and cross helicity conservation laws are nonlocal as they 
involve Clebsch potentials, which are nonlocal variables (e.g. 
\cite{Webb14a,Webb14b}).  
 A brief synopsis of this work is given by \cite{Webb14c}. 
{\bf \cite{WebbMace14}
use Noether's second theorem and fluid relabelling symmetries 
to derive conservation laws in MHD using the approach of \cite{Hydon11}. 
They obtain a new potential vorticity type conservation law for MHD using a 
non-field aligned fluid relabelling symmetry of the equations.} 

The main aim of the present paper is to provide a multi-symplectic 
formulation of the MHD equations by using Clebsch variables. 
 
In Section 2 the basic ideal MHD equations are given. 
{\bf In Section 2, we give a simple example of the multi-symplectic 
form of the ideal gas dynamic equations in one Cartesian space coordinate. 
We give a recipe for constructing the multi-symplectic
form of the equations and describe the symplecticity or structural 
conservation law for 1D gas dynamics. By using Clebsch variables, one 
is in effect increasing the number of dependent variables describing the 
system. To ensure that the system of equations describes the original Eulerian 
fluid equations, extra constraints or conservation laws (the symplecticity 
conservation laws) ensures that original system of equations is obtained. 
It turns out that the symplecticity conservation equations can be expressed 
in terms of two-dimensional Jacobians of the dependent variables 
and the independent space and time variables. 
 The basic theory behind this approach is described 
in Section 4.} 

In Section 3, we give an introduction to Clebsch variables 
and momentum maps following the analysis of \cite{Cotter07}.  
We introduce the Clebsch variable description of MHD 
(e.g. \cite{Zakharov97}, {\bf \cite{Morrison98}, \cite{Holm83a,Holm83b},}   
 \cite{Webb14a,Webb14b}). We give two different 
formulations of the MHD variational principle. In the first formulation, 
 Faraday's law for the magnetic field induction ${\bf B}$ is included 
as a constraint. The constraints of mass continuity, entropy advection, 
and the Lin constraint are incorporated in the variational principle 
by means of Lagrange multipliers. In a second variational formulation, 
the Faraday equation constraint is replaced by the condition that 
the magnetic vector potential 1-form: 
$\alpha=\tilde{\bf A}{\bf\cdot}d{\bf x}$ 
is Lie dragged by the flow, where ${\bf B}=\nabla\times\tilde{\bf A}$
(e.g. \cite{Tur93}, {\bf \cite{Gordin87}, \cite{Padhye96a,Padhye96b}} 
and \cite{Webb14a}). In the case 
$\nabla{\bf\cdot}{\bf B}=0$, and  ${\bf B}=\nabla\times\tilde{\bf A}$, the 
condition that the one-form $\alpha=\tilde{\bf A}{\bf\cdot}d{\bf x}$ 
is Lie dragged by the flow is equivalent to Faraday's equation. 
{\bf We include an extra ${\bf u}\nabla{\bf\cdot}{\bf B}$ 
term in Faraday's equation for mathematical reasons. This allows 
one to obtain a mapping between the multi-symplectic MHD 
equations using the advected ${\bf A}$ formalism and the advected 
magnetic flux version of Faraday's law (see Proposition (\ref{prop5.3}) and 
(\ref{eq:con7}) in the conclusions). 
It is also useful to include the ${\bf u}\nabla{\bf\cdot}{\bf B}$ 
in numerical MHD in order to minimize numerically 
generated $\nabla{\bf\cdot}{\bf B}$ in numerical MHD codes  by advecting 
the numerically generated $\nabla{\bf\cdot}{\bf B}\neq 0$ out of the computational 
domain
 (e.g. \cite{Powell99}, \cite{WebbPogorelovZank10}). By setting 
$\nabla{\bf\cdot}{\bf B}=0$ after all calculations are done gives rise 
to correct physical results, but allows one to retain a mathematically 
more useful form of the equations.} 
In general, Faraday's equation is equivalent to the condition that 
the magnetic field flux 2-form $\beta={\bf B}{\bf\cdot}d{\bf S}$ is Lie 
dragged with the flow. If $\nabla{\bf\cdot}{\bf B}=0$ then $\beta=d\alpha$ 
where $d\alpha$ is the exterior derivative of the 1-form $\alpha$. 

Section 4 presents an overview of multi-symplectic Hamiltonian systems 
based in part on the work of  \cite{Hydon05}. We discuss the symplecticity 
conservation laws, which are related to the conservation of phase space in 
multi-symplectic systems, in which the generalized phase space volume element 
$\kappa^\alpha=(1/2){\sf K}^\alpha_{ij} dz^i\wedge dz^j$ is given by the 
exterior derivative of a one-form $\omega^\alpha$. This requirement implies 
$d\kappa^\alpha=dd\omega^\alpha=0$ (by the Poincar\'e
 Lemma), which implies the conservation of the 
phase space element $\kappa^\alpha$ where $z^i$ are the canonical variables. 
We also discuss Noether's first theorem for multi-symplectic systems.  
We  describe the form of  multi-symplectic  systems for the case 
of non-Cartesian spatial coordinates where the spatial metric 
plays a role in the equations (see also \cite{Bridges10}). 

Section 5 develops the multi-symplectic MHD evolution equations. 
The analysis is similar to that of \cite{Cotter07} 
where they develop the multi-symplectic approach to Hamiltonian hydrodynamic 
type systems.

Section 6 concludes with a summary and discussion.  
\section{The Model}

The magnetohydrodynamic equations can be written in the form:
\beqnar
&&\deriv{\rho}{t}+\nabla{\bf\cdot} (\rho{\bf u})=0, \label{eq:2.1}\\
&&\derv{t}(\rho{\bf u})+\nabla{\bf\cdot}\left[\rho {\bf u}{\bf u}
+\left(p+\frac{B^2}{2\mu}\right){\bf I} -\frac{\bf B B}{\mu}\right]=0,
\label{eq:2.2}\\
&&\deriv{S}{t}+{\bf u}{\bf\cdot}\nabla S=0, 
\label{eq:2.3}\\
&&\deriv{\bf B}{t}-\nabla\times\left({\bf u}\times {\bf B}\right)
+{\bf u}\nabla{\bf\cdot}{\bf B}=0, \label{eq:2.4}
\eeqnar
where $\rho$, ${\bf u}$, $p$,
$S$ and ${\bf B}$ correspond to the gas density, fluid velocity, pressure,
specific entropy, and magnetic induction ${\bf B}$ respectively, and
${\bf I}$ is the unit $3\times 3$ dyadic.
The gas pressure $p=p(\rho,S)$ is a function of the density $\rho$ and
entropy $S$, and $\mu$ is the magnetic permeability. 
Equations (\ref{eq:2.1})-(\ref{eq:2.2}) are the mass and  momentum  
conservation laws, (\ref{eq:2.3}) is the entropy advection equation 
and (\ref{eq:2.4}) is Faraday's equation in the MHD limit. 

In classical MHD, (\ref{eq:2.1})-(\ref{eq:2.4}) are supplemented by
Gauss' law:
\beqn
\nabla{\bf\cdot}{\bf B}=0, \label{eq:2.5}
\eeqn
which implies the non-existence of magnetic monopoles.

It is useful to keep in mind the first law of thermodynamics:
\beqn
TdS=dQ=dU+pdV\quad\hbox{where}\quad V=\frac{1}{\rho}, \label{eq:2.7}
\eeqn
where $U$ is the internal energy per unit mass and $V=1/\rho$ is the specific
volume. Using the internal energy per unit volume $\varepsilon=\rho U$
instead of $U$, (\ref{eq:2.7}) may be written as:
\beqn
TdS=\frac{1}{\rho}\left(d\varepsilon-hd\rho\right)\quad\hbox{where}\quad
h=\frac{\varepsilon+p}{\rho}, \label{eq:2.8}
\eeqn
is the enthalpy of the gas. Assuming $\varepsilon=\varepsilon(\rho,S)$,
 (\ref{eq:2.8}) gives  the formulae:
\beqn
 \rho T=\varepsilon_S, \quad h=\varepsilon_\rho,
\quad p=\rho\varepsilon_{\rho}-\varepsilon, \label{eq:2.9}
\eeqn
relating the temperature $T$, enthalpy $h$ and pressure $p$ to the
internal energy density $\varepsilon(\rho,S)$. From (\ref{eq:2.8})
we obtain:
\beqn
TdS=dh-\frac{1}{\rho}dp\quad \hbox{and}\quad
-\frac{1}{\rho}\nabla p=T\nabla S-\nabla h, \label{eq:2.10}
\eeqn
which is useful in the further analysis of the momentum equation for the
system.

\subsection{Multi-Symplectic Example}
{\bf Multi-symplectic systems are a generalization of Hamiltonian systems.
Consider the equations of 1D gas dynamics involving the variables 
${\bf z}=(u,\rho,S,\beta,\phi)^T$ where 
\beqn
u=\deriv{\phi}{x}-\frac{\beta}{\rho} \deriv{S}{x}, \label{eq:2.1a}
\eeqn
is the Clebsch representation for the fluid velocity $u\equiv u^x$ directed 
along the $x$-axis. The governing equations have the form (e.g. 
\cite{Zakharov97}, 
 \cite{Morrison98}):
\begin{align}
&\rho_t+(\rho u)_x=0,\quad S_t+u S_x=0, \label{eq:2.2a}\\
&\phi_t+u\phi_x=\frac{1}{2} u^2-h, \quad \beta_t+(\beta u)_x=-\rho T, 
\label{eq:2.3a}
\end{align}
where $\varepsilon(\rho,S)$ is the the internal energy per unit volume of the gas. 
 Here $\rho$, $T$, $p$, $h$, $S$ denote the density, temperature, pressure, enthalpy 
and entropy 
of the gas,  
 where $h=\varepsilon_\rho$, $\rho T=\varepsilon_S$
and $p=\rho\varepsilon_\rho-\varepsilon$.  The time evolution of a functional 
$F$ of the physical variables satisfies Hamilton's equations:  
\beqn 
F_t=\left\{F,H\right\}=\int\ dx\biggl(\frac{\delta F}{\delta\rho}\frac{\delta H}{\delta\phi}
-\frac{\delta F}{\delta\phi}\frac{\delta H}{\delta\rho}
+\frac{\delta F}{\delta S}\frac{\delta H}{\delta\beta}
-\frac{\delta F}{\delta\beta}\frac{\delta H}{\delta S}\biggr), \label{eq:2.4a}
\eeqn
where $\left\{F,H\right\}$ is the canonical Poisson bracket and
\beqn 
H=\int dx\left(\frac{1}{2}\rho u^2+\varepsilon(\rho,S)\right) \label{eq:2.5a}
\eeqn 
is the Hamiltonian functional in which $u$ is given by (\ref{eq:2.1a}). 

Equations (\ref{eq:2.1a})-(\ref{eq:2.3a}) can  be written in the 
multi-symplectic form: 
\beqn
\left({\sf K}^0\derv{t}+{\sf K}^1\derv{x}\right){\bf z}=\frac{\delta M}{\delta {\bf z}}, 
\label{eq:2.6a}
\eeqn
where
\beqn
M=-\int\ dx\ \ell=-\int\ dx\left(\frac{1}{2}\rho u^2-\varepsilon(\rho,S)
\right), \label{eq:2.7a}
\eeqn
is the multi-symplectic Hamiltonian (note $\ell$ is the Lagrange density of the fluid). 
The skew symmetric matrices ${\sf K}^0$ and ${\sf K}^1$ are given by:
\beqn
{\sf K}^0=\left(\begin{array}{ccccc}
0&0&0&0&0\\
0&0&0&0&-1\\
0&0&0&-1&0\\
0&0&1&0&0\\
0&1&0&0&0\\
\end{array}
\right),\quad 
{\sf K}^1=\left(\begin{array}{ccccc}
0&0&\beta&0&-\rho\\
0&0&0&0&-u\\
-\beta&0&0&-u&0\\
0&0&u&0&0\\
\rho&u&0&0&0
\end{array}
\right).\label{eq:2.8a}
\eeqn
In the (\ref{eq:2.6a}) there are two skew symmetric matrices ${\sf K}^0$
and ${\sf K}^{1}$, which shows that both space and time can be thought of as 
evolution variables. 
\cite{Bridges92, Bridges97a, Bridges97b, Bridges06, Bridges10} 
has championed the use of multi-symplectic methods for 
nonlinear wave problems, both in numerical methods for solving nonlinear wave equations, 
wave action and wave stability problems, and Hamiltonian bifurcation theory
 (i.e wave stability theory depending on a bifurcation
parameter). \cite{Cotter07} develop multi-symplectic approaches to 
incompressible fluid dynamics, and other systems.

Below we illustrate the recipe for obtaining the multi-symplectic form 
(\ref{eq:2.6a})-(\ref{eq:2.8a}) based on the results of 
Section 4. 

The constrained Lagrangian associated with the Clebsch representation in the 
present example has the form:
\begin{equation}
L=\frac{1}{2}\rho u^2-\varepsilon(\rho,S)+L^\alpha_{z^s} \deriv{z^s}{x^\alpha}, 
\label{eq:2.9a}
\end{equation}
where the term:
\begin{equation}
L^\alpha_{z^s} \deriv{z^s}{x^\alpha}
=\phi\left[\deriv{\rho}{t}+\derv{x}(\rho u)\right] 
+\beta\left(\deriv{S}{t}+u \deriv{S}{x}\right), \label{eq:2.10a}
\end{equation}
contains the Lagrangian constraints associated with the mass 
continuity equation 
and the entropy advection equation, and ${\bf z}=(u,\rho,S,\beta,\phi)^T$. In 
(\ref{eq:2.10a}) we identify 
\begin{align}
&L^0_\rho=\phi,\quad L^1_\rho=\phi u,\quad L^1_u=\phi \rho, \nonumber\\
&L^0_S=\beta,\quad L^1_S=\beta u. \label{eq:2.11a}
\end{align}
The one-forms:
\begin{equation}
\omega^\alpha= L^\alpha_{z^s} dz^s, \quad (\alpha=0,1), \label{eq:2.12a}
\end{equation}
using (\ref{eq:2.11a}) are given by:
\begin{equation}
\omega^0=\phi d\rho+\beta dS,\quad \omega^1=u(\phi d\rho+\beta dS)+\phi\rho\ du. \label{eq:2.13a}
\end{equation}
The exterior derivatives of the 1-forms (\ref{eq:2.13a}) are:
\begin{align}
&d\omega^0=d\phi\wedge d\rho+d\beta\wedge dS\equiv 
\frac{1}{2} {\sf K}^0_{\alpha\beta} dz^\alpha\wedge dz^\beta, \nonumber\\
&d\omega^1=du\wedge\left(\beta dS-\rho d\phi\right)
+u(d\phi\wedge d\rho+d\beta\wedge dS)
\equiv \frac{1}{2} {\sf K}^1_{\alpha\beta} dz^\alpha\wedge dz^\beta. 
\label{eq:2.14a}
\end{align}
The multi-symplectic matrices ${\sf K}^0_{\alpha\beta}$ 
and ${\sf K}^1_{\alpha\beta}$ given in (\ref{eq:2.8a}) can be 
determined from (\ref{eq:2.14a}).

The multi-symplectic formalism can be used to obtain conservation laws, 
by using the properties of the differential forms $\omega^\alpha$ defining 
the system (see Section 4). Conservation laws can also be obtained by using 
the multi-symplectic version of Noether's theorem. 
In Appendix A, we show how the multi-symplectic approach gives rise to the 
energy and momentum conservation equations of 1D gas dynamics, namely:
\begin{align}
&G_0=\derv{t}\left[\frac{1}{2}\rho u^2+\varepsilon(\rho,S)\right] 
+\derv{x}\left[\rho u\left(\frac{1}{2}u^2+h\right)\right]=0, \label{eq:2.15a}\\
&G_1=-\left[\derv{t}\left(\rho u^2\right)
+\derv{x}\left(p+\rho u^2\right)\right]=0. 
\label{eq:2.16a}
\end{align}
The multi-symplectic approach also gives rise to the symplecticity 
or structural conservation laws. For the case of 1D gas dynamics, 
there is only one structural conservation law, namely:
\begin{equation}
\deriv{D}{t}+\deriv{F}{x}=0, \label{eq:2.17a}
\end{equation}
where
\begin{align}
D=&\frac{\partial(\phi,\rho)}{\partial(t,x)}
+\frac{\partial(\beta,S)}{\partial(t,x)}, \label{eq:2.18a}\\
F=&\rho\frac{\partial(\phi,u)}{\partial(t,x)}+u \frac{\partial(\phi,\rho)}{\partial(t,x)}
+\beta \frac{\partial(u,S)}{\partial(t,x)}
+u \frac{\partial(\beta,S)}{\partial(t,x)}\nonumber\\
\equiv&\frac{\partial(u\phi,\rho)}{\partial(t,x)}
+\frac{\partial(\rho\phi,u)}{\partial(t,x)}
+\frac{\partial(u\beta,S)}{\partial(t,x)}
, \label{eq:2.19a}
\end{align}
where $\partial(\phi,\psi)/\partial(t,x)=\phi_t\psi_x-\phi_x\psi_t$ is the 
Jacobian of $\phi$ and $\psi$ with respect to $t$ and $x$. 
The symplecticity conservation law (\ref{eq:2.17a}) 
corresponds to the conservation law:
\begin{equation}
D_x G_0-D_t G_1=0, \label{eq:2.20a}
\end{equation}
where $D_x\equiv \partial/\partial x$, $D_t\equiv\partial/\partial t$ 
and $G_0=0$ and $G_1=0$ are the energy and momentum conservation equations 
(\ref{eq:2.15a}) and (\ref{eq:2.16a}) written in terms of the Clebsch potentials. 

The main point here, is that the multi-symplectic structure is 
determined by the fundamental one-forms $\omega^\alpha$ ($\alpha=0,1$) 
and the Hamiltonian functional $M$. 
The theory for this is outlined in Sections 3 and 4, and 
is used to determine the multi-symplectic structure 
of the MHD equations in Section 5. We give the generalization of the 
symplecticity conservation law in the general case for MHD and gas dynamics 
in Section 5.}  
\section{Hamiltonian Approach and Clebsch Variables}
In this section we first give a synopsis of the Clebsch variational principle 
and the inverse map, discussed in more detail in \cite{Cotter07}, 
who show how Clebsch type variational principles give rise to the 
momentum map (Section 3.1).

In Section 3.2 we describe a constrained variational 
principle for MHD using Lagrange multipliers to enforce the constraints 
of mass conservation; the entropy advection equation; Faraday's 
equation and the so-called Lin constraint describing in part, the vorticity
of the flow (i.e. Kelvin's theorem). This leads to Hamilton's canonical 
equations in terms of Clebsch potentials (\cite{Zakharov97}, 
\cite{Morrison98}). {\bf \cite{Morrison80,Morrison82a}, 
and \cite{Morrison82} used the Clebsch variable formulation of MHD 
to derive the non-canonical Poisson bracket for MHD, by transforming 
the variational derivatives with respect to the Clebsch variables
to their corresponding form in terms of Eulerian physical variables. 
 Taking the variational derivative of the action with respect to 
the fluid velocity ${\bf u}$,  the Clebsch 
 variational principle gives a representation for the 
fluid velocity ${\bf u}$ in terms of the Clebsch potentials.}

  In Section 3.3 we transform the canonical 
Poisson bracket obtained from the Clebsch variable approach to a 
non-canonical Poisson bracket written in terms of Eulerian 
physical variables (see e.g. \cite{Morrison80,Morrison82a}, 
\cite{Morrison82},  
and \cite{Holm83a, Holm83b}).  
We obtain the non-canonical Poisson brackets for MHD using the variables
$({\bf M}, {\bf B}, \rho,\sigma)$ where ${\bf M}=\rho {\bf u}$ is the 
MHD momentum flux, $\sigma=\rho S$  and ${\bf B}$ is the magnetic induction. 
We also use the non-canonical variables $({\bf M}, {\bf A}, \rho,\sigma)$
where ${\bf A}$ is the magnetic vector potential in which the gauge is chosen 
so that the 1-form $\alpha={\bf A}{\bf\cdot} d{\bf x}$ is an invariant 
advected with the flow. 

\subsection{Clebsch Variables and the Momentum Map}
The Clebsch variational principle using the inverse map (i.e. 
Lagrangian map) involves the variational principle $\delta{\cal A}=0$ 
where
\beqn
{\cal A}=\int \ell [{\bf u}]\ d^3x\ dt+\int\ \boldsymbol{\pi}{\bf\cdot} 
\left({\bf l}_t+{\bf u}{\bf\cdot}\nabla{\bf l}\right)\ d^3x\ dt. 
\label{eq:mom1}
\eeqn
In (\ref{eq:mom1}) $\boldsymbol{\pi}$ is a Lagrange multiplier, 
ensuring that the Lagrange label $\bf{l}$ is advected with the fluid. 
In the present section we consider only the generic form of the 
Clebsch variational principle. More specific versions of the variational 
principle for MHD are discussed in later sections. 

The stationary point conditions for the variational functional (\ref{eq:mom1}) 
are:
\begin{align}
\frac{\delta{\cal A}} {\delta{\bf u}}=&\frac{\delta\ell}{\delta{\bf u}}
+\left(\nabla {\bf l}\right)^T{\bf\cdot}\boldsymbol{\pi}=0, \label{eq:mom2}\\
\frac{\delta{\cal A}}{\delta\boldsymbol{\pi}}
=&{\bf l}_t+{\bf u}{\bf\cdot}\nabla{\bf l}=0, \label{eq:mom3}\\
\frac{\delta{\cal A}}{\delta{\bf l}}=&-\left[\deriv{\boldsymbol{\pi}}{t}
+\nabla{\bf\cdot}({\bf u}\boldsymbol{\pi})\right]=0. \label{eq:mom4}
\end{align}
In ideal MHD 
\beqn
\ell=\frac{1}{2}\rho |{\bf u}|^2-\varepsilon (\rho, S)-\frac{B^2}{2\mu_0}\quad 
\hbox{and}\quad \frac{\delta \ell}{\delta{\bf u}}
=\rho {\bf u}\equiv {\bf m}, \label{eq:mom5}
\eeqn
where ${\bf m}$ is the fluid momentum density or mass flux (see Section 3.2 
for more details).

\subsubsection{Clebsch Momentum Map}
 The {\em momentum map} ${\bf J}: T^*Q\to \mathfrak{g}^*$ 
from the cotangent bundle $T^*Q$  of the configuration manifold $Q$ to the 
dual $\mathfrak{g}^*$ of the Lie algebra $\mathfrak{g}$ of Lie group 
$G$ that acts on $Q$ defines a momentum map by the formula:
\beqn
{\bf J}(\nu_q){\bf\cdot}\boldsymbol{\xi}
=\langle\nu_q,\boldsymbol{\xi}_Q(q)\rangle, \label{eq:mom6}
\eeqn
where $\nu_q\in T^*Q$ and $\boldsymbol{\xi}\in \mathfrak{g}$. Here
$\boldsymbol{\xi}_Q$ is the infinitesimal generator of the Lie algebra element 
$\boldsymbol{\xi}$ action of $\boldsymbol{\xi}$ on $Q$ 
and $\langle\nu_q,\boldsymbol{\xi}_Q(q)\rangle$ is the pairing of 
an element of $T^*Q$ with an element of $TQ$.

For the case (\ref{eq:mom1})-(\ref{eq:mom4}) 
the elements of $Q$ are the fluid labels ${\bf l}$ and the elements of 
$T^*Q$ are the conjugate pairs $({\bf l},\boldsymbol{\pi})$ of labels 
${\bf l}$ and their conjugate momenta $\boldsymbol{\pi}$. 

\begin{proposition}
The Clebsch relation (\ref{eq:mom2}) defines a right action ${\rm Diff}(\Omega)$ of 
diffeomorphisms on the domain $\Omega$ on the back to labels map ${\bf l}$.
\end{proposition}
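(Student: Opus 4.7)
The plan is to verify the two defining properties of a right action (identity and associativity) in our setting, and then show that the Clebsch relation \eqref{eq:mom2} is precisely the momentum-map formula \eqref{eq:mom6} for this action. First, I would make explicit the data: the configuration manifold $Q$ consists of back-to-labels maps $\mathbf{l}:\Omega\to\Omega_0$ (where $\Omega_0$ is the reference label space), the Lie group $G={\rm Diff}(\Omega)$ acts on $\Omega$, and its Lie algebra $\mathfrak{g}$ consists of vector fields $\mathbf{u}$ on $\Omega$. The cotangent bundle element is the conjugate pair $(\mathbf{l},\boldsymbol{\pi})$, where $\boldsymbol{\pi}$ is the momentum density dual to $\mathbf{l}$ under the $L^2$ pairing $\langle\boldsymbol{\pi},\delta\mathbf{l}\rangle=\int_\Omega \boldsymbol{\pi}\boldsymbol{\cdot}\delta\mathbf{l}\,d^3x$.

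Next I would define the candidate right action by pullback, $\mathbf{l}\cdot\eta \mathrel{:=}\mathbf{l}\circ\eta$ for $\eta\in{\rm Diff}(\Omega)$. The two axioms for a right action follow from associativity of composition: $\mathbf{l}\circ\mathrm{id}_\Omega=\mathbf{l}$ and $(\mathbf{l}\circ\eta_1)\circ\eta_2=\mathbf{l}\circ(\eta_1\circ\eta_2)$, which is the correct ordering to give a right (rather than left) action. This part is essentially formal.

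The substantive step is identifying \eqref{eq:mom2} with the momentum map. I would compute the infinitesimal generator of $\mathbf{u}\in\mathfrak{g}$ on $Q$: if $\eta_\epsilon$ is the flow of $\mathbf{u}$ with $\eta_0=\mathrm{id}$, then
\begin{equation*}
\mathbf{u}_Q(\mathbf{l}) \;=\; \left.\frac{d}{d\epsilon}\right|_{\epsilon=0}\mathbf{l}\circ\eta_\epsilon \;=\; \mathbf{u}\boldsymbol{\cdot}\nabla\mathbf{l},
\end{equation*}
which matches the Lie-dragging term in \eqref{eq:mom3} and is what is paired with $\boldsymbol{\pi}$ in the variational principle \eqref{eq:mom1}. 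Applying \eqref{eq:mom6} to $\nu_q=(\mathbf{l},\boldsymbol{\pi})$ gives
\begin{equation*}
\langle \mathbf{J}(\mathbf{l},\boldsymbol{\pi}),\mathbf{u}\rangle \;=\; \langle\boldsymbol{\pi},\mathbf{u}_Q(\mathbf{l})\rangle \;=\; \int_\Omega \boldsymbol{\pi}\boldsymbol{\cdot}(\mathbf{u}\boldsymbol{\cdot}\nabla\mathbf{l})\,d^3x \;=\; \int_\Omega \mathbf{u}\boldsymbol{\cdot}\bigl((\nabla\mathbf{l})^{T}\boldsymbol{\cdot}\boldsymbol{\pi}\bigr)\,d^3x,
\end{equation*}
so that $\mathbf{J}(\mathbf{l},\boldsymbol{\pi})=(\nabla\mathbf{l})^{T}\boldsymbol{\cdot}\boldsymbol{\pi}$ in the dual of $\mathfrak{g}$ (one-form densities on $\Omega$). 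Comparing with \eqref{eq:mom2}, this is exactly $-\delta\ell/\delta\mathbf{u}=-\mathbf{m}$, identifying the Eulerian fluid momentum with the momentum map of the Clebsch right action.

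The main obstacle I anticipate is the careful bookkeeping of the pairing and of the handedness of the action: one must verify that pullback by $\eta$ (as opposed to push-forward by $\eta^{-1}$) genuinely gives a right action consistent with the sign conventions in \eqref{eq:mom3}--\eqref{eq:mom4}, and that the integration-by-parts used to move $\mathbf{u}$ out of $\mathbf{u}\boldsymbol{\cdot}\nabla\mathbf{l}$ produces no boundary contributions (either $\Omega$ is closed or suitable decay conditions are assumed). Once these identifications are in place, the momentum-map interpretation of the Clebsch constraint drops out, and the right-action property of ${\rm Diff}(\Omega)$ on the back-to-labels map is established.
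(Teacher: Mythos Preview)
Your proposal is correct and follows essentially the same route as the paper: define the right action by composition $\mathbf{l}\cdot\eta=\mathbf{l}\circ\eta$, compute the infinitesimal generator (your $\mathbf{u}\cdot\nabla\mathbf{l}$ is the paper's $T\mathbf{l}\circ X$), and then verify the momentum-map pairing $\langle\mathbf{J}(\mathbf{l},\boldsymbol{\pi}),\mathbf{u}\rangle=\langle\boldsymbol{\pi},\mathbf{u}_Q(\mathbf{l})\rangle$, which reproduces \eqref{eq:mom2} up to the sign convention. You add an explicit check of the group-action axioms, which the paper leaves implicit.

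One small correction: in passing from $\int\boldsymbol{\pi}\cdot(\mathbf{u}\cdot\nabla\mathbf{l})\,d^3x$ to $\int\mathbf{u}\cdot\bigl((\nabla\mathbf{l})^T\cdot\boldsymbol{\pi}\bigr)\,d^3x$ there is no integration by parts at all, only the pointwise index identity $\pi_k\,u^s\,\partial_s l^k = u^s\,(\partial_s l^k)\,\pi_k$. So the boundary-term worry you flag does not actually arise in this step.
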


\begin{proof}
Equation (\ref{eq:mom2}) defines a map 
$J_\Omega: T^*Q\to \mathfrak{X}
^*(\Omega)$ from the cotangent bundle $T^*Q$ 
to the dual $\mathfrak{X}^*(\Omega)$ of vector fields 
on $\Omega$:
\beqn
J_{\Omega}:\quad {\bf m}{\bf\cdot}d {\bf x}
=-\left(\left(\nabla{\bf l}\right)^T{\bf\cdot}\boldsymbol{\pi}\right)
{\bf\cdot}d{\bf x}=-\boldsymbol{\pi}{\bf\cdot}d{\bf l}\quad \hbox{where}\quad {\bf m}=\frac{\delta\ell}{\delta{\bf u}}. 
 \label{eq:mom7}
\eeqn
Thus, ${\bf J}_{\Omega}$ maps the conjugate pairs 
$({\bf l},\boldsymbol{\pi})$ to the space of 1-form densities ${\bf m}\in 
\mathfrak{X}^*(\Omega)$. 
The map $J_\Omega$ may be associated with 
the smooth, invertible maps or diffeomorphisms $\boldsymbol{\eta}$ of the 
back-to-labels maps ${\bf l}$ by composition of functions 
${\rm Diff}(\Omega):\ {\bf l}{\bf\cdot}\boldsymbol{\eta}
={\bf l}\circ\boldsymbol{\eta}$. The effect of the map 
on the infinitesimal generators $X_\Omega({\bf l})$ is formally defined as:
\beqn
X_\Omega({\bf l}):=\frac{d}{ds}({\bf l}\circ\boldsymbol{\eta})(s)|_{s=0}
=T{\bf l}\circ X, 
\label{eq:mom8}
\eeqn
where the differentiation with respect to $s$ is about the 
identity transformation at $s=0$. 
Equation (\ref{eq:mom8}) in component form is:
\begin{align}
X_\Omega({\bf l})^i\partial_{l^i}
=&\frac{d}{ds}{\bf l}[\boldsymbol{\eta}(s)]^i\partial_{l^i} 
=\deriv{l^i}{\eta^k} \frac{d\eta^k}{ds}\partial_{l^i}
=\frac{d\eta^k}{ds}\left(\deriv{l^i}{\eta^k}\derv{l^i}\right)
=\frac{d\eta^k}{ds}\derv{\eta^k}\nonumber\\
=&\frac{d\eta^k}{ds}\deriv{x^\mu}{\eta^k}\derv{x^\mu}=\frac{dx^\mu}{ds}\derv{x^\mu}
\equiv \left(T{\bf l}\circ X\right)^i\partial_{x^i}. 
\label{eq:mom9}
\end{align}
From (\ref{eq:mom9}) it follows by the chain rule for differentiation that:
\beqn
X=\frac{dx^\mu}{ds}\derv{x^\mu}=\frac{dl^i}{ds}\derv{l^i}
=\frac{d\eta^k}{ds}\derv{\eta^k}, 
\label{eq:mom10}
\eeqn
are equivalent forms for $X$.

The pairing between the map ${\bf J}_\Omega$ and the vector field 
$X\in \mathfrak{X}(\Omega)$ gives:
\begin{align}
\langle {\bf J}_\Omega\left({\bf l},\boldsymbol{\pi}\right),X\rangle=&
-\langle\boldsymbol{\pi}{\bf\cdot}d{\bf l},X\rangle
=-\langle\pi_k dl^k,X^j\partial_{x^j}\rangle=-\langle\pi_k l^k_{,s} dx^s, 
X^j\partial_{x^j}\rangle\nonumber\\
=&-\int \pi_kl^k_{,s}X^s\ d^3x=-\int\pi_k\left(X^sl^k_{,s}\right)\ d^3x
\nonumber\\ 
=&-\int\pi_k X\left(l^k\right)\ d^3x=-\int\pi_k\left(T{\bf l}\circ X\right)^k 
\ d^3x\nonumber\\
=&-\int\left({\bf l},\boldsymbol{\pi}\right)_k \left(T{\bf l}\circ X\right)^k 
\ d^3x
=-\langle\left({\bf l},\boldsymbol{\pi}\right),
X_\Omega\left({\bf l}\right)\rangle, \label{eq:mom11}
\end{align}
Thus,
\beqn
\langle {\bf J}_\Omega\left({\bf l},\boldsymbol{\pi}\right),X\rangle=
-\langle\left({\bf l},\boldsymbol{\pi}\right),
X_\Omega\left({\bf l}\right)\rangle, \label{eq:mom12}
\eeqn
which is equivalent to the defining relation (\ref{eq:mom6}) (the sign 
in (\ref{eq:mom6}) can be negative in the definition) for a momentum map.
\end{proof}    

\subsection{Clebsch variables and Hamilton's Equations}

Consider the MHD action (modified by constraints):
\beqn
J=\int\ d^3x\ dt  L,  \label{eq:Clebsch1}
\eeqn
where
\beqnar
L=&&\left\{\frac{1}{2}\rho u^2-\epsilon(\rho, S)-\frac{B^2}{2\mu_0}\right\}
+\phi\left(\deriv{\rho}{t}+\nabla{\bf\cdot}(\rho {\bf u})\right)\nonumber\\
&&+\beta\left(\deriv{S}{t}+{\bf u}{\bf\cdot}\nabla S\right)
+\lambda\left(\deriv{\mu}{t}+{\bf u\cdot}\nabla\mu\right) \nonumber\\
&&+\boldsymbol{\Gamma}{\bf\cdot}\left(\deriv{\bf B}{t}-\nabla\times({\bf u}\times{\bf B})
+{\bf u}(\nabla{\bf\cdot B})\right). \label{eq:Clebsch2}
\eeqnar
The Lagrangian in curly brackets equals the kinetic minus
the potential energy (internal thermodynamic energy plus magnetic energy).
The Lagrange multipliers $\phi$, $\beta$, $\lambda$, 
and $\boldsymbol{\Gamma}$ ensure that the 
mass, entropy, Lin constraint, Faraday equations are satisfied. We do not 
enforce $\nabla{\bf\cdot}{\bf B}= 0$, since we are interested in the 
effect of $\nabla{\bf\cdot}{\bf B}\neq 0$ (which is useful for numerical 
MHD where $\nabla{\bf\cdot}{\bf B}\neq 0$). {\bf It is straightforward 
to impose $\nabla{\bf\cdot}{\bf B}\neq 0$ if desired, 
and non-canonical Poisson brackets exist for this case as well (see 
(\ref{eq:H9}) in this paper, and \cite{Morrison82a}). Noncanonical Poisson 
brackets exist for this case as well as for the case 
$\nabla{\bf\cdot}{\bf B}=0$ 
(e.g. \cite{Morrison82a}, \cite{Chandre13}).}    

 Stationary point conditions for the action are $\delta J=0$.
 $\delta J/\delta {\bf u}=0$ gives the Clebsch representation
for ${\bf u}$:
\beqn
{\bf u}=\nabla\phi-\frac{\beta}{\rho}\nabla S-\frac{\lambda}{\rho}\nabla\mu+
{\bf u}_M\label{eq:Clebsch3}
\eeqn
where
\beqn
 {\bf u}_M=-\frac{(\nabla\times\boldsymbol{\Gamma})\times{\bf B}}{\rho}
-\boldsymbol{\Gamma}\frac{\nabla{\bf\cdot B}}{\rho}, \label{eq:Clebsch4}
\eeqn
is magnetic contribution to ${\bf u}$.
 Setting $\delta J/\delta\phi$, $\delta J/\delta\beta$,
$\delta J/\delta \lambda$, $\delta J/\delta\boldsymbol{\Gamma}$ consecutively 
equal to zero gives the mass, entropy advection, Lin constraint, 
and Faraday (magnetic flux conservation) constraint
equations:
\beqnar
&&\rho_t+\nabla{\bf\cdot}(\rho {\bf u})=0,\nonumber\\
&&S_t+{\bf u}{\bf\cdot}\nabla S=0,\nonumber\\
&&\mu_t+{\bf u\cdot}\nabla\mu=0, \nonumber\\
&&{\bf B}_t-\nabla\times({\bf u}\times{\bf B})+{\bf u}(\nabla{\bf\cdot B})=0.
\label{eq:Clebsch5}
\eeqnar

 Setting $\delta J/\delta\rho$, $\delta J/\delta S$, $\delta J/\delta\mu$,
$\delta J/\delta {\bf B}$ equal to zero gives evolution equations 
for the Clebsch potentials $\phi$, $\beta$, $\lambda$ and $\boldsymbol{\Gamma}$
as:
\beqnar
&&-\left(\deriv{\phi}{t}+{\bf u}{\bf\cdot}\nabla\phi\right)
+\frac{1}{2} u^2-h=0, \label{eq:Clebsch6}\\
&&\deriv{\beta}{t}+\nabla{\bf\cdot}(\beta {\bf u})+\rho T=0, 
\label{eq:Clebsch7}\\
&&\deriv{\lambda}{t}+\nabla{\bf\cdot}(\lambda {\bf u})=0, 
\label{eq:Clebsch8}\\
&&\deriv{\boldsymbol{\Gamma}}{t}
-{\bf u}\times (\nabla\times\boldsymbol{\Gamma})
+\nabla(\boldsymbol{\Gamma}{\bf\cdot u})+\frac{\bf B}{\mu_0}=0. 
\label{eq:Clebsch9}
\eeqnar
Equation (\ref{eq:Clebsch6}) is related to Bernoulli's equation for potential 
flow.The $\nabla(\boldsymbol{\Gamma}{\bf\cdot u})$ term 
in (\ref{eq:Clebsch9}) is associated with
$\nabla{\bf\cdot}{\bf B}\neq 0$.
 Taking the curl of (\ref{eq:Clebsch9})  gives:
\beqn
\deriv{\tilde{\boldsymbol{\Gamma}}}{t}
-\nabla\times({\bf u}\times {\tilde{\boldsymbol{\Gamma}}})
=-\frac{\nabla\times{\bf B}}{\mu_0} \quad\hbox{where}
\quad \tilde{\boldsymbol{\Gamma}}=\nabla\times\boldsymbol{\Gamma}.  \label{eq:Clebsch10}
\eeqn

 Equations (\ref{eq:Clebsch6})-(\ref{eq:Clebsch10}) can be written in the form:
\beqnar
&&\frac{d\phi}{dt}=\frac{1}{2}u^2-h,\quad 
\frac{d}{dt}\left(\frac{\beta}{\rho}\right)=-T, \nonumber\\
&&\frac{d}{dt}\left(\lambda d^3x\right)=0\quad \hbox{or}
\quad \frac{d}{dt}\left(\frac{\lambda}{\rho}\right)=0, \nonumber\\
&&\frac{d}{dt}(\boldsymbol{\Gamma}{\bf\cdot}d{\bf x})
=-\frac{{\bf B}{\bf\cdot}d{\bf x}}{\mu_0},\quad
\frac{d}{dt}(\tilde{\boldsymbol{\Gamma}}{\bf\cdot}d{\bf S})
=-{\bf J}{\bf\cdot}d{\bf S}. \label{eq:Clebsch11}
\eeqnar
where $d/dt=\partial/\partial t+{\bf u}{\bf\cdot}\nabla$,
is the Lagrangian time
derivative following the flow and ${\bf J}=\nabla\times{\bf B}/\mu_0$ 
is the current. $\boldsymbol{\Gamma}{\bf\cdot}d{\bf x}$ is a 1-form and 
$\tilde{\boldsymbol\Gamma}{\bf\cdot}d{\bf S}$ is a 2-form 
and $d{\bf S}$ is an area element.  

 Introduce the Hamiltonian functional:
\beqn
{\cal H}=\int H d^3x\quad\hbox{where}\quad H=\frac{1}{2}\rho u^2+\epsilon(\rho,S)+\frac{B^2}{2\mu_0}. \label{eq:H1}
\eeqn
 Substitute the Clebsch
expansion (\ref{eq:Clebsch3})-(\ref{eq:Clebsch4}) for ${\bf u}$
in (\ref{eq:H1}). Evaluating the  variational derivatives
of ${\cal H}$ gives Hamilton's equations:
\begin{align}
&\deriv{\rho}{t}=\frac{\delta{\cal H}}{\delta \phi},
\quad \deriv{\phi}{t}=-\frac{\delta {\cal H}}{\delta\rho}, \quad
\deriv{S}{t}=\frac{\delta{\cal H}}{\delta\beta},\quad
\deriv{\beta}{t}=-\frac{\delta{\cal H}}{\delta S}, \nonumber\\
&\deriv{\mu}{t}=\frac{\delta{\cal H}}{\delta{\lambda}}, \quad
\deriv{\lambda}{t}=-\frac{\delta{\cal H}}{\delta\mu}, \quad
\deriv{\bf B}{t}=\frac{\delta{\cal H}}{\delta\boldsymbol{\Gamma}}, \quad
\deriv{\boldsymbol{\Gamma}}{t}=-\frac{\delta{\cal H}}{\delta{\bf B}}. 
\label{eq:H3} 
\end{align}
Here  $\{\rho,\phi\}$, $\{S,\beta\}$,
$\{\mu,\lambda\}$, $\{{\bf B},\boldsymbol{\Gamma}\}$
are canonically conjugate variables.

The canonical Poisson bracket is:
\begin{align}
\{F,G\}=&\int d^3x\ \biggl(\frac{\delta F}{\delta\rho}
\frac{\delta G}{\delta \phi}-\frac{\delta F}{\delta \phi}
\frac{\delta G}{\delta\rho}
+\frac{\delta F}{\delta{\bf B}}
{\bf\cdot}\frac{\delta G}{\delta \boldsymbol{\Gamma}}
-\frac{\delta F}{\delta \boldsymbol{\Gamma}}
{\bf\cdot}\frac{\delta G}{\delta{\bf B}}\nonumber\\
&+\frac{\delta F}{\delta S}
\frac{\delta G}{\delta\beta}-\frac{\delta F}{\delta\beta}
\frac{\delta G}{\delta S}
+\frac{\delta F}{\delta\mu}
\frac{\delta G}{\delta\lambda}-\frac{\delta F}{\delta\lambda}
\frac{\delta G}{\delta\mu}\biggr).
 \label{eq:H4}
\end{align}
In terms of the Poisson bracket (\ref{eq:H4}) the time evolution of a functional $F$ of the canonical or physical variables is given by $F_t=\{F,H\}$ 
where $H$ is the Hamiltonian of the system. 
The canonical Poisson bracket (\ref{eq:H4})
satisfies the linearity, skew symmetry and Jacobi identity necessary
for a Hamiltonian system (i.e. the Poisson bracket defines a Lie algebra).
\begin{remark}
\ \cite{Cotter07} derive the Clebsch variable equations analogous 
to (\ref{eq:Clebsch3})-(\ref{eq:Clebsch10}), by using an advected or Lie dragging 
formalism, in which the advected quantity $a$ satisfies the Lie dragging equation:
\beqn
\left(\derv{t}+{\cal L}_{\bf u}\right) a=0, \label{eq:liedrag1}
\eeqn
where ${\cal L}_{\bf u}$ is the Lie derivative of $a$ with respect to the vector 
field ${\bf u}$. Examples of Lie dragged quantities  are $\omega^0=S$ (a scalar  
or $0$-form), $\omega^2={\bf B}{\bf\cdot}d{\bf S}$ is the Faraday 2-form, 
and $\omega^3=\rho d^3x$ is the mass 3-form (see also \cite{Webb14a} for further 
description of Lie dragged invariants in MHD). They also introduce fluid labels ${\bf l}^A$ 
which are advected with the flow, which are useful in specifying the initial state 
of the fluid, and their canonically conjugate momenta ${\bf\pi}_A$ which are the Lagrange multipliers
for ${\bf l}^A$ in the variational principle. They show how the Clebsch 
variational equations, the momentum map (Clebsch variational 
equation obtained by varying ${\bf u}$)  
can be combined to yield the Euler-Poincar\'e or Eulerian momentum equation for the system.  
\end{remark}
\subsection{Non-Canonical Poisson Brackets}
 \cite{Morrison80,Morrison82a} introduced non-canonical 
Poisson brackets for MHD. {\bf \cite{Morrison80} gave the 
non-canonical Poisson bracket for MHD for the case 
$\nabla{\bf\cdot}{\bf B}=0$. 
\cite{Morrison82a} gave the form of the Poisson bracket for 
$\nabla{\bf\cdot}{\bf B}\neq 0$.
A detailed discussion of the non-canonical Poisson bracket and the Jacobi 
identity is given by \cite{Morrison82}. \cite{Holm83a,Holm83b} point out
that their Poisson bracket has the form expected for a semi-direct product 
Lie algebra, for which the Jacobi identity is automatically satisfied. 
\cite{Chandre13} use Dirac's theory of constraints to derive properties 
of the Poisson bracket for the case $\nabla{\bf\cdot}{\bf B}=0$.}

Introduce the new variables:
\beqn
{\bf M}=\rho {\bf u}, \quad \sigma=\rho S, \label{eq:H5}
\eeqn 
noting that
\beqn
{\bf M}=\rho {\bf u}=\rho\nabla\phi-\beta\nabla S-\lambda\nabla\mu
+{\bf B}{\bf\cdot}(\nabla\boldsymbol{\Gamma})^T
-{\bf B}{\bf\cdot}\nabla\boldsymbol{\Gamma}
-\boldsymbol{\Gamma}(\nabla{\bf\cdot}{\bf B}), \label{eq:H7}
\eeqn
and transforming the canonical Poisson bracket (\ref{eq:H4}) from 
the old variables $(\rho,\phi,S,\beta,{\bf B},\boldsymbol{\Gamma})$ 
to the 
new variables $(\rho,\sigma,{\bf B},{\bf M})$  
we obtain the \cite{Morrison82a} non-canonical  
Poisson bracket:
\begin{align}
\left\{F,G\right\}=&-\int\ d^3x \biggl\{ \rho
\left[\frac{\delta F}{\delta{\bf M}}
{\bf\cdot}\nabla\left(\frac{\delta G}{\delta\rho}\right)
-\frac{\delta G}{\delta{\bf M}}
{\bf\cdot}\nabla\left(\frac{\delta F}{\delta\rho}\right)\right]\nonumber\\
&+\sigma\left[\frac{\delta F}{\delta{\bf M}}
{\bf\cdot}\nabla\left(\frac{\delta G}{\delta\sigma}\right)
-\frac{\delta G}{\delta{\bf M}}
{\bf\cdot}\nabla\left(\frac{\delta F}{\delta\sigma}\right)\right]\nonumber\\
&+{\bf M}{\bf\cdot}\left[\left(\frac{\delta F}{\delta{\bf M}}
{\bf\cdot}\nabla\right)\frac{\delta G}{\delta{\bf M}}
-\left(\frac{\delta G}{\delta{\bf M}}
{\bf\cdot}\nabla\right)\frac{\delta F}{\delta{\bf M}}\right]\nonumber\\
&+{\bf B}{\bf\cdot}\left[\frac{\delta F}{\delta{\bf M}}
{\bf\cdot}\nabla\left(\frac{\delta G}{\delta{\bf B}}\right)
-\frac{\delta G}{\delta{\bf M}}
{\bf\cdot}\nabla\left(\frac{\delta F}{\delta{\bf B}}\right)\right]\nonumber\\
&+{\bf B}{\bf\cdot}\left[
\left(\nabla\frac{\delta F}{\delta {\bf M}}\right){\bf\cdot}
\frac{\delta G}{\delta{\bf B}}
-\left(\nabla\frac{\delta G}{\delta{\bf M}}\right){\bf\cdot}
\frac{\delta F}{\delta{\bf B}}\right]
\biggr\}. \label{eq:H9}
\end{align}
The bracket (\ref{eq:H9}) has the Lie-Poisson form and 
satisfies the Jacobi identity for all functionals 
$F$ and $G$ of the physical variables, and in general applies both for 
$\nabla{\bf\cdot}{\bf B}\neq 0$  and $\nabla{\bf\cdot}{\bf B}=0$.

\subsubsection{Advected ${\bf A}$ Formulation}
Consider the MHD variational principle using the magnetic vector 
potential ${\bf A}$ instead of using ${\bf B}$ (e.g. 
\cite{Holm83a, Holm83b}). 
The condition that the magnetic 
flux ${\bf B}{\bf\cdot}d{\bf S}$ is Lie dragged with the flow (i.e. Faraday's 
equation) as a constraint equation, is satisfied if 
 the magnetic 
vector potential 1-form $\alpha={\bf A}{\bf\cdot}d{\bf x}$ is Lie dragged by 
the flow, where 
${\bf B}=\nabla\times{\bf A}$. 
The condition
that the one-form $\alpha={\bf A}{\bf\cdot}d{\bf x}$ is Lie dragged 
with the flow implies: 
\beqn
\deriv{\bf A}{t}-{\bf u}\times(\nabla\times{\bf A})+\nabla({\bf u\cdot A})=0 
\label{eq:can1}
\eeqn
{\bf (see \cite{Gordin87}, \cite{Padhye96a,Padhye96b}, \cite{Webb14a})}. 
The condition that the magnetic flux 
$\beta=d\alpha={\bf B}{\bf\cdot}d{\bf S}$ is Lie dragged with the flow 
implies Faraday's equation:
\beqn
\deriv{\bf B}{t}-\nabla\times({\bf u}\times{\bf B})
+{\bf u}\nabla{\bf\cdot}{\bf B}=0 \label{eq:can1a}
\eeqn
for the magnetic induction ${\bf B}$. Note that the curl of  (\ref{eq:can1}) 
with ${\bf B}=\nabla\times {\bf A}$ gives Faraday's equation (\ref{eq:can1a}) 
where $\nabla{\bf\cdot}{\bf B}=0$. 

 We use the variational 
principle $\delta {\cal A}=0$ where the action ${\cal A}$ is given by:
\begin{align}
{\cal A}=&\int_Vd^3x\int dt\biggl\{\left[\frac{1}{2}\rho |{\bf u}|^2 
-\varepsilon(\rho,S)-\frac{|\nabla\times{\bf A}|^2}{2\mu_0}\right]\nonumber\\
&+\phi\left(\deriv{\rho}{t}+\nabla{\bf\cdot}(\rho{\bf u})\right) 
+\beta \left(\deriv{S}{t}+{\bf u}{\bf\cdot}\nabla S\right)
+\lambda \left(\deriv{\mu}{t}+{\bf u}{\bf\cdot}\nabla \mu\right)\nonumber\\
&+\boldsymbol{\gamma}{\bf\cdot}
\left[\deriv{\bf A}{t}-{\bf u}\times(\nabla\times{\bf A})
+\nabla({\bf u\cdot A})\right]\biggr\}. \label{eq:can2}
\end{align}

By setting the variational derivative $\delta{\cal A}/\delta {\bf u}=0$ gives 
the Clebsch variable expansion:
\beqn
{\bf u}=\nabla\phi-\frac{\beta}{\rho}\nabla S
-\frac{\lambda}{\rho} \nabla\mu-\frac{\boldsymbol{\gamma}\times
(\nabla\times{\bf A})}{\rho}+\frac{\nabla{\bf\cdot}\boldsymbol{\gamma}}{\rho}
{\bf A}, \label{eq:can3}
\eeqn
for the fluid velocity ${\bf u}$. 

Setting the variational derivatives $\delta{\cal A}/\delta\phi$, 
$\delta{\cal A}/\delta\beta$, $\delta{\cal A}/\delta\lambda$, 
$\delta{\cal A}/\delta\boldsymbol{\gamma}$ 
equal to zero gives the constraint equations:
\begin{align}
&\deriv{\rho}{t}+\nabla{\bf\cdot}(\rho {\bf u})=0,\quad \deriv{S}{t}
+{\bf u}{\bf\cdot}\nabla S=0, \nonumber\\
&\deriv{\mu}{t}+{\bf u}{\bf\cdot}\nabla\mu=0,\nonumber\\
&\deriv{\bf A}{t}-{\bf u}\times(\nabla\times{\bf A})+\nabla({\bf u\cdot A})=0.
\label{eq:can4}
\end{align}
Similarly setting $\delta{\cal A}/\delta\rho$, $\delta{\cal A}/\delta S$, 
$\delta{\cal A}/\delta \mu$ 
and $\delta{\cal A}/\delta {\bf A}$ equal to zero gives the equations:
\begin{align}
&\deriv{\phi}{t}+{\bf u}{\bf\cdot}\nabla\phi+h-\frac{1}{2}|{\bf u}|^2=0, 
\nonumber\\
&\deriv{\beta}{t}+\nabla{\bf\cdot}(\beta {\bf u})+\rho T=0, 
\quad \deriv{\lambda}{t}+\nabla{\bf\cdot}(\lambda {\bf u})=0, 
\nonumber\\
&\deriv{\boldsymbol{\gamma}}{t}
-\nabla\times({\bf u}\times\boldsymbol{\gamma})
+{\bf u}(\nabla{\bf\cdot}\boldsymbol{\gamma}) 
+\frac{\nabla\times{\bf B}}{\mu}=0. \label{eq:can5}
\end{align}
The Euler-Lagrange equations (\ref{eq:can3})-(\ref{eq:can5}) together imply 
Hamilton's equations:
\begin{align}
&\deriv{\rho}{t}=\frac{\delta{\cal H}}{\delta \phi},
\quad \deriv{\phi}{t}=-\frac{\delta {\cal H}}{\delta\rho}, \quad
\deriv{S}{t}=\frac{\delta{\cal H}}{\delta\beta},\quad
\deriv{\beta}{t}=-\frac{\delta{\cal H}}{\delta S}, \nonumber\\
&\deriv{\mu}{t}=\frac{\delta{\cal H}}{\delta\lambda},\quad
\deriv{\lambda}{t}=-\frac{\delta{\cal H}}{\delta \mu},\quad 
\deriv{\bf A}{t}=\frac{\delta{\cal H}}{\delta\boldsymbol{\gamma}}, \quad
\deriv{\boldsymbol{\gamma}}{t}=-\frac{\delta{\cal H}}{\delta{\bf A}}. 
\label{eq:can6} 
\end{align}
Here  $\{\rho,\phi\}$, $\{S,\beta\}$,
and $\{{\bf A},\boldsymbol{\gamma}\}$
are canonically conjugate variables. The Hamiltonian functional 
${\cal H}$ is given by (\ref{eq:H1}), and ${\bf u}$ is given by the Clebsch 
expansion (\ref{eq:can3}). 
The canonical Poisson bracket is:
\begin{align}
\{F,G\}=&\int d^3x\ \biggl(\frac{\delta F}{\delta\rho}
\frac{\delta G}{\delta \phi}-\frac{\delta F}{\delta \phi}
\frac{\delta G}{\delta\rho}
+\frac{\delta F}{\delta{\bf A}}
{\bf\cdot}\frac{\delta G}{\delta\boldsymbol{\gamma}}
-\frac{\delta F}{\delta\boldsymbol{\gamma}}
{\bf\cdot}\frac{\delta G}{\delta{\bf A}}\nonumber\\
&+\frac{\delta F}{\delta S}
\frac{\delta G}{\delta\beta}-\frac{\delta F}{\delta\beta}
\frac{\delta G}{\delta S} 
+\frac{\delta F}{\delta \mu}
\frac{\delta G}{\delta\lambda}-\frac{\delta F}{\delta\lambda}
\frac{\delta G}{\delta \mu}
\biggr). \label{eq:can7}
\end{align}

The transformations of the  
variational derivatives from canonical Clebsch variables 
$(\rho,\phi,S,\beta,{\bf A},\boldsymbol{\gamma})$ in terms of the 
non-canonical new variables $(\rho,\sigma,{\bf A},{\bf M})$ are:
\begin{align}
&\frac{\delta F}{\delta\rho}
=\frac{\delta F}{\delta\rho}+S\frac{\delta F}{\delta\sigma}
+\frac{\delta F}{\delta{\bf M}}{\bf\cdot}\nabla\phi,\quad
\frac{\delta F}{\delta\phi}=
-\nabla{\bf\cdot}\left(\rho\frac{\delta F}{\delta{\bf M}}\right),
\nonumber\\
&\frac{\delta F}{\delta S}=\rho\frac{\delta F}{\delta\sigma}
+\nabla{\bf\cdot}\left(\beta\frac{\delta F}{\delta {\bf M}}\right),
\quad
\frac{\delta F}{\delta \beta}
=-\frac{\delta F}{\delta {\bf M}}{\bf\cdot}\nabla S, \nonumber\\
&\frac{\delta F}{\delta{\bf A}}
=\frac{\delta F}{\delta{\bf A}}+\nabla{\bf\cdot}\boldsymbol{\gamma} 
\frac{\delta F}{\delta{\bf M}}
-\nabla\times
\left(\frac{\delta F}{\delta{\bf M}}\times\boldsymbol{\gamma}\right), 
\nonumber\\
&\frac{\delta F}{\delta\boldsymbol{\gamma}}
=-{\bf B}\times\frac{\delta F}{\delta{\bf M}} 
-\nabla\left[{\bf A}{\bf\cdot}
\left(\frac{\delta F}{\delta{\bf M}}\right)\right],\nonumber\\
&\frac{\delta F}{\delta\mu}=\nabla{\bf\cdot}\left(\lambda\frac{\delta F}
{\delta{\bf M}}\right),
\quad \frac{\delta F}{\delta\lambda}=-\frac{\delta F}{\delta{\bf M}}{\bf\cdot}\nabla\mu.  
\label{eq:can8}
\end{align}

In terms of the non-canonical variables $({\bf M},{\bf A},\rho,\sigma)$ 
where $\sigma=\rho S$ we obtain the non-canonical Poisson bracket:
\begin{align}
\left\{F,G\right\}=&-\int\ d^3x\biggl\{ \left[F_{\bf M}{\bf\cdot}
\nabla (G_{\bf M})-G_{\bf M}{\bf\cdot}\nabla(F_{\bf M})\right]
{\bf\cdot}{\bf M}
\nonumber\\
&+\rho\left[F_{\bf M}{\bf\cdot}\nabla(G_{\rho})
-G_{\bf M}{\bf\cdot}\nabla(F_{\rho})\right]\nonumber\\
&+\sigma\left[F_{\bf M}{\bf\cdot}\nabla(G_{\sigma})
-G_{\bf M}{\bf\cdot}\nabla(F_{\sigma})\right]\nonumber\\
&+{\bf A}{\bf\cdot}\left[F_{\bf M}\nabla{\bf\cdot}(G_{\bf A})
-G_{\bf M}\nabla{\bf\cdot}(F_{\bf A})\right]\nonumber\\
&+\nabla\times{\bf A}{\bf\cdot}
\left[G_{\bf A}\times F_{\bf M}-F_{\bf A}\times G_{\bf M}\right]\biggr\}, 
\label{eq:can9}
\end{align}
where $F_{\bf M}\equiv \delta F/\delta {\bf M}$ and similarly for the 
other variational derivatives in (\ref{eq:can9}). The non-canonical 
bracket (\ref{eq:can9}) was obtained by \cite{Holm83a, Holm83b}. 
It is a skew symmetric bracket and satisfies the Jacobi identity. 
\cite{Holm83a, Holm83b} show that bracket (\ref{eq:can9}) 
 corresponds to 
a semi-direct product Lie algebra.

\section{Overview of Multi-symplectic Systems}

\index{Hamiltonian}%
In this section we discuss multi-symplectic systems of partial differential 
equations. In Section 4.1, we set out the general theory, for the case 
where the space-time metric is flat. We assume the independent variables 
$x^\alpha$ have a flat metric. For example in  MHD we take the independent
variables as $(t,x,y,z)$ where $(x,y,z)$ are Cartesian space coordinates 
and $t$ is the time. In Section 4.2 we indicate how the analysis is altered
if the  independent variables are generalized coordinates (e.g. 
for spherical or cylindicral symmetry the metric plays a role via 
the replacement of ordinary partial derivatives by co-variant derivatives). 
\cite{Bridges10} also describe the general case of multi-symplectic 
systems taking into account the geometry of the independent variables.
They use the algebra of exterior differential forms and the
variational bi-complex, in which the exterior differential $d=d_h+d_v$
where $d_h$ is the so-called horizontal exterior derivative and $d_v$ is the 
vertical exterior derivative.  
\subsection{Flat Cartesian metric}

Hamiltonian systems, with one evolution
variable $t$, can in general be written in the form:
\beqn
{\sf K}_{ij}(z) \frac{dz^j}{dt}=\nabla_{z^i} H(z), \label{eq:m1}
\eeqn
where
the invariant phase space volume element:
\beqn
\kappa=\frac{1}{2} {\sf K}_{ij}(z) dz^i\wedge dz^j, \label{eq:m2}
\eeqn
is a closed two-form, i.e. $d\kappa=0$. Here $d$ denotes 
the exterior derivative and $\wedge$ denotes the anti-symmetric wedge 
product used in the exterior Calculus. The condition 
that $\kappa$ be a closed 2-form, implies by the Poincar\'e Lemma, that 
$\kappa=d g$ where $g=L_j dz^j$ is a one-form (note that $d\kappa =ddg=0$
by antisymmetry of the wedge product). It turns out, that the condition
that $\kappa$ be a closed 2-form implies that ${\sf K}_{ij}=-{\sf K}_{ji}$ is a 
skew symmetric operator (see \cite{Zakharov97}, 
and \cite{Hydon05}).
 Taking the exterior derivative 
of the 2-form (\ref{eq:m2}) and setting the result equal to zero, 
we obtain the identity:
\beqn
{\sf K}_{ij,k}+{\sf K}_{jk,i}+{\sf K}_{ki,j}=0, \label{eq:m3} 
\eeqn
which in some cases is related to the Jacobi identity for the Poisson bracket. 
If the 
system (\ref{eq:m1}) has an even dimension, and if ${\sf K}_{ij}$ has non-zero 
determinant
, then (\ref{eq:m1}) can be written in the form:
\beqn
\frac{dz^i}{dt}= {\sf R}_{ij} \nabla_{z^j} H(z), \label{eq:m3a}
\eeqn
where ${\sf R}_{ij}$ is the inverse of the 
matrix ${\sf K}_{ij}$. Here ${\sf R}_{ij}=-{\sf R}_{ji}$
is a skew-symmetric matrix.  
The closure relation (\ref{eq:m3}) then are equivalent to the relations:
\beqn
{\sf R}_{im} \deriv{{\sf R}_{jk}}{z^m}+{\sf R}_{km}\deriv{{\sf R}_{ij}}{z^m}
+{\sf R}_{jm}\deriv{{\sf R}_{ki}}{z^m}=0,
\label{eq:m3b}
\eeqn
(see e.g. \cite{Zakharov97}). The Poisson bracket for 
the system in the finite dimensional case is given by 
\beqn
\{A,B\}=\sum {\sf R}_{ij}\deriv{A}{z^i}\deriv{B}{z^j}
\label{eq:m3c}
\eeqn
Using the Poisson bracket description (\ref{eq:m3c}) 
the Jacobi identity reduces to  (\ref{eq:m3b}). Casimir functions or 
more generally functionals have zero Poisson bracket with respect to 
any other functional of the variables describing the system. For 
finite dimensional systems Casimirs always occur for odd dimensional 
systems.\\  


A finite dimensional
\index{Hamiltonian}%
Hamiltonian system of dimension $2n$
with canonical variables   $z=(q^1,q^2,\ldots q^n,
p_1,p_2,\ldots p_n)^t$ can be written in the form (\ref{eq:m1}), 
where
\beqn
{\sf K}={\sf J}^t=\left(\begin{array}{cc}
0&-I_n\\
I_n& 0
\end{array}
\right). \label{eq:m4a}
\eeqn
Here the matrix ${\sf K}$ is the inverse of the
\index{symplectic}%
symplectic matrix ${\sf J}$
and $I_n$ is the unit $n\times n$ matrix. The invariant phase space element
form (\ref{eq:m2}) is:
\beqn
\kappa= dp_j\wedge dq^j=d(p_j dq^j). \label{eq:m4}
\eeqn
\index{Hamiltonian}%
\index{symplectic}%
Hamiltonian, multi-symplectic systems with $n$ independent 
variables $x^\alpha$ can be written in the form: 
\beqn
{\sf K}_{ij}^\alpha z^j_{,\alpha}=\nabla_{z^i} H(z), \label{eq:m5}
\eeqn
where $z^j_{\alpha}=\partial z^j/\partial x^\alpha$. The 
fundamental invariant 2-forms are:
\beqn
\kappa^\alpha=\frac{1}{2} {\sf K}_{ij}^\alpha dz^i\wedge dz^j, \quad \alpha=1(1)n, 
\label{eq:m6}
\eeqn
Invariance of the phase space element $D_t(dp_j\wedge dq^j)=0$ for 
the standard canonical
\index{Hamiltonian}%
Hamiltonian formulation with evolution variable $t$
is replaced by the
\index{symplectic}%
symplectic, or structural
\index{conservation laws}%
conservation law:
\beqn
\kappa^\alpha_{,\alpha}=0, \label{eq:m7}
\eeqn
which is referred to as the
\index{symplectic}%
\index{symplecticity}%
symplecticity conservation law.

The closure of the 2-forms $\kappa^\alpha$ implies that the exterior 
derivative of $\kappa^\alpha=0$. By the Poincar\'e Lemma 
$\kappa^\alpha$ is the exterior derivative of a 1-form, i.e.,
\beqn
\kappa^\alpha=d(L_j^\alpha dz^j)=d\omega^\alpha
\quad\hbox{where}\quad \omega^\alpha=L_j^\alpha dz^j. \label{eq:m8}
\eeqn
Note that $d\kappa^\alpha=dd\omega^\alpha=0$. Taking the exterior derivative 
of $\omega^\alpha$ in (\ref{eq:m8}) and using the anti-symmetry of the wedge
product we obtain:

\beqn
\kappa^\alpha=\frac{1}{2} 
\left( \deriv{L_k^\alpha}{z^j}-\deriv{L_j^\alpha}{z^k}\right) dz^j\wedge dz^k. \label{eq:m9}
\eeqn
From (\ref{eq:m6}) and (\ref{eq:m9}) we obtain:
\beqn
{\sf K}_{jk}^\alpha=\deriv{L_k^\alpha}{z^j}-\deriv{L_j^\alpha}{z^k}. \label{eq:m10}
\eeqn
Thus, the matrices ${\sf K}_{ij}^\alpha$ are skew-symmetric, 
i.e. ${\sf K}_{ij}^\alpha=-{\sf K}_{ji}^\alpha$. 

\begin{proposition}\label{prop:4.1}
The Legendre transformation for
\index{symplectic}%
multi-symplectic systems
is the identity
\beqn
\left( L_j^\alpha dz^j\right)_{,\alpha}
=d\left\{L_j^\alpha(z) z^j_{,\alpha}-H(z)\right\}
\equiv dL, \label{eq:m11}
\eeqn
where
\beqn
L=L_j^\alpha(z) z^j_{,\alpha}-H(z), \label{eq:m12}
\eeqn
is the
\index{Lagrangian!density}%
Lagrangian density and $H(z)$ is the multi-symplectic Hamiltonian.
\end{proposition}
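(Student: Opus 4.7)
The plan is to compute the two sides of the claimed identity as differential forms on the jet space with coordinates $(z^i, z^i_{,\alpha})$ and show they coincide as a direct consequence of the skew-symmetry relation (\ref{eq:m10}) and of the multi-symplectic equations (\ref{eq:m5}). The conceptual point to settle first is the interpretation of $(L_j^\alpha dz^j)_{,\alpha}$: I would read it as the total derivative $D_\alpha$ (along a solution $z(x)$) applied to the vertical 1-form $\omega^\alpha=L_j^\alpha(z)\,dz^j$, using the fact that $D_\alpha$ commutes with the vertical exterior derivative $d$.

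With that convention, the chain rule immediately gives
\begin{equation*}
D_\alpha\bigl(L_j^\alpha\,dz^j\bigr)
=\frac{\partial L_j^\alpha}{\partial z^k}\,z^k_{,\alpha}\,dz^j
+L_j^\alpha\,dz^j_{,\alpha}.
\end{equation*}
On the other hand, regarding $L=L_j^\alpha(z)\,z^j_{,\alpha}-H(z)$ as a function on the first jet, a direct computation yields
\begin{equation*}
dL=\left(\frac{\partial L_j^\alpha}{\partial z^i}\,z^j_{,\alpha}
-\frac{\partial H}{\partial z^i}\right)dz^i
+L_i^\alpha\,dz^i_{,\alpha}.
\end{equation*}

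Subtracting these two expressions, the $dz^j_{,\alpha}$ terms cancel after an obvious relabeling $i\leftrightarrow j$. A second relabel in the remaining $\partial L_j^\alpha/\partial z^k$ term brings the two derivative contributions into the antisymmetric combination $\partial L_j^\alpha/\partial z^i-\partial L_i^\alpha/\partial z^j$, which by (\ref{eq:m10}) together with the skew-symmetry ${\sf K}_{ij}^\alpha=-{\sf K}_{ji}^\alpha$ equals ${\sf K}_{ij}^\alpha$. Hence
\begin{equation*}
dL-D_\alpha\omega^\alpha=\Bigl({\sf K}_{ij}^\alpha\,z^j_{,\alpha}
-\frac{\partial H}{\partial z^i}\Bigr)dz^i,
\end{equation*}
and the multi-symplectic equation (\ref{eq:m5}) forces the parenthesis to vanish.

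The main obstacle is not the algebra, which amounts to a handful of index relabelings, but pinning down the notational convention for $(\,)_{,\alpha}$ acting on a phase-space 1-form and verifying that the total derivative $D_\alpha$ commutes with $d$ so that the decomposition is legitimate. Once that is made explicit the Legendre-type identity reduces, as expected, to the statement that $L$ and $H$ are Legendre-conjugate with conjugate momenta $L_j^\alpha=\partial L/\partial z^j_{,\alpha}$, evaluated on solutions of the multi-symplectic system.
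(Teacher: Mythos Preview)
Your proof is correct and follows essentially the same route as the paper's own argument: both expand $D_\alpha(L_j^\alpha\,dz^j)$ via the chain rule, invoke the commutativity of $D_\alpha$ with the exterior differential $d$ to write $D_\alpha\,dz^j=dz^j_{,\alpha}$, identify the antisymmetric combination $\partial L_j^\alpha/\partial z^i-\partial L_i^\alpha/\partial z^j$ with ${\sf K}^\alpha_{ij}$ via (\ref{eq:m10}), and then close with the multi-symplectic equation (\ref{eq:m5}). The only cosmetic difference is that the paper manipulates $D_\alpha\omega^\alpha$ directly into $-dH+d(L_j^\alpha z^j_{,\alpha})$, whereas you compute $dL$ separately and subtract; your explicit flagging of the interpretation of $(\,)_{,\alpha}$ and the $d$--$D_\alpha$ commutation matches exactly the point the paper singles out.
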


\begin{proof}
The proof of (\ref{eq:m11}) proceeds by noting
\begin{align}
\left(L_j^\alpha dz^j\right)_{,\alpha}=&\deriv{L_j^\alpha}{z^i} z_{,\alpha}^i dz^j
 +L_j^\alpha (z) D_\alpha d z^j\nonumber\\
=&\deriv{L_j^\alpha}{z^i} z_{,\alpha}^i dz^j
 +L_j^\alpha (z)  d \left(z^j_{,\alpha}\right). \label{eq:m13}
\end{align}
Here we used the fact that the operators $d$ and $D_\alpha$ commute. 
(\ref{eq:m13}) can be further reduced to:
\beqn
\left(L_j^\alpha dz^j\right)_{,\alpha}=-{\sf K}_{ji}^\alpha z_{,\alpha}^i dz^j 
+d\left(
L_j^\alpha(z) z_{,\alpha}^j\right). \label{eq:m14}
\eeqn
The identity (\ref{eq:m11}) then follows by using the
\index{Hamiltonian}%
Hamiltonian
evolution equations (\ref{eq:m5}).
\end{proof}

 The
\index{symplectic}%
\index{symplecticity}%
symplecticity or structural
\index{conservation laws}%
conservation law (\ref{eq:m7}) now follows by taking the exterior
derivative of (\ref{eq:m11}) and using the results $ddL=0$ and
$dD_\alpha=D_\alpha d$, i.e.,
\beqn
D_\alpha \kappa^\alpha=D_\alpha[d(L^\alpha_j dz^j)]
=dD_\alpha(L^\alpha_j dz^j)=ddL=0, \label{eq:m15}
\eeqn
which is (\ref{eq:m7}). 
 Other conservation laws are obtained by
sectioning the forms in (\ref{eq:m11}) (i.e. we impose the requirement
that $z^j=z^j({\mathbf x})$, which is also referred to as the pull-back to
the base manifold). The pullback, applied to (\ref{eq:m11}) gives
\beqn
\left(L_j^\alpha dz^j\right)_{,\alpha} 
=\left(L_j^\alpha z^j_{,\beta} dx^\beta\right)_{,\alpha}
=\left(L_j^\alpha z^j_{,\beta}\right)_{,\alpha} dx^\beta 
=dL=\deriv{L}{x^\beta}\ dx^\beta. \label{eq:m16}
\eeqn
 Thus, (\ref{eq:m16})  gives the conservation law:
\beqn
D_\alpha\left(L_j^\alpha(z)z_{,\beta}^j-L\delta^\alpha_\beta\right)=0. 
\label{eq:m18}
\eeqn
This
\index{conservation laws}%
conservation law is in fact, the conservation law obtained due to
the invariance of the action $A=\int L dx$ under translations in $x^\beta$
which follows from Noether's first theorem (i.e. 
$x^{'\alpha}=x^\alpha+\epsilon \delta^\alpha_\beta$). 

A further set of 
$n(n-1)/2$ conservation laws is obtained from pull-back of the structural
conservation law (\ref{eq:m7}) to the base manifold, namely:
\beqn
D_\alpha\left( {\sf K}_{ij}^\alpha z_{,\beta}^i z^j_{,\gamma}\right)=0,\quad 
\beta<\gamma. \label{eq:m19}
\eeqn
The
\index{conservation laws}%
conservation laws (\ref{eq:m19}) can be obtained by cross-differentiation
of the conservation laws (\ref{eq:m15}), i.e. they are a consequence
of the equations:{\bf
\beqn
D_\gamma G_\beta-D_\beta G_\gamma=D_\alpha\left( {\sf K}_{ij}^\alpha 
z_{,\gamma}^i z^j_{,\beta}\right), \label{eq:m20}
\end{equation}
where
\begin{equation}
G_\beta=D_\alpha\left(L^\alpha_j z^j_{,\beta}-L\delta^\alpha_\beta\right).
\label{eq:m21}
\end{equation}
Here $G_\beta=0$ give the energy and momentum conservation equations 
for $\beta=0,1,2,3$ respectively.} 
A multi-symplectic version of Noether's theorem 
for the multi-symplectic system (\ref{eq:m5}) is described below 
(see also \cite{Hydon05}):
\begin{proposition}
If the action:
\begin{equation}
J=\int L\ d^3x dt \label{eq:msn1}
\end{equation}
is invariant to $O(\epsilon)$ under the infinitesimal Lie transformation:
\begin{equation}
z^{'s}=z^s +\epsilon V^{z^s},\quad x^{'\alpha}=x^\alpha+\epsilon V^{x^\alpha}, 
\quad (0\leq \alpha\leq 3,\quad 1\leq s\leq N), \label{eq:msn2}
\end{equation}
and under the divergence transformation:
\begin{equation}
L'=L+\epsilon D_\alpha \Lambda^\alpha+O(\epsilon^2), \label{eq:msn3}
\end{equation}
where $L$ has the multi-symplectic form (\ref{eq:m12}):
\begin{equation}
L=L_j^\alpha(z)z^j_{,\alpha}-H(z)\quad \hbox{and}\quad {\cal H}=
\int H(z) d^3x dt, \label{msn4}
\end{equation}
is the Hamiltonian functional, then the Euler Lagrange equations 
for the action:
\begin{equation}
E_{z^s}(L)= \deriv{L}{z^s}
-\derv{x^\alpha}\left(\deriv{L}{z^s_{,\alpha}}\right)
\equiv -\deriv{H}{z^s}+{\sf K}^\alpha_{sj} z^j_{,\alpha}=0, \label{eq:msn5}
\end{equation}
admit the conservation law 
\begin{equation}
D_\alpha\left\{ V^{x^\alpha}L+W^\alpha[{\bf z},\hat{V}^{\bf z}]
+\Lambda^\alpha\right\}=0. \label{eq:msn6}
\end{equation} 
In (\ref{eq:msn6})
\begin{equation}
 W^\alpha[{\bf z},\hat{V}^{\bf z}]
=\hat{V}^{z^s}\deriv{L}{z^s_{,\alpha}}
\equiv\hat{V}^{z^s} L^\alpha_s (z),  \label{eq:msn7}
\end{equation}
and
\begin{equation}
\hat{V}^{z^s}=V^{z^s}-V^{x^\alpha} z^s_{,\alpha}, \label{eq:msn8}
\end{equation}
is the canonical or characteristic Lie symmetry generator (i.e., the 
infinitesimal Lie symmetry transformation $z^{'s}=z^s+\epsilon \hat{V}^{z^s}$, 
$x^{'\alpha}=x^\alpha$ which is equivalent to Lie 
transformation (\ref{eq:msn2})).
Thus, the conservation law (\ref{eq:msn6}) reduces to:
\begin{equation}
D_\alpha \left\{V^{x^\alpha} 
\left[L^\mu_s({\bf z}) z^s_{,\mu}-H({\bf z})\right] 
+\hat{V}^{z^s} L^\alpha_s({\bf z})+\Lambda^\alpha\right\}=0. 
\label{eq:msn9}
\end{equation}
or aternatively:
\begin{equation}
D_\alpha\left\{V^{x^\alpha} L+\hat{V}^{z^s} L^\alpha_s({\bf z})
+\Lambda^\alpha\right\}=0. 
\label{eq:msn10}
\end{equation}
This is the multi-symplectic form of Noether's first theorem for the 
system (\ref{eq:m5}). 

The condition for the Lie symmetry (\ref{eq:msn2})-(\ref{eq:msn3})
to be a divergence symmetry of the action is:
\begin{equation}
\tilde{X}L +V^{x^\alpha}D_\alpha L 
+ D_\alpha \Lambda^\alpha=0, \label{eq:msn11}
\end{equation}
where
\begin{equation}
\tilde{X}=V^{x^\alpha}\derv{x^\alpha}+V^{z^s}\derv{z^s}
+V^{z^s_{,\alpha}}\derv{z^s_{,\alpha}}+\ldots, \label{eq:msn12}
\end{equation}
is the extended Lie symmetry operator. 
The extended Lie symmetry operator ${\tilde{X}}$ can be expressed in 
terms of the characteristic symmetry operator ${\hat{X}}$
by the formula
\begin{equation}
\tilde{X}=\hat{X}+V^{x^{\alpha}} D_\alpha,\quad\hbox{where}\quad 
\hat{X}=\hat{V}^{z^s}\derv{z^s}
+D_\alpha\left({\hat V}^{z^s}\right) 
\derv{z^s_{,\alpha}}+\ldots. \label{eq:msn13}
\end{equation}
The Lie invariance condition (\ref{eq:msn11}) written in terms of $\hat{X}$ is:
\begin{equation}
{\hat X}L+D_\alpha \left(V^{x^\alpha} L+\Lambda^\alpha\right)=0. 
\label{eq:msn14}
\end{equation}
\end{proposition}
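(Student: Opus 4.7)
The strategy is the standard vertical (characteristic) Noether argument, adapted to the multi-symplectic Lagrangian. My starting point would be the characteristic form of the invariance condition, namely (\ref{eq:msn14}), $\hat{X}L+D_\alpha(V^{x^\alpha}L+\Lambda^\alpha)=0$; its equivalence with (\ref{eq:msn11}) is a purely formal computation using the operator identity $\tilde{X}=\hat{X}+V^{x^\alpha}D_\alpha$ from (\ref{eq:msn13}), which follows by substituting the prolongation formula $V^{z^s_{,\alpha}}=D_\alpha V^{z^s}-z^s_{,\beta}D_\alpha V^{x^\beta}$ into (\ref{eq:msn12}) and collecting terms.

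The heart of the argument is to split $\hat{X}L$ into an Euler--Lagrange contribution and a total divergence. Writing $\hat{X}$ in its prolonged form gives
\begin{equation*}
\hat{X}L=\hat{V}^{z^s}\deriv{L}{z^s}+D_\alpha(\hat{V}^{z^s})\,\deriv{L}{z^s_{,\alpha}},
\end{equation*}
and here I would exploit the fact that the multi-symplectic Lagrangian (\ref{eq:m12}) is affine in $z^s_{,\alpha}$, so that $\partial L/\partial z^s_{,\alpha}=L^\alpha_s(z)$ depends only on $z$. Applying the Leibniz rule to the second term to expose a total divergence gives
\begin{equation*}
\hat{X}L=\hat{V}^{z^s}\Bigl[\deriv{L}{z^s}-D_\alpha L^\alpha_s(z)\Bigr]+D_\alpha\bigl(\hat{V}^{z^s}L^\alpha_s(z)\bigr)=\hat{V}^{z^s}E_{z^s}(L)+D_\alpha W^\alpha[\mathbf{z},\hat{V}^{\mathbf{z}}],
\end{equation*}
where $W^\alpha$ is as defined in (\ref{eq:msn7}).

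On solutions of the Euler--Lagrange equations (\ref{eq:msn5}) the first summand vanishes, so (\ref{eq:msn14}) reduces immediately to
\begin{equation*}
D_\alpha\bigl\{V^{x^\alpha}L+\hat{V}^{z^s}L^\alpha_s(\mathbf{z})+\Lambda^\alpha\bigr\}=0,
\end{equation*}
which is exactly (\ref{eq:msn10}); repackaging the middle term as $W^\alpha$ yields (\ref{eq:msn6}). As a sanity check, one may verify by direct differentiation, using the multi-symplectic equations (\ref{eq:m5}) and the skew symmetry ${\sf K}^\alpha_{ij}=-{\sf K}^\alpha_{ji}$, that $D_\alpha W^\alpha$ on-shell equals $\hat{V}^{z^s}(-\partial H/\partial z^s+{\sf K}^\alpha_{sj}z^j_{,\alpha})=0$ modulo the $V^{x^\alpha}L+\Lambda^\alpha$ divergence, a consistency check that will also justify my identification of $L^\alpha_s$ with $\partial L/\partial z^s_{,\alpha}$.

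The only real obstacle is bookkeeping: reconciling the two equivalent forms (\ref{eq:msn11}) and (\ref{eq:msn14}) of the invariance condition, and keeping the prolongation indices straight in $\tilde{X}=\hat{X}+V^{x^\alpha}D_\alpha$, requires patience but no new ideas. Once that is disposed of, the proof is the standard integration-by-parts argument for Noether's first theorem, made especially transparent here by the affine dependence of $L$ on $z^s_{,\alpha}$, which frees $L^\alpha_s$ of derivative dependence and lets the conserved current be read off directly from the multi-symplectic structure.
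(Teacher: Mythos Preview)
Your argument is correct and is the standard characteristic-form Noether derivation. However, there is nothing to compare it against: the paper states this proposition without supplying a proof. After the closing \texttt{\textbackslash end\{proposition\}} the text moves directly to an Example applying the result to the translation symmetry $V^{x^\alpha}=\delta^\alpha_\beta$. The ingredients you use---the operator identity $\tilde{X}=\hat{X}+V^{x^\alpha}D_\alpha$ of (\ref{eq:msn13}) and the characteristic invariance condition (\ref{eq:msn14})---are folded into the proposition statement itself rather than into a separate proof, and the paper simply cites Hydon (2005) for the background. Your integration-by-parts splitting $\hat{X}L=\hat{V}^{z^s}E_{z^s}(L)+D_\alpha W^\alpha$ is exactly the step the paper leaves implicit, and your observation that the affine dependence of $L$ on $z^s_{,\alpha}$ makes $\partial L/\partial z^s_{,\alpha}=L^\alpha_s(z)$ derivative-free is the right reason the conserved current takes the clean form (\ref{eq:msn10}).
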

\begin{example}
\ As an example of Noether's theorem, consider the invariance of the 
action $J$ 
under the Lie symmetry:
\begin{equation}
V^{x^\alpha}=\delta^\alpha_\beta,\quad V^{z^s}=0,\quad \Lambda^\alpha=0, 
 \label{eq:msn15}
\end{equation}
corresponding to translation invariance with respect to $x^\beta$. 
The canonical Lie symmetry generator $\hat{V}^{z^s}$ is given by:
\begin{equation}
\hat{V}^{z^s}=-z^s_{,\beta}, \label{eq:msn16}
\end{equation}
The Lie invariance condition (\ref{eq:msn14}) is satisfied for 
$\Lambda^\alpha=0$, i.e. the action is invariant under a variational symmetry
(one can show $\hat{X}L=-D_\beta L$). The conservation law (\ref{eq:msn9}) 
or (\ref{eq:msn10}) reduces to the symplectic conservation law  (\ref{eq:m18}).
Thus we have shown that the symplectic conservation law is due to 
invariance of the action under translations in $x^\beta$
\end{example}

\begin{proposition}\label{propforms}
The multi-symplectic system (\ref{eq:m5}) for the case where $x^\alpha=(t,x,y,z)$ is 
equivalent to the differential form system:
\begin{align}
\Omega=&d\omega^0\wedge dx\wedge dy\wedge dz-d\omega^1\wedge dx^0\wedge dy\wedge dz
+d\omega^2\wedge dx^0\wedge dx\wedge dz\nonumber\\
&-d\omega^3\wedge dx^0\wedge dx\wedge dy -dH\wedge dx^0\wedge dx\wedge dy\wedge dz=0, 
\label{eq:msn17}
\end{align}
in which the $z^\mu$ are the dependent variables and the $x^\alpha$ are the independent variables. 
The variational principle
\beqn
J=\int\Omega, \label{eq:msn18}
\eeqn
with $\delta J/\delta z^\mu=0$ gives the multi-symplectic 
system (\ref{eq:m5}). The form (\ref{eq:msn17}) is referred to as the Cartan-Poincar\'e form 
(e.g. \cite{Marsden99}).
\end{proposition}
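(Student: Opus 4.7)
The plan is to unpack the 5-form $\Omega$ using the identification $d\omega^\alpha = \tfrac12 \mathsf{K}^\alpha_{ij}\,dz^i\wedge dz^j$ established in (\ref{eq:m9})--(\ref{eq:m10}), and then show by direct calculation of interior products that the condition $\sigma^*(\iota_V\Omega)=0$ on sections $\sigma:x\mapsto (x,z(x))$ is equivalent to the multi-symplectic system (\ref{eq:m5}). The variational principle then falls out of Cartan's magic formula.

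First I would rewrite $\Omega$ in coordinate-free fashion. With $d^4x := dx^0\wedge dx\wedge dy\wedge dz$ and $d^3x_\alpha := \iota_{\partial_\alpha}d^4x$, the alternating signs $(+,-,+,-)$ in the statement are exactly those produced by $\iota_{\partial_\alpha}d^4x$, so
\begin{equation*}
\Omega = d\omega^\alpha\wedge d^3x_\alpha - dH\wedge d^4x.
\end{equation*}
Since $d(d^3x_\alpha)=0$ and $d(d^4x)=0$, this can be written as $\Omega=d\Theta_L$ where $\Theta_L := \omega^\alpha\wedge d^3x_\alpha - H\,d^4x$. Pulling $\Theta_L$ back to a section using $\sigma^*dz^j = z^j_{,\mu}dx^\mu$ and $dx^\mu\wedge d^3x_\alpha = \delta^\mu_\alpha\,d^4x$ gives $\sigma^*\Theta_L = \bigl(L^\alpha_j(z)z^j_{,\alpha}-H(z)\bigr)d^4x = L\,d^4x$, which is the multi-symplectic Lagrangian density (\ref{eq:m12}). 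So the integral $J=\int\Theta_L$ reduces to the ordinary action $\int L\,d^4x$, and the Euler--Lagrange equations (\ref{eq:msn5}) already supply the multi-symplectic system.

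The key technical step is to verify the form-theoretic equivalence directly. For any vertical vector field $V = V^s\partial_{z^s}$, the interior product acts only on the $dz^j$ factors; using the skew-symmetry of $\mathsf{K}^\alpha_{ij}$ and $\iota_V d^4x=0$, one finds
\begin{equation*}
\iota_V\Omega = \mathsf{K}^\alpha_{ij}V^i\,dz^j\wedge d^3x_\alpha - V^s H_{,z^s}\,d^4x.
\end{equation*}
Pulling back to a section and again contracting $dx^\mu\wedge d^3x_\alpha = \delta^\mu_\alpha d^4x$ yields
\begin{equation*}
\sigma^*(\iota_V\Omega) = V^i\bigl[\mathsf{K}^\alpha_{ij}z^j_{,\alpha}-H_{,z^i}\bigr]d^4x.
\end{equation*}
Demanding this to vanish for every $V^i$ is exactly (\ref{eq:m5}), and the converse is immediate, proving the stated equivalence of $\Omega=0$ (in the sense of annihilating all vertical contractions on sections) with the multi-symplectic equations.

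Finally, for the variational statement I would apply Cartan's formula $\mathcal{L}_V\Theta_L = \iota_V d\Theta_L + d\iota_V\Theta_L$ to obtain $\delta J = \int\sigma^*(\iota_V\Omega) + \int d\sigma^*(\iota_V\Theta_L)$; the second term is a boundary contribution that is discarded under compactly supported variations, leaving $\delta J/\delta z^\mu=0$ equivalent to $\sigma^*(\iota_V\Omega)=0$ for all $V$, which is (\ref{eq:m5}). The main obstacle is really just bookkeeping: one must carefully match the alternating signs in the explicit expansion of $\Omega$ with the intrinsic formula $\Omega = d\omega^\alpha\wedge d^3x_\alpha - dH\wedge d^4x$, and clarify that the notation $J=\int\Omega$ is to be read through the primitive $\Theta_L$ (since a $5$-form vanishes identically when pulled back to a $4$-dimensional base), with the genuine variational content carried by $\iota_V\Omega$ rather than by $\Omega$ itself.
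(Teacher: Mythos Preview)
Your argument is correct and is in fact cleaner than the paper's own proof, but the route is genuinely different. The paper does not introduce the primitive $\Theta_L$ nor use interior products or Cartan's formula. Instead it writes $\Omega$ directly as $\Omega=dz^\mu\wedge\beta_\mu$ with
\[
\beta_\mu=(-1)^\alpha\,{\sf K}^\alpha_{\mu\nu}\,dz^\nu\wedge\!\!\prod_{j\neq\alpha}\!dx^j-\deriv{H}{z^\mu}\prod_{j}dx^j,
\]
and then varies $J=\int\Omega$ by hand, obtaining $\delta J=\int d(\delta z^\mu)\wedge\beta_\mu+\int\delta z^\mu\,d\beta_\mu$; Stokes' theorem kills the second integral (with a boundary assumption on $\beta_\mu$) and the first yields $\beta_\mu=0$, which sectioned is (\ref{eq:m5}). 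Your approach buys a coordinate-free packaging, identifies $\Omega=d\Theta_L$ with the Lepage/Cartan form whose pullback is $L\,d^4x$, and makes the variational statement an instance of Cartan's magic formula --- this is closer in spirit to the \cite{Marsden99} reference the paper cites. The paper's approach is more elementary but less transparent about why the signs and the ``$J=\int\Omega$'' notation work; you also correctly flag the point that a 5-form cannot literally be integrated over a 4-dimensional base and that the content resides in $\iota_V\Omega$, a subtlety the paper leaves implicit.
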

\begin{proof}
\ Starting from (\ref{eq:m5}) we require:
\beqn
\left({\sf K}^\alpha_{\mu\nu} \deriv{z^\nu}{x^\alpha}-\frac{\delta H}{\delta z^\mu}\right) dz^\mu\wedge dx^0\wedge dx\wedge dy\wedge dz=0. \label{eq:msn19}
\eeqn
Equation (\ref{eq:msn19}) can be expanded to give (\ref{eq:msn17}). 
{\bf In the derivation of 
(\ref{eq:msn19}) the forms are sectioned, i.e. the variables $z^\mu$ are taken to be 
dependent on the independent variables $x^\alpha$, $\alpha=0,1,2,3$.}  

{\bf Next we consider the variation of $\delta J$ in (\ref{eq:msn18}), which may be reduced to the 
form:
\begin{equation}
\delta J=\int d(\delta z^\mu)\wedge \beta_\mu+\int \delta z^\mu d\beta_\mu, \label{eq:msn19a}
\end{equation}
where 
\beqn
\beta_\mu=(-1)^\alpha\left[{\sf K}^\alpha_{\mu\nu} dz^\nu\wedge 
\prod_{j\neq\alpha}dx^j\right] 
-\deriv{H}{z^\mu}\wedge  \prod_{j=0}^{n}dx^j, \label{eq:msn19b}
\eeqn
$\prod_{j=0}^{n}dx^j=dx^0\wedge dx^1\ldots\wedge dx^n$ and 
$\prod_{j\neq\alpha}dx^j$ is defined similarly, 
but does not include $dx^\alpha$ in the wedge product. 
$n$ is the number of space variables.
In the derivation of (\ref{eq:msn19a}) we used the representation:
\beqn
\Omega=(-1)^\alpha {\sf K}^\alpha_{\mu\nu} dz^\mu\wedge dz^\nu\wedge 
\prod_{j\neq\alpha}dx^j
-\deriv{H}{z^\mu} dz^\mu \wedge \prod_{j=0}^{n}dx^j\equiv dz^\mu\wedge\beta_\mu,
 \label{eq:msn19c}
\eeqn
where we choose $\mu<\nu$ in the first term in (\ref{eq:msn19c}), but there is no restriction 
on $\nu$ in (\ref{eq:msn19b}). The first term in (\ref{eq:msn19a}) is due to the variations 
of $\delta dz^\mu$, whereas the second term is due to the variations $\delta {\sf K}^\alpha_{\mu\nu}$ 
and $\delta(\partial H/\partial z^\mu)$. The second integral in (\ref{eq:msn19a}) vanishes by 
Stokes theorem:
\beqn
\int_Vd\beta_\mu=\int_{\partial V}\beta_\mu=0, \label{eq:msn19d}
\eeqn
where we assume that $\beta_\mu$ vanishes on the boundary $\partial V$. For independent variations 
of $\delta z^\mu$, the condition $\delta J=0$ gives the differential form system:
\beqn
\beta_\mu=0,\quad 1\leq\mu\leq N, \label{eq:msn19e}
\eeqn
where $N$ is the number of dependent variables $z^\mu$. 

Equations (\ref{eq:msn19e}) form the basis of the Cartan differential form representation of the 
differential equation system (\ref{eq:m5}). However, the ideal of forms representing 
the differential equation system (\ref{eq:m5}) in general needs to be 
enlarged to include the exterior derivatives 
of $d\beta_\mu$ which are not expressible as a linear combination of the $\beta_\mu$. 
The closed system of forms consisting of the $\beta_\mu$ plus the adjoined forms $d\beta_\nu$
can then be used to represent the differential equation system (\ref{eq:m5}) and the integrability 
conditions for the system (\ref{eq:m5}) (see e.g. \cite{Harrison71}). This 
completes our discussion of the variational principle (\ref{eq:msn18}).} 
\end{proof}

\subsection{Covariant Formulation}
In Section 4.1 we assumed that  
that the $L^\alpha_j$ and the ${\sf K}^\alpha_{ij}$  depended only on the field
variables ${\bf z}$. \cite{Bridges10} in a more general formulation 
using the variational bi-complex consider cases where the 
$L^\alpha_j$ and $H$ can also depend on the independent variables $x^k$. 
They show that the multi-symplectic system (\ref{eq:m5}) 
is a special case of the 
more general system:
\beqn
{\sf K}^\alpha_{ij}\deriv{z^j}{q^\alpha}-L^\alpha_{i;\alpha}=\deriv{H}{z^i}, 
\label{eq:msg1}
\eeqn
where
\beqn
L^\alpha_{i;\alpha}=\deriv{L^\alpha_i}{q^\alpha}
+\Gamma^\alpha_{s\alpha} L^s_i
\equiv\frac{1}{\sqrt{g}}\derv{q^\alpha}\left(\sqrt{g} L^\alpha_i\right), 
\label{eq:msg2}
\eeqn
is the covariant derivative of the contravariant vector field $L^\alpha_i$ and 
$g=\det(g_{\alpha\beta})$ is the determinant of the metric tensor 
where $ds^2=g_{\alpha\beta} dq^\alpha dq^\beta$ is the metric. Here 
$q^\alpha$ are generalized coordinates. The holonomic base vectors 
${\bf e}_\alpha=\partial {\bf x}/\partial q^\alpha$ and 
$g_{\alpha\beta}={\bf e}_\alpha{\bf\cdot}{\bf e}_\beta$. 
In Section 4.1 
we implicitly used Cartesian space time coordinates 
(i.e. $x^\alpha=(t,x,y,z)$). The differential equation system (\ref{eq:msg1}) 
also holds for generalized coordinates (e.g. for spherical polar or 
cylindrical space coordinates). 

The system of 
equations (\ref{eq:msg1})-(\ref{eq:msg2}) can be written in the form:
\beqn
\left(\deriv{L^\alpha_j}{z^i}
-\deriv{L^\alpha_i}{z^j}\right)\deriv{z^j}{q^\alpha}
-\frac{1}{\sqrt{g}}\derv{q^\alpha}
\left(\sqrt{g} L^\alpha_i\right)
=\deriv{H}{z^i}, \label{eq:msg2a}
\eeqn 
which highlights the fact that the system depends on the 1-forms 
$\omega^\alpha=L^\alpha_j dz^j$,  the Hamiltonian $H$ and the metric 
$g_{\alpha\beta}$. 

\cite{Bridges10} develop the theory of multi-symplectic systems of the 
form (\ref{eq:msg1}) and its relationship to the variational 
bi-complex (e.g. \cite{Anderson89, Anderson92}; 
\cite{Bridges10} and 
references therein). 

Below we provide a discussion of the origin of (\ref{eq:msg1}). 
An alternative approach is to use the variational bi-complex analysis 
of \cite{Bridges10}.

\begin{proposition}
The multi-symplectic partial differential equation system (\ref{eq:msg1}) 
arises as the Euler Lagrange equation of the action:
\beqn
J=\int\int L\sqrt{g}\ d^3q\ dt \label{eq:me1}
\eeqn
where 
\beqn
L=L^\alpha_j z^j_{,\alpha} -H(z), \label{eq:me2}
\eeqn
is the Lagrangian in the original coordinates $(t,x,y,z)$. In particular, 
the stationary point conditions:
\beqn
E_i({\bar L})=\sqrt{g}\left\{ {\sf K}^\alpha_{ij} z^j_{,\alpha}
-L^\alpha_{i;\alpha}-\deriv{H}{z^i}\right\}=0, \label{eq:me3}
\eeqn
where ${\bar L}=L\sqrt{g}$, are equivalent to the 
multi-symplectic system (\ref{eq:msg1}).

\end{proposition}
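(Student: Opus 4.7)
The plan is to apply the standard Euler--Lagrange operator
$E_i(\bar{L}) = \partial\bar{L}/\partial z^i - D_\alpha(\partial\bar{L}/\partial z^i_{,\alpha})$
directly to $\bar{L}=\sqrt{g}\,[L^\alpha_j(z)\,z^j_{,\alpha} - H(z)]$, factor out the overall $\sqrt{g}$, and recognize the remaining bracket as the left-hand side of (\ref{eq:msg1}). Since $\sqrt{g}$ depends on the base coordinates $q^\alpha$ but not on the field variables, it slides through the $z^i$ and $z^i_{,\alpha}$ partials, giving
$\partial\bar{L}/\partial z^i = \sqrt{g}\,[\,\partial L^\alpha_j/\partial z^i\, z^j_{,\alpha} - \partial H/\partial z^i\,]$ and $\partial\bar{L}/\partial z^i_{,\alpha}=\sqrt{g}\,L^\alpha_i(z)$. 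This reduces the problem to handling the single term $D_\alpha(\sqrt{g}\,L^\alpha_i)$ correctly.

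The crux is the total-derivative computation. Using the Leibniz rule and the chain rule through $z(q)$, I would write
\begin{equation*}
D_\alpha\bigl(\sqrt{g}\,L^\alpha_i\bigr)
= (\partial_\alpha\sqrt{g})\,L^\alpha_i + \sqrt{g}\,\frac{\partial L^\alpha_i}{\partial z^j}\,z^j_{,\alpha},
\end{equation*}
and then invoke the classical metric identity $\partial_\alpha\sqrt{g}/\sqrt{g}=\Gamma^\beta_{\alpha\beta}$ so that the first piece becomes $\sqrt{g}\,\Gamma^\alpha_{s\alpha}L^s_i$ after a cosmetic index relabeling. Subtracting from $\partial\bar{L}/\partial z^i$ and collecting terms yields
\begin{equation*}
E_i(\bar{L}) = \sqrt{g}\left[\left(\frac{\partial L^\alpha_j}{\partial z^i}-\frac{\partial L^\alpha_i}{\partial z^j}\right)z^j_{,\alpha} - \Gamma^\alpha_{s\alpha}L^s_i - \frac{\partial H}{\partial z^i}\right].
\end{equation*}
The antisymmetric coefficient in the first bracket is exactly ${\sf K}^\alpha_{ij}$ by Proposition~\ref{prop:4.1}, equation~(\ref{eq:m10}), so the first term is ${\sf K}^\alpha_{ij}z^j_{,\alpha}$. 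When $L^\alpha_i$ has no explicit $q$-dependence, the remaining Christoffel contribution $\Gamma^\alpha_{s\alpha}L^s_i$ matches the covariant divergence $L^\alpha_{i;\alpha}$ in the definition (\ref{eq:msg2}), so the bracket collapses to exactly the left-hand side of (\ref{eq:msg1}) minus $\partial H/\partial z^i$. Since $\sqrt{g}\ne 0$, the stationary-point condition $E_i(\bar{L})=0$ is equivalent to (\ref{eq:msg1}), as required.

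The main obstacle is purely one of bookkeeping the derivative conventions: one must track carefully whether $\partial L^\alpha_i/\partial q^\alpha$ in (\ref{eq:msg2}) denotes the explicit coordinate partial or the total derivative including the chain rule through $z(q)$. In the present setup (where $L^\alpha_j$ and $H$ depend only on $z$) the two interpretations differ by precisely the term $\partial L^\alpha_i/\partial z^j\,z^j_{,\alpha}$, which is exactly the piece absorbed into ${\sf K}^\alpha_{ij}z^j_{,\alpha}$ above; an incorrect choice would spuriously double-count it and break the identification. If one accommodates the more general Bridges framework with explicit $q$-dependence in $L^\alpha_i$ and $H$, the derivation goes through unchanged except for an extra explicit $\partial_\alpha L^\alpha_i$ contribution that naturally enters $L^\alpha_{i;\alpha}$ via the formula $\sqrt{g}\,L^\alpha_{i;\alpha}=\partial_\alpha(\sqrt{g}\,L^\alpha_i)$, and one would note that this recovers the flat-case result (\ref{eq:m5}) when $g_{\alpha\beta}=\delta_{\alpha\beta}$.
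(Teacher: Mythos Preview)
Your proof is correct and follows essentially the same route as the paper: the paper carries out the variation $\delta J$ explicitly and integrates by parts, whereas you apply the Euler--Lagrange operator formula directly, but the substantive steps---splitting $D_\alpha(\sqrt{g}\,L^\alpha_i)$ into the metric piece that produces $L^\alpha_{i;\alpha}$ and the $z$-chain-rule piece that antisymmetrizes against $\partial L^\alpha_j/\partial z^i$ to give ${\sf K}^\alpha_{ij}$---are identical. Your closing remarks on the explicit vs.\ total $q^\alpha$-derivative convention and on the extension to $q$-dependent $L^\alpha_i$ are accurate and anticipate exactly the subtlety the paper's equation~(\ref{eq:me5}) is handling.
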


\begin{proof}
Taking the variation of the functional (\ref{eq:me1}) and integrating by 
parts gives:
\begin{align}
\delta J=&\int\int \left(\delta\left(L^\alpha_j 
\sqrt{g}\right) z^j_{,\alpha} +L^\alpha_j 
\sqrt{g}\delta z^j_{,\alpha}-\deriv{H}{z^s}\delta z^s\sqrt{g}\right)
\ d^3q\ dt\nonumber\\
=&\int\int\delta z^j
\left(-\frac{1}{\sqrt{g}}D_\alpha\left(L^\alpha_j\sqrt{g}\right)
-\deriv{H}{z^j} +\deriv{L^\alpha_k}{z^j} z^k_{,\alpha}\right)\sqrt{g}\ d^3q\ dt\nonumber\\
&+\int\ d^3q\int\ dt D_\alpha\left(L^\alpha_j\sqrt{g} \delta z^j\right), 
\label{eq:me4}
\end{align}
where $D_\alpha$ denotes the total partial derivative with respect 
to $q^\alpha$. Noting that
\beqn
D_\alpha\left(L^\alpha_j\sqrt{g}\right)
=\derv{q^\alpha}\left(L^\alpha_j\sqrt{g}\right)
+\derv{z^s}\left(L^\alpha_j\sqrt{g}\right) z^s_{,\alpha}, \label{eq:me5}
\eeqn
in (\ref{eq:me4} we obtain:
\begin{align}
\delta J=&\int\int \delta z^j\left\{{\sf K}^\alpha_{js}z^s_{,\alpha} 
-L^\alpha_{j;\alpha}-\deriv{H}{z^s}\right\}\ \sqrt{g}\ d^3q\ dt\nonumber\\
&+\int\int\ D_\alpha\left(L^\alpha_j\sqrt{g} \delta z^j\right)\ d^3q\ dt. \label{eq:me6}
\end{align}
Dropping the surface term in (\ref{eq:me6}) 
and evaluating $\delta J/\delta z^i$ gives the 
Euler-Lagrange equation (\ref{eq:me3}). This completes the proof.
\end{proof}

\begin{proposition}
Let one-form $\omega^\alpha=L^\alpha_j dz^j$ be a contravariant vector field
with respect to the index $\alpha$, then the multi-symplectic 
Legendre transformation (\ref{eq:m11})-(\ref{eq:m12}) is replaced 
by the more general result:
\beqn
\left(L^\alpha_j(z; x) dz^j\right)_{;\alpha}=d \left[L^\alpha_j
(z; x)z^j_{,\alpha}-H(z)\right]=dL, \label{eq:msg8}
\eeqn
where 
\beqn
L=L^\alpha_j
(z; x)z^j_{,\alpha}-H(z), \label{eq:msg9}
\eeqn
is the Lagrangian density and $H(z)$ is the multi-symplectic Hamiltonian 
for the system (\ref{eq:msg1}), which satisfies the structural conservation law:
\beqn
\kappa^\alpha_{;\alpha}=0\quad \hbox{where}\quad 
\kappa^\alpha=d\omega^\alpha=d\left(L^\alpha_j dz^j\right). \label{eq:msg10}
\eeqn
\end{proposition}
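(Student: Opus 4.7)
The plan is to imitate the proof of Proposition \ref{prop:4.1} verbatim, replacing the flat divergence $D_\alpha$ by the covariant divergence $(\cdot)_{;\alpha}=(1/\sqrt{g})\,D_\alpha(\sqrt{g}\,\cdot)$ and exploiting the fact that the vertical exterior derivative $d$ does not see the spacetime metric. Concretely, I would first expand $(L^\alpha_j dz^j)_{;\alpha}$ using the Leibniz rule, then add and subtract $d(L^\alpha_j z^j_{,\alpha})$ so that the residual can be written in terms of the symplectic matrices $K^\alpha_{ij}=\partial L^\alpha_j/\partial z^i-\partial L^\alpha_i/\partial z^j$, and finally invoke the multi-symplectic equation (\ref{eq:msg1}) to collapse the residual to $-dH$.

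In detail, since the $z$-exterior derivative $d$ commutes with the spacetime partial $\partial_\alpha$, and $\sqrt{g}$ depends only on $x^\alpha$, one has
\begin{equation}
(L^\alpha_j dz^j)_{;\alpha}=L^\alpha_{j;\alpha}\,dz^j+L^\alpha_j\,d(z^j_{,\alpha}).
\end{equation}
Adding and subtracting $d(L^\alpha_j z^j_{,\alpha})=(\partial L^\alpha_j/\partial z^i)\,dz^i\, z^j_{,\alpha}+L^\alpha_j\,d(z^j_{,\alpha})$ and relabelling the dummy index $j\leftrightarrow i$ in one of the remaining terms reproduces the flat-case identity
\begin{equation}
(L^\alpha_j dz^j)_{;\alpha}=\bigl[L^\alpha_{i;\alpha}-K^\alpha_{ij}z^j_{,\alpha}\bigr]dz^i+d(L^\alpha_j z^j_{,\alpha}).
\end{equation}
The multi-symplectic equation (\ref{eq:msg1}) gives $L^\alpha_{i;\alpha}-K^\alpha_{ij}z^j_{,\alpha}=-\,\partial H/\partial z^i$, so the bracket equals $-dH$ and the right-hand side collapses to $d(L^\alpha_j z^j_{,\alpha}-H)=dL$, which is (\ref{eq:msg8}).

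For the structural conservation law (\ref{eq:msg10}) I would apply $d$ to both sides of the just-established Legendre identity. Since $d^2=0$, $d[(L^\alpha_j dz^j)_{;\alpha}]=ddL=0$. The step that needs a short argument is commuting $d$ past the covariant divergence: because $g$ depends only on the base coordinates, $d(\sqrt{g})=0$ and $d$ commutes with $D_\alpha$, so
\begin{equation}
d\bigl[(L^\alpha_j dz^j)_{;\alpha}\bigr]=\tfrac{1}{\sqrt{g}}D_\alpha\bigl[\sqrt{g}\,d(L^\alpha_j dz^j)\bigr]=(d\omega^\alpha)_{;\alpha}=\kappa^\alpha_{;\alpha},
\end{equation}
which therefore vanishes identically. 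This is the covariant analogue of the argument in (\ref{eq:m15}).

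The main obstacle will be bookkeeping around any explicit $x$-dependence carried by $L^\alpha_j(z;x)$: one has to decide consistently whether $L^\alpha_{i;\alpha}$ in (\ref{eq:msg2}) treats the $z$-dependence through the chain rule or not, and then check that the same convention is used in the Euler--Lagrange derivation of (\ref{eq:msg1}) so that the explicit-$x$ terms cancel when substituted into the residual $L^\alpha_{i;\alpha}-K^\alpha_{ij}z^j_{,\alpha}$. Once this convention is fixed (as it is implicitly through Proposition 4.3), the remaining manipulations are purely formal applications of the Leibniz rule, skew-symmetry of $K^\alpha_{ij}$, and the commutation of $d$ with horizontal operations, all of which are guaranteed by the variational bi-complex framework of \cite{Bridges10}.
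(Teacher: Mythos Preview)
Your approach is essentially identical to the paper's: expand the covariant divergence of $\omega^\alpha$, add and subtract $d(L^\alpha_j z^j_{,\alpha})$ to extract the symplectic matrices $K^\alpha_{ij}$, invoke (\ref{eq:msg1}) to reduce the residual to $-dH$, and then apply $d$ (commuting it through the covariant divergence) to obtain $\kappa^\alpha_{;\alpha}=0$.

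There is, however, a genuine slip in your first displayed equation. Under the paper's convention (\ref{eq:msg2}), where $L^\alpha_{j;\alpha}$ contains only the \emph{explicit} $q^\alpha$-dependence, the Leibniz expansion of $(L^\alpha_j\,dz^j)_{;\alpha}$ must also pick up the implicit $z$-dependence of $L^\alpha_j$ through the chain rule, giving the extra term $(\partial L^\alpha_j/\partial z^k)\,z^k_{,\alpha}\,dz^j$. The paper makes this explicit in its first step (equation (\ref{eq:msg11})), writing
\[
\omega^\alpha_{;\alpha}=L^\alpha_{j;\alpha}\,dz^j+\Bigl\{\tfrac{\partial L^\alpha_j}{\partial z^k}z^k_{,\alpha}\,dz^j+L^\alpha_j\,d(z^j_{,\alpha})\Bigr\}.
\]
Without that middle term your ``add and subtract'' manoeuvre only produces $-\partial L^\alpha_j/\partial z^i$ rather than the full antisymmetric combination $K^\alpha_{ij}=\partial L^\alpha_j/\partial z^i-\partial L^\alpha_i/\partial z^j$, so your second displayed equation does not actually follow from your first. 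You correctly flag exactly this convention question as ``the main obstacle'' in your final paragraph; the resolution is simply that (\ref{eq:msg2}) fixes the partial-derivative convention, and once the missing chain-rule term is restored, your computation and the paper's coincide line for line.
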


\begin{proof}
\ Computing the covariant derivative of $\omega^\alpha$ we obtain:
\beqn
\omega^\alpha_{;\alpha}=\left(L^\alpha_j dz^j\right)_{;\alpha} 
=L^\alpha_{j;\alpha} dz^j +
\left\{ \deriv{L^\alpha_j}{z^k} z^k_{,\alpha} dz^j 
+L^\alpha_j D_\alpha \left(dz^j\right)\right\}, \label{eq:msg11}
\eeqn
where the first term takes into account changes due to changes 
in $x^\alpha$ and the second term in curly brackets takes into account 
the changes in $z^j$ keeping $x^\alpha$ fixed.  From (\ref{eq:msg11}) 
we obtain:
\begin{align}
\omega^\alpha_{;\alpha}=&L^\alpha_{j;\alpha} dz^j
+\deriv{L^\alpha_j}{z^k} z^k_{,\alpha} dz^j
+L^\alpha_j d\left(z^j_{,\alpha}\right)\nonumber\\
=&\left(L^\alpha_{j;\alpha}+\deriv{L^\alpha_j}{z^k} z^k_{,\alpha}\right) dz^j
+\biggl\{d\left(L^\alpha_j z^j_{,\alpha}\right)
-\deriv{L^\alpha_j}{z^k} dz^k z^j_{,\alpha}\biggr\}\nonumber\\
=&\left(\deriv{L^\alpha_j}{z^k} 
-\deriv{L^\alpha_k}{z^j}\right) z^k_{,\alpha} dz^j
+L^\alpha_{j;\alpha} dz^j
+d\left(L^\alpha_j z^j_{,\alpha}\right)\nonumber\\
=&\left({\sf K}^\alpha_{kj} z^k_{,\alpha} +L^\alpha_{j;\alpha}\right) dz^j
+d(L+H). \label{eq:msg12}
\end{align}
From (\ref{eq:msg12}) we obtain:
\beqn
\omega^\alpha_{;\alpha}=-\left({\sf K}^\alpha_{jk} z^k_{,\alpha}-L^\alpha_{j;\alpha}
-\deriv{H}{z^j}\right) dz^j+dL. \label{eq:msg13}
\eeqn
Using (\ref{eq:msg1}), (\ref{eq:msg13}) reduces to:
\beqn
\omega^\alpha_{;\alpha}=\left(L^\alpha_j dz^j\right)_{;\alpha}=dL. 
\label{eq:msg14}
\eeqn
Taking the exterior derivative of (\ref{eq:msg14}) gives:
\beqn
d\omega^\alpha_{;\alpha}=D_\alpha\left(d\omega^\alpha\right)
=D_\alpha\kappa^\alpha=ddL=0. \label{eq:msg15}
\eeqn
Thus, the differential equation system (\ref{eq:msg1}) is multi-symplectic, 
meaning $D_\alpha\kappa^\alpha=0$, where $\kappa^\alpha=d\omega^\alpha$ 
and $\omega^\alpha=L^\alpha_j dz^j$. This completes the proof.
\end{proof}

The form of Noether's first theorem using the generalized coordinates 
$q^\alpha=(t,q^1,q^2,q^3)$ is given below.

\begin{proposition}
If the action:
\beqn
J=\int\int L\ d^3x\ dt=\int\int\ L\sqrt{g}\ d^3q\ dt\equiv \int\int {\bar L}\ d^3q\ dt, \label{eq:msg16}
\eeqn
is invariant under the infinitesimal Lie transformation:
\beqn
z^{'s}=z^s+\epsilon V^s,\quad q^{'\alpha}=q^\alpha+\epsilon V^{q^\alpha}, 
\quad (0\leq\alpha\leq 3,\quad 1\leq s\leq N), \label{eq:msg17}
\eeqn
and under the divergence transformation:
\beqn
{\bar L}'={\bar L}+\epsilon \nabla{\bf\cdot}\bar{\boldsymbol{\Lambda}}, 
\quad \bar{\boldsymbol{\Lambda}}=\sqrt{g}\boldsymbol{\Lambda}, 
\quad {\bar L}=\sqrt{g} L, \label{eq:msg18}
\eeqn
then the Euler Lagrange equations (\ref{eq:me3}) admit the conservation 
law:
\beqn
\frac{1}{\sqrt{g}}\derv{q^\alpha}\left\{\sqrt{g} 
\left(V^{q^\alpha} L+\hat{V}^{z^j} L^\alpha_j+\Lambda^\alpha
\right)\right\}=0.
\eeqn
\end{proposition}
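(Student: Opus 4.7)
The plan is to mimic the standard derivation of Noether's first theorem but applied to $\bar{L} = \sqrt{g}\,L$ rather than to $L$ itself, so that the $\sqrt{g}$ factor gets absorbed naturally into the covariant divergence in the final conservation law. First I would write the Lie invariance condition for the action in the form analogous to (\ref{eq:msn14}): under the transformation (\ref{eq:msg17})--(\ref{eq:msg18}) we must have
\begin{equation*}
\hat{X}\bar{L} + D_\alpha\bigl(V^{q^\alpha}\bar{L} + \bar{\Lambda}^\alpha\bigr) = 0,
\end{equation*}
where $\hat{X}$ is the characteristic prolongation built from $\hat{V}^{z^s} = V^{z^s} - V^{q^\alpha} z^s_{,\alpha}$, exactly as in (\ref{eq:msn13}). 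This is the key starting point, and it is just the generalized-coordinate version of the invariance condition used in Section~4.1.

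Next I would invoke the standard Noether identity for the characteristic symmetry acting on a first-order Lagrangian:
\begin{equation*}
\hat{X}\bar{L} = \hat{V}^{z^s}\,E_{z^s}(\bar{L}) + D_\alpha\!\left(\hat{V}^{z^s}\,\frac{\partial \bar{L}}{\partial z^s_{,\alpha}}\right).
\end{equation*}
From $\bar{L} = \sqrt{g}\bigl(L^\alpha_j z^j_{,\alpha} - H\bigr)$ and the assumption that the $L^\alpha_j$ and $H$ depend on $z$ (and at most on $q^\alpha$ but not on derivatives of $z$) one reads off $\partial\bar{L}/\partial z^s_{,\alpha} = \sqrt{g}\,L^\alpha_s$. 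Substituting this into the Noether identity, and then into the invariance condition, yields
\begin{equation*}
\hat{V}^{z^s}\,E_{z^s}(\bar{L}) + D_\alpha\!\Bigl\{\sqrt{g}\bigl(V^{q^\alpha} L + \hat{V}^{z^s} L^\alpha_s + \Lambda^\alpha\bigr)\Bigr\} = 0.
\end{equation*}

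On solutions of the multi-symplectic Euler--Lagrange system~(\ref{eq:me3}) the first term vanishes, leaving $D_\alpha\{\sqrt{g}(V^{q^\alpha}L + \hat{V}^{z^j}L^\alpha_j + \Lambda^\alpha)\} = 0$, which after dividing by $\sqrt{g}$ is exactly the claimed conservation law. The only real piece of care required is to verify the identification $\partial\bar{L}/\partial z^s_{,\alpha} = \sqrt{g}L^\alpha_s$; the $\sqrt{g}$ is a function of $q^\alpha$ only and is treated as a passive multiplicative factor in the prolongation formula, so no covariant derivative of the metric intrudes at this stage.

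The main obstacle I anticipate is bookkeeping rather than conceptual: one has to be sure that (i) the transformation of the volume element is already encoded in the passage from $L$ to $\bar{L} = \sqrt{g}\,L$, so that applying the ordinary Noether identity to $\bar{L}$ correctly reproduces the covariant divergence $(1/\sqrt{g})\partial_\alpha(\sqrt{g}\,\cdot)$, and (ii) any dependence of $L^\alpha_j$ or $H$ on $q^\alpha$ (permitted in Section 4.2) does not spoil the identity $\partial\bar{L}/\partial z^s_{,\alpha} = \sqrt{g}L^\alpha_s$. Once these points are checked, the proof is a direct transcription of the flat-metric argument of Section~4.1, with $L \to \bar{L}$ and ordinary divergence replaced by covariant divergence in the statement.
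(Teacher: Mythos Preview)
Your proposal is correct and is precisely the ``usual derivation of Noether's theorem'' that the paper invokes; the paper's own proof consists only of a citation to standard references (Bluman and Kumei 1989; Webb et al.\ 2005), so your write-up in fact supplies the details the paper omits. The key identifications $\partial\bar L/\partial z^s_{,\alpha}=\sqrt{g}\,L^\alpha_s$ and the use of the characteristic invariance condition (\ref{eq:msn14}) with $L\to\bar L$ are exactly right.
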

\begin{proof}
\ The proof follows the usual derivation of Noether's theorem (e.g. 
\cite{Bluman89} 
; \cite{Webb05b}). Note that $q^\alpha=(t,q^1,q^2,q^3)$. 
\end{proof}

The completes our discussion of the multi-symplectic system (\ref{eq:msg1}). 

\section{Multi-Symplectic MHD}

In this section we develop multi-symplectic approaches to the MHD 
equations. In Section 5.1, we develop  a multi-symplectic  form of 
the MHD equations using the magnetic induction ${\bf B}$ as the 
basic variable describing the magnetic field and by using 
Clebsch variables. This leads formally to 15 variable set of equations
in the state vector $z=({\bf u}^T,\rho,S,\mu,{\bf B}^T,\boldsymbol{\Gamma}^T, 
\lambda,\beta,\phi)^T$ in which $\mu$, $\boldsymbol{\Gamma}$, $\lambda$, 
$\beta$, $\phi$ are Clebsch variables and $\rho$, ${\bf u}$, ${\bf B}$ 
and $S$ are physical variables. In Section 5.2 we develop a similar 
multi-symplectic formulation of the MHD equations, except 
that the magnetic vector potential ${\bf A}$, in which 
$\alpha={\bf A}{\bf\cdot}d{\bf x}$ is Lie dragged with the background 
flow is used to describe the magnetic field and ${\bf B}=\nabla\times{\bf A}$ 
is the magnetic field induction. In this latter formulation 
 ${\bf Z}=({\bf u}^T,\rho, S, \mu$, 
${\bf A}^T$, $\boldsymbol{\gamma}^T, \lambda,\beta,\phi)^T$ is the state 
vector of the system and $\mu$,$\boldsymbol{\gamma}$, $\lambda$, $\beta$, 
and $\phi$ are Clebsch potentials. 
We show that there is a direct map 
between the state vector ${\bf z}$ of Section 5.1 and the state vector 
${\bf Z}$ of Section 5.2. 

\subsection*{5.1\ Advected Magnetic Flux Case} 
 
In the Clebsch variables approach, the fluid velocity is given 
 by the expression:
\begin{equation}
\rho{\bf u}=\rho\nabla\phi-\beta\nabla S-\lambda\nabla\mu
-(\nabla\times\boldsymbol{\Gamma})\times {\bf B}
-\boldsymbol{\Gamma} (\nabla{\bf\cdot}{\bf B}),  \label{eq:m21}
\end{equation}
In the standard Clebsch variable formulation (Section 3.1),
 in which $t$ is the evolution 
variable, the canonical coordinates are the physical variables 
$(\rho,S,\mu,{\bf B^T})$ and the Lagrange multipliers 
$(\phi,\beta,\lambda,\boldsymbol{\Gamma}^T)$  are the corresponding canonical momenta 
(the role of the canonical momenta and coordinates 
can be interchanged, simply by changing the sign of the Hamiltonian).
 In the multi-symplectic formulation both space 
and time can be thought of as evolution variables. 

In the multi-symplectic approach used in the present analysis, the Clebsch 
variable expansion for the fluid velocity ${\bf u}$ in (\ref{eq:m21}) 
is re-written in the form:
\begin{equation}
\beta\nabla S+\lambda\nabla\mu 
+\boldsymbol{\Gamma}(\nabla{\bf\cdot}{\bf B})
+{\bf B}{\bf\cdot}\nabla\boldsymbol{\Gamma}
-{\bf B}{\bf\cdot}(\nabla\boldsymbol{\Gamma})^T
-\rho\nabla\phi=-\rho {\bf u}\equiv -\frac{\delta\ell}{\delta{\bf u}}, 
\label{eq:m22}
\end{equation}
where
\begin{equation}
\ell=\int_V\left(\frac{1}{2}\rho |{\bf u}|^2-\varepsilon(\rho,S)
-\frac{B^2}{2\mu_0}\right)\ d^3x, \label{eq:m23}
\end{equation}
is the MHD Lagrangian without constraints. 

\begin{proposition}\label{prop5.1}
The evolution equations (\ref{eq:Clebsch5})-(\ref{eq:Clebsch9}) and the Clebsch 
variable equation (\ref{eq:m22}) 
for $-\delta\ell/\delta {\bf u}$ can be written in the 
multi-symplectic form:
\begin{equation}
\left({\sf K}^0\derv{t}+{\sf K}^1 \derv{x}+{\sf K}^2 \derv{y}
+{\sf K}^3 \derv{z}\right){\bf z}=\frac{\delta {\cal H}}{\delta {\bf z}}, 
\label{eq:m24}
\end{equation}
where ${\sf A}$ is a $15\times15$ matrix differential operator. 
In (\ref{eq:m24})
\begin{equation}
{\bf z}=\left({\bf u}^T,\rho,S,\mu,{\bf B}^T, \boldsymbol{\Gamma}^T, 
\lambda,\beta,\phi\right)^T, \label{eq:m25}
\end{equation}
is a 15-dimensional state vector for the system and the 
${\sf K}^\alpha$ ($\alpha=0,1,2,3$) are 
skew-symmetric $15\times 15$ matrices, and 
\begin{equation}
{\cal H}=-\ell\equiv -\int_V\left(\frac{1}{2}\rho |{\bf u}|^2
-\varepsilon(\rho,S)
-\frac{B^2}{2\mu_0}\right)\ d^3x=\int_V H({\bf z}) d^3x \label{eq:m26}
\end{equation}
is the multi-symplectic Hamiltonian functional for the system. 
The functional or variational derivative 
$\delta {\cal H}/\delta {z^s}=\partial H/\partial z^s$ in the present case.  
The skew-symmetric matrices ${\sf K}^{\alpha}$ satisfy 
equations of the form:
\begin{equation}
{\sf K}^\alpha_{ij} dz^i\wedge dz^j=d\omega^\alpha\quad\hbox{where}\quad 
\omega^\alpha=L^\alpha_j dz^j, \label{eq:m27}
\end{equation} 
are symplectic one-forms. For the MHD system, the one-forms $\omega^\alpha$ 
are given by (up the exterior derivative of a scalar function):
\begin{align}
\omega^0=&\phi d\rho+\beta dS+\lambda d\mu
+ \boldsymbol{\Gamma}{\bf\cdot} d{\bf B}, \label{eq:m28}\\
\omega^i=&\left[{\bf u}\left(\beta dS+\lambda d\mu+\phi d\rho\right)+\rho\phi d{\bf u}
+(\boldsymbol{\Gamma}{\bf\cdot}{\bf B})\ d{\bf u}
-{\bf B} (\boldsymbol{\Gamma}{\bf\cdot}d{\bf u})
+{\bf u} (\boldsymbol{\Gamma}{\bf\cdot}d{\bf B}) \right]^i, \label{eq:m29}\\
\equiv&\left[{\bf u}\left(\beta dS+\lambda d\mu-\rho d\phi\right)+d(\rho\phi {\bf u})
+(\boldsymbol{\Gamma}{\bf\cdot}{\bf u})d{\bf B}-(\boldsymbol{\Gamma}\times d{\bf E})\right]^i
\label{eq:m29a}
\end{align}
where $1\leq i\leq 3$ and 
\begin{equation}
{\bf E}=-{\bf u}\times{\bf B}, \label{eq:m30}
\end{equation}
is the electric field in ideal MHD.  The adjoint ${\sf A}^\dagger$ of the matrix differential operator ${\sf A}$ satisfies the equation:
\begin{equation}
\boldsymbol{\psi}^T{\bf\cdot}{\sf A}{\bf z}
=\derv{x^\alpha}\left(\boldsymbol{\psi}^T{\bf\cdot}
{\sf K}^\alpha {\bf z}\right)+{\bf z}^T{\bf\cdot} {\sf A}^\dagger 
\boldsymbol{\psi}, \label{eq:m30a}
\end{equation}
where
\begin{equation}
{\sf A}^\dagger \boldsymbol{\psi}=\derv{x^\alpha}\left({\sf K}^\alpha \boldsymbol{\psi}\right). \label{eq:m30b}
\end{equation}
Note that $\langle \boldsymbol{\psi},{\sf A}{\bf z}\rangle=\langle {\bf z},{\sf A}^\dagger \boldsymbol{\psi}\rangle$, where $\langle,\rangle$ is the 
usual inner product.  

\end{proposition}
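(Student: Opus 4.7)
The plan is to follow the explicit recipe illustrated for 1D gas dynamics in Section 2.2, now applied to the full constrained MHD Lagrangian (\ref{eq:Clebsch2}). That recipe has three steps: (i) write the constraint terms in the bilinear form $L^\alpha_{z^s}\,\partial z^s/\partial x^\alpha$ so that the coefficients $L^\alpha_{z^s}$ can be read off; (ii) assemble the one-forms $\omega^\alpha=L^\alpha_{z^s}\,dz^s$ and take their exterior derivatives to identify the skew matrices ${\sf K}^\alpha_{ij}$ via (\ref{eq:m10}); (iii) check that the resulting multi-symplectic system (\ref{eq:m5}) with $M=-\ell$ reproduces both the Clebsch evolution equations (\ref{eq:Clebsch3})--(\ref{eq:Clebsch9}) and the algebraic Clebsch relation (\ref{eq:m22}) that plays the role of the ${\bf u}$-component of $\delta{\cal H}/\delta{\bf z}$.

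For step (i), the mass, entropy and Lin-constraint terms are straightforward: $\phi(\rho_t+\nabla\!\cdot\!(\rho{\bf u}))$ contributes $L^0_\rho=\phi$, $L^i_\rho=\phi u^i$, and $L^i_{u^j}=\phi\rho\,\delta^i_j$, and similarly for $\beta$, $\lambda$. The Faraday constraint $\boldsymbol{\Gamma}\!\cdot\![{\bf B}_t-\nabla\!\times\!({\bf u}\!\times\!{\bf B})+{\bf u}\nabla\!\cdot\!{\bf B}]$ must be expanded using the vector identity $\nabla\!\times\!({\bf u}\!\times\!{\bf B})=({\bf B}\!\cdot\!\nabla){\bf u}-({\bf u}\!\cdot\!\nabla){\bf B}+{\bf u}(\nabla\!\cdot\!{\bf B})-{\bf B}(\nabla\!\cdot\!{\bf u})$ so that only first-order derivatives of ${\bf B}$ and ${\bf u}$ remain multiplied by $\boldsymbol{\Gamma}$ and algebraic combinations of ${\bf B}$, ${\bf u}$. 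Reading off the coefficients yields $L^0_{B^j}=\Gamma^j$ together with $L^i_{B^j}=u^i\Gamma^j$ and $L^i_{u^j}=(\boldsymbol{\Gamma}\!\cdot\!{\bf B})\delta^i_j-B^i\Gamma^j$, which is precisely the structure needed to produce the one-forms (\ref{eq:m28}) and (\ref{eq:m29}).

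For steps (ii) and (iii), once the $L^\alpha_{z^s}$ are tabulated, the stated $\omega^\alpha$ follow by direct contraction with $dz^s$; the equivalence of (\ref{eq:m29}) and (\ref{eq:m29a}) is an identity using $d(\rho\phi{\bf u})=\rho\phi\,d{\bf u}+{\bf u}(\phi\,d\rho+\rho\,d\phi)$ and ${\bf E}=-{\bf u}\!\times\!{\bf B}$. Taking $d\omega^\alpha$ and comparing with $\tfrac12{\sf K}^\alpha_{ij}dz^i\wedge dz^j$ then determines the ${\sf K}^\alpha$ entry by entry. To close the proof one substitutes these ${\sf K}^\alpha$ and $H({\bf z})=\tfrac12\rho|{\bf u}|^2-\varepsilon(\rho,S)-B^2/(2\mu_0)$ into (\ref{eq:m24}) and verifies componentwise that the $\rho$-, $S$-, $\mu$-, ${\bf B}$-rows reproduce the constraint equations (\ref{eq:Clebsch5}), the $\phi$-, $\beta$-, $\lambda$-, $\boldsymbol{\Gamma}$-rows reproduce (\ref{eq:Clebsch6})--(\ref{eq:Clebsch9}), and the ${\bf u}$-row reproduces (\ref{eq:m22}). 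Proposition \ref{prop:4.1} guarantees internal consistency, so this check amounts to matching coefficients.

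The adjoint identity (\ref{eq:m30a})--(\ref{eq:m30b}) is the easiest piece: using skew-symmetry $({\sf K}^\alpha)^T=-{\sf K}^\alpha$ and the Leibniz rule,
\begin{equation}
\boldsymbol{\psi}^T{\sf K}^\alpha{\bf z}_{,\alpha}=\partial_\alpha(\boldsymbol{\psi}^T{\sf K}^\alpha{\bf z})-\partial_\alpha(\boldsymbol{\psi}^T{\sf K}^\alpha){\bf z}={\partial_\alpha}(\boldsymbol{\psi}^T{\sf K}^\alpha{\bf z})+{\bf z}^T\partial_\alpha({\sf K}^\alpha\boldsymbol{\psi}),
\end{equation}
so ${\sf A}^\dagger\boldsymbol{\psi}=\partial_\alpha({\sf K}^\alpha\boldsymbol{\psi})$ as claimed. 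The main obstacle is the bookkeeping in step (i) for the Faraday constraint: the curl term does not sit naturally in the canonical bilinear form, and one must be careful that retaining the ${\bf u}\nabla\!\cdot\!{\bf B}$ term (as emphasized in the introduction) does not spoil skew-symmetry of the ${\sf K}^\alpha$. A related subtlety is that $\omega^\alpha$ is only defined up to addition of an exact differential $d\Phi^\alpha$, which drops out under $d\omega^\alpha$; this gauge freedom is the reason the statement says ``up to the exterior derivative of a scalar function'' and must be tracked when comparing (\ref{eq:m29}) with (\ref{eq:m29a}).
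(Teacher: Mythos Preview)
Your approach is correct and is in fact the route the paper itself calls the ``more elegant'' one: immediately after the formal proof, Section 5.1.1 (``Exterior differential forms approach'') carries out precisely your steps (i)--(ii), reading off the $L^\alpha_{z^s}$ from the constrained Lagrangian (\ref{eq:Clebsch2}), assembling the $\omega^\alpha$, and computing $d\omega^\alpha$ to obtain the ${\sf K}^\alpha$ (equations (\ref{eq:m42a})--(\ref{eq:m42k})). The paper's \emph{formal} proof, however, runs in the opposite direction: it first rewrites (\ref{eq:Clebsch5})--(\ref{eq:Clebsch9}) and (\ref{eq:m22}) with $H_{\bf z}$ on the right-hand side (display (\ref{eq:m31})), then exhibits the full $15\times 15$ matrix differential operator ${\sf A}$ explicitly (display (\ref{eq:m33})), reads the ${\sf K}^\alpha_{ij}$ off from ${\sf A}$ as the coefficients of $\partial/\partial x^\alpha$ (displays (\ref{eq:m36})--(\ref{eq:m40})), and only then solves (\ref{eq:m41}) for the $\omega^\alpha$. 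Your route avoids ever writing down the $15\times 15$ operator and makes the skew-symmetry of ${\sf K}^\alpha$ automatic from $d\omega^\alpha$; the paper's route has the advantage that the componentwise verification of step (iii) is not needed, since the ${\sf K}^\alpha$ are \emph{defined} by the equations themselves.

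One small slip: in your step (iii) you write $H({\bf z})=\tfrac12\rho|{\bf u}|^2-\varepsilon-B^2/(2\mu_0)$, but by (\ref{eq:m26}) the multi-symplectic Hamiltonian is the \emph{negative} of this, $H=-\ell$. This sign matters when you check, e.g., that the ${\bf u}$-row gives $-\rho{\bf u}$ on the right.
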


\begin{proof}
To derive (\ref{eq:m24})-(\ref{eq:m29}) first note that the Clebsch variable 
equation (\ref{eq:m22}) for $\rho {\bf u}$ and the evolution equations 
(\ref{eq:Clebsch5})-(\ref{eq:Clebsch9}) can be written in the form:
\begin{align}
&\beta\nabla S+\lambda\nabla\mu+\boldsymbol{\Gamma}(\nabla{\bf\cdot}{\bf B})+{\bf B}{\bf\cdot}\nabla\boldsymbol{\Gamma}-{\bf B}{\bf\cdot}
(\nabla\boldsymbol{\Gamma})^T-\rho\nabla\phi=H_{\bf u}, \nonumber\\
&-D_t\phi=H_\rho,\quad -\beta\nabla{\bf\cdot}{\bf u}-D_t\beta=H_S, 
\quad -\lambda\nabla{\bf\cdot}{\bf u}-D_t\lambda=H_\mu, \nonumber\\
&-\boldsymbol{\Gamma}{\bf\cdot}(\nabla{\bf u})^T
-D_t\boldsymbol{\Gamma}=H_{\bf B}, \quad 
{\bf B}(\nabla{\bf\cdot}{\bf u})
-{\bf B}{\bf\cdot}\nabla {\bf u}
+D_t {\bf B}=H_{\boldsymbol{\Gamma}}, \nonumber\\
&D_t\mu=H_\lambda,\quad D_t S=H_\beta,\quad 
\rho\nabla{\bf\cdot}{\bf u}+D_t\rho=H_\phi, \label{eq:m31}
\end{align}
where $D_t=\partial_t+{\bf u}{\bf\cdot}\nabla$ is the Lagrangian time 
derivative and the multi-symplectic Hamiltonian is given by (\ref{eq:m26}). 
In (\ref{eq:m31}) we use the notation $H_\psi\equiv \partial H/\partial\psi$. 

To obtain the matrices ${\sf K}^\alpha$ in (\ref{eq:m24}) write (\ref{eq:m31}) 
in the matrix form:
\begin{equation}
{\sf A} {\bf z}=H_{\bf z}\quad\hbox{where}
\quad {\sf A}={\sf K}^\alpha\derv{x^\alpha},  \label{eq:m32}
\end{equation}
and $(x^0,x^1,x^2,x^3)\equiv (t,x,y,z)$. Note that the equations  involving 
$H_{\bf u}$, $H_{\bf B}$ and $H_{\boldsymbol{\Gamma}}$ each consist of three 
equations, but the other equations involving $H_\rho$, $H_S$, $H_\mu$, 
$H_\lambda$, $H_\beta$ and $H_\phi$ are single equations. The matrix 
differential operator ${\sf A}$
in (\ref{eq:m32}) has the form:
\begin{equation}
{\sf A}=\left(\begin{array}{ccccccccc}
{\sf O}_{3\times 3}&0&\beta\nabla&\lambda\nabla&
\boldsymbol{\Gamma}\nabla{\bf\cdot}&{\bf V}_{\bf B}
&0&0&-\rho\nabla\\
{\sf O}_{1\times 3}&0&0&0&{\sf O}_{1\times 3}&{\sf O}_{1\times 3}&0&0&-D_t\\
-\beta\nabla{\bf\cdot}&0&0&0&{\sf O}_{1\times 3}&{\sf O}_{1\times 3}&0&-D_t&0\\
-\lambda\nabla{\bf\cdot}&0&0&0&{\sf O}_{1\times 3}&{\sf O}_{1\times 3}&
-D_t&0&0\\
-\boldsymbol{\Gamma}{\bf\cdot}(\nabla\circ)^T&0&0&0&{\sf O}_{3\times 3}&
-{\sf I}_{3\times 3}D_t&0&0&0\\
-V_{\bf B}^\dagger&0&0&0&{\sf I}_{3\times 3}D_t
&{\sf O}_{3\times 3}&0&0&0\\
{\sf O}_{1\times 3}&0&0&D_t&{\sf O}_{1\times 3}&{\sf O}_{1\times 3}&0&0&0\\
{\sf O}_{1\times 3}&0&D_t&0&{\sf O}_{1\times 3}&{\sf O}_{1\times 3}&0&0&0\\
\rho\nabla{\bf\cdot}&D_t&0&0&{\sf O}_{1\times 3}&{\sf O}_{1\times 3}&0&0&0
\end{array}
\right). \label{eq:m33}
\end{equation}
where
\begin{equation}
V_{\bf B}={\bf B}{\bf\cdot}\nabla\circ-{\bf B}{\bf\cdot}(\nabla\circ)^T, 
\quad V_{\bf B}^\dagger={\bf B}{\bf\cdot}(\nabla\circ)^T
-{\bf B}\nabla{\bf\cdot}\circ. \label{eq:m34}
\end{equation}
In (\ref{eq:m33}) ${\sf O}_{3\times 3}$ is the zero $3\times 3$ matrix, 
${\sf I}_{3\times 3}$ is the unit $3\times 3$ unit matrix and 
${\sf O}_{1\times 3}$ is the $1\times 3$ zero matrix. The operator:
\begin{equation}
D_t=\derv{t}+{\bf u}{\bf\cdot}\nabla, \label{eq:m35}
\end{equation}
is the Lagrangian time derivative following the flow. Note that 
$V_{\bf B}^\dagger$ is the adjoint of the operator $V_{\bf B}$ 
with respect to the usual inner product $\langle f,g\rangle =\int fg d^3x$
for real functions (further discussion of $V_{\bf B}^\dagger$ is given 
in Appendix B).  

Using (\ref{eq:m33}) the skew symmetric matrices ${\sf K}^{\alpha}_{ij}$ 
have the form:
\begin{equation}
{\sf K}^{\alpha}_{ij}={\sf k}^{\alpha}_{[i,j]}={\sf k}^{\alpha}_{ij}
-{\sf k}^{\alpha}_{ji}. 
\label{eq:m36}
\end{equation}
In particular:
\begin{equation}
{{\sf k}}^{0}_{ij}={\delta}^i_{15} {\delta}^j_{4}
+{\delta}^i_{14} {\delta}^j_{5}
+{\delta}^i_{13} {\delta}^j_{6}+{\delta}^i_{10} {\delta}^j_{7}
+{\delta}^i_{11} 
{\delta}^j_{8}+{\delta}^i_{12} {\delta}^j_{9}. \label{eq:m37}
\end{equation}
Similarly:
\begin{align}
{\sf k}^1_{ij}=& \Gamma^x\delta^i_1\delta^j_7+\Gamma^y\delta^i_2\delta^j_7
+\Gamma^z\delta^i_3\delta^j_7\nonumber\\
& +B^x\left(\delta^i_2\delta^j_{11}+\delta^i_3\delta^j_{12}\right)
+B^y\delta^i_{11}\delta^j_1+B^z\delta^i_{12}\delta^j_{1}
+u^x\left(\delta^i_{10}\delta^j_7+\delta^i_{11}\delta^j_8
+\delta^i_{12}\delta^j_9\right)\nonumber\\
&+\left\{ u^x\left(\delta^i_{14}\delta^j_5+\delta^i_{13}\delta^j_6+
\delta^i_{15}\delta^j_4\right)
+\beta\delta^i_1\delta^j_5+\lambda\delta^i_1\delta^j_6
-\rho\delta^i_1\delta^j_{15}\right\}, \label{eq:m38}\\
{\sf k}^2_{ij}=&\Gamma^x \delta^i_1\delta^j_8+\Gamma^y\delta^i_2\delta^j_8
+\Gamma^z\delta^i_3\delta^j_8\nonumber\\
&+B^x\delta^i_{10}\delta^j_2
+B^y\left(\delta^i_1\delta^j_{10}+\delta^i_3 \delta^j_{12}\right)
+B^z\delta^i_{12}\delta^j_2
+u^y\left(\delta^i_{10}\delta^j_7+\delta^i_{11}\delta^j_8
+\delta^i_{12}\delta^j_9\right)\nonumber\\
&+\left\{u^y\left(\delta^i_{13}\delta^j_6+\delta^i_{14}\delta^j_5+\delta^i_{15}\delta^j_4\right) 
+\beta\delta^i_2\delta^j_5+\lambda\delta^i_2\delta^j_6
-\rho\delta^i_2\delta^j_{15}\right\}, \label{eq:m39}\\
k^3_{ij}=&\Gamma^x \delta^i_1\delta^j_9+\Gamma^y\delta^i_2\delta^j_9
+\Gamma^z\delta^i_3\delta^j_9\nonumber\\
&+B^x\delta^i_{10}\delta^j_3+B^y\delta^i_{11}\delta^j_3
+B^z\left(\delta^i_1\delta^j_{10}+\delta^i_2\delta^j_{11}\right)
+u^z\left(\delta^i_{10}\delta^j_7+\delta^i_{11}\delta^j_8
+\delta^i_{12}\delta^j_9\right)\nonumber\\
&+\left\{u^z\left(\delta^i_{13}\delta^j_6+\delta^i_{14}\delta^j_5
+\delta^i_{15}\delta^j_4\right) 
+\beta\delta^i_3\delta^j_5+\lambda\delta^i_3\delta^j_6
-\rho\delta^i_3\delta^j_{15}\right\}. \label{eq:m40}
\end{align}

The one-form solutions for $\omega^\alpha=L_j^\alpha dz^j$ 
in (\ref{eq:m28})-(\ref{eq:m29}) are related to the ${\sf K}^\alpha_{jk}$ by 
(\ref{eq:m10}), i.e. 
\begin{equation}
{\sf K}_{jk}^\alpha=\deriv{L^\alpha_k}{z^j}-\deriv{L_j^\alpha}{z^k}. 
\label{eq:m41}
\end{equation}
Note that the solution of (\ref{eq:m41}) for the $L_j^\alpha$ 
are not unique because 
$\omega^\alpha=L_j^\alpha dz^j+d\Phi({\bf z})^\alpha$ will also give the same 
${\sf K}^\alpha_{jk}$. 

As an example we find $\omega^0=L^0_j dz^j$ is given by:
\begin{align}
\omega^0=&\left(z^{15}dz^4+z^{14}dz^5+z^{13} dz^6\right) 
+\left\{z^{10}dz^7+z^{11}dz^8+z^{12}dz^9\right\}\nonumber\\
\equiv& \phi d\rho+\beta dS+\lambda d\mu
+\boldsymbol{\Gamma}{\bf\cdot}d{\bf B}.
\label{eq:m42} 
\end{align}
Similarly, we obtain (\ref{eq:m29}) for $\omega^i$. 

\end{proof}
\subsubsection*{5.1.1\ Exterior differential forms approach}
Although the above derivation of the MHD multi-symplectic structure is straightforward, there 
is some ambiguity in the one forms $\omega^0$ and $\omega^i$ in (\ref{eq:m28})-(\ref{eq:m29}) 
since one can always add a perfect differential to these forms. A more elegant way to derive the 
above results of Proposition \ref{prop5.1}  is to use differential forms to deduce 
the skew symmetric matrices ${\sf K}^\alpha$ and the one forms $\omega^\alpha$ ($\alpha=0,1,2,3$)
describing the system. This approach is described below. 
From (\ref{eq:Clebsch2}) the MHD Lagrangian may be written in the form:
\beqn
L=\frac{1}{2}\rho u^2-\varepsilon(\rho,S)-\frac{B^2}{2\mu} 
+L^{\alpha}_{z^s} \deriv{z^s}{x^\alpha}, 
\label{eq:m42a}
\eeqn
where
\begin{align}
L^{\alpha}_{z^s} \deriv{z^s}{x^\alpha}=&\phi\left(\deriv{\rho}{t}+\nabla{\bf\cdot}(\rho {\bf u})\right)
+\beta\left(\deriv{S}{t}+{\bf u}{\bf\cdot}\nabla S\right)
+\lambda\left(\deriv{\mu}{t}+{\bf u\cdot}\nabla\mu\right) \nonumber\\
&+\boldsymbol{\Gamma}{\bf\cdot}\left(\deriv{\bf B}{t}-\nabla\times({\bf u}\times{\bf B})
+{\bf u}(\nabla{\bf\cdot B})\right). \label{eq:m42b}
\end{align}
In particular:
\begin{equation}
L^0_{z^s} \deriv{z^s}{x^0}=\phi\rho_t+\beta S_t+\lambda\mu_t+\boldsymbol{\Gamma}{\bf\cdot}{\bf B}_t, 
\label{eq:m42c}
\end{equation}
and hence 
\begin{align}
\omega^0=&L^0_\rho d\rho+L_S^0 dS+L_\mu^0 d\mu+L^0_{\bf B}{\bf\cdot} d{\bf B}, \nonumber\\
\equiv&\phi d\rho +\beta dS+\lambda d\mu+\boldsymbol{\Gamma}{\bf\cdot}d{\bf B}, \label{eq:m42d}
\end{align}
(note (\ref{eq:m42d}) define the non-zero $L^0_{z^s}$). The result (\ref{eq:m42d}) for $\omega^0$ 
is the same as (\ref{eq:m28}). 
Taking the exterior derivative of (\ref{eq:m42d}) gives:
\beqn
d\omega^0=d\phi\wedge d\rho+d\beta\wedge dS+d\lambda\wedge d\mu+d\Gamma_s\wedge dB^s
\equiv \frac{1}{2}K^0_{z^s,z^p}dz^s\wedge dz^p. \label{eq:m42e}
\eeqn
Hence
\beqn
{\sf K}^0_{\phi,\rho}={\sf K}^0_{\beta,S}={\sf K}^0_{\lambda,\mu}={\sf K}^0_{\Gamma_s,B^s}=1
\quad (s=1,2,3). \label{eq:m42f}
\eeqn
Thus we obtain the skew symmetric matrix ${\sf K}^0_{ij}$ given by (\ref{eq:m36}) and (\ref{eq:m37}). 

A similar calculation gives:
\begin{align}
L^k_{z^s} \deriv{z^s}{x^k}=&\phi\left(\rho\nabla_k u^k+u^k\nabla_k\rho\right) 
+\beta \left(u^k\nabla_k S\right)+\lambda u^k \nabla_k\mu\nonumber\\
&+\Gamma_s\left(u^k\nabla_k B^s+B^s\nabla_k u^k-B^k\nabla_k u^s\right), \label{eq:m42g}
\end{align}
from which we read off:
\begin{align}
L^k_\rho=&\phi u^k, \quad L^k_{u^i}=(\rho\phi+\boldsymbol{\Gamma}{\bf\cdot}{\bf B}) \delta^k_i
-\Gamma_i B^k, 
\nonumber\\
L^k_S=&\beta u^k,\quad L^k_\mu=\lambda u^k, \quad L^k_{B^i}=\Gamma_i u^k, \label{eq:m42h}
\end{align}
Using (\ref{eq:m42h}) we obtain:
\begin{equation}
\omega^k=L^k_{z^s} dz^s=\left\{{\bf u}[\phi\ d\rho+\beta dS+\lambda d\mu]  +\rho \phi d{\bf u}
+(\boldsymbol{\Gamma}{\bf\cdot}{\bf B})d{\bf u}-{\bf B}(\boldsymbol{\Gamma}{\bf\cdot}d{\bf u})
+{\bf u}(\boldsymbol{\Gamma}{\bf\cdot}d{\bf B})\right\}^k, \label{eq:m42i}
\end{equation}
which is the result (\ref{eq:m29}) for $\omega^k$. 
Taking the exterior derivative of (\ref{eq:m42i}) gives:
\begin{align}
d\omega^k=&du^k\wedge\left(\beta dS+\lambda d\mu-\rho d\phi-B^s d\Gamma_s\right)\nonumber\\
&+u^k \left(d\phi\wedge d\rho+d\beta\wedge dS
+d\lambda\wedge d\mu+d\Gamma_s\wedge dB^s\right)\nonumber\\
&-\Gamma_s dB^k\wedge du^s- B^k d\Gamma_s\wedge du^s. \label{eq:m42j}
\end{align}
From (\ref{eq:m42j}) we obtain:
\begin{align}
&{\sf K}^k_{u^k,S}=\beta,\quad {\sf K}^k_{u^k,\mu}=\lambda,
\quad {\sf K}^k_{u^k,\phi}=-\rho,\nonumber\\
&{\sf K}^k_{\Gamma_s,B^s}= 
{\sf K}^k_{\phi,\rho}={\sf K}^k_{\beta,S}={\sf K}^k_{\lambda,\mu}=u^k, \quad
{\sf K}^k_{u^s,B^k}=\Gamma_s, \nonumber\\
&{\sf K}^k_{u^k,\Gamma_s}=-B^s,\quad {\sf K}^k_{u^s,\Gamma_s}=B^k\quad (k\neq s). \label{eq:m42k}
\end{align}
By using the state vector ${\bf z}=\left({\bf u}^T,\rho,S,\mu,{\bf B}^T,\boldsymbol{\Gamma}^T,
\lambda,\beta,\phi\right)^T$ and (\ref{eq:m42k}) gives the results (\ref{eq:m38})-(\ref{eq:m40}) 
for the ${\sf K}^k_{ij}$ 
($k=1,2,3$).

\subsubsection*{5.1.2\ Multi-Symplectic Conservation Laws}
\begin{proposition}\label{prop5.2}
Using the results (\ref{eq:m28}) and (\ref{eq:m29a}) for the one-forms $\omega^0$  and 
$\omega^k$ ($k=1,2,3$), the multi-symplectic conservation law (\ref{eq:m18}) for $\beta=0$  
reduces to:
\begin{equation}
\deriv{D}{t}+\nabla{\bf\cdot} {\bf F}=0, \label{eq:m47}
\end{equation}
where
\begin{align}
D=&\left(\frac{1}{2}\rho |{\bf u}|^2
+\varepsilon(\rho,S)+\frac{B^2}{2\mu_0}\right)
-\nabla{\bf\cdot}({\bf E}\times\boldsymbol{\Gamma}+\rho\phi {\bf u}), \nonumber\\
{\bf F}=&{\bf u}\left(\frac{1}{2}\rho |{\bf u}|^2+\varepsilon(\rho,S)+p\right)
+\frac{{\bf E}\times{\bf B}}{\mu_0}
+\derv{t}({\bf E}\times\boldsymbol{\Gamma}+\rho\phi{\bf u}) 
-\nabla\times\left[(\boldsymbol{\Gamma}{\bf\cdot}{\bf u}){\bf E}
\right]. \label{eq:m48}
\end{align}
Because of null divergence terms in (\ref{eq:m48}), the conservation law 
(\ref{eq:m47}) reduces to the MHD energy conservation equation:
\begin{equation}
\derv{t}\left(\frac{1}{2}\rho |{\bf u}|^2+\varepsilon(\rho,S)
+\frac{B^2}{2\mu_0}\right) 
+\nabla{\bf\cdot} 
\left({\bf u}\left(\frac{1}{2}\rho |{\bf u}|^2
+\varepsilon(\rho,S)+p\right)
+\frac{{\bf E}\times{\bf B}}{\mu_0}\right)=0. \label{eq:m49}
\end{equation}

Similarly, the multi-symplectic conservation law (\ref{eq:m18}) for $\beta=k$ 
gives a conservation law of the form (\ref{eq:m47}) but with 
\begin{align}
D\equiv D^k=&-\rho u^k+\nabla_k(\rho\phi+\boldsymbol{\Gamma}{\bf\cdot}{\bf B})
-\nabla{\bf\cdot}(\Gamma^k {\bf B}), \nonumber\\
F^i\equiv F^{ik}=&-\left\{\rho u^iu^k
+\left(p+\frac{B^2}{2\mu_0}\right)\delta^{ik}
-\frac{B^iB^k}{\mu_0}\right\}\nonumber\\
&+\left[-\derv{t}(\rho\phi+\boldsymbol{\Gamma}{\bf\cdot}{\bf B}) \delta^{ik} 
+\derv{t}\left(\Gamma^k B^i\right)\right]\nonumber\\
&+\nabla\times\left(\Gamma^k{\bf E}\right)^i
+\nabla_k[\boldsymbol{\Gamma}{\bf\cdot}{\bf B} u^i] 
-\nabla{\bf\cdot}\left[\boldsymbol{\Gamma}{\bf\cdot}{\bf B}{\bf u}\right]
\delta^{ik}. \label{eq:m50}
\end{align}
The conservation law (\ref{eq:m47}) reduces to:
\begin{equation}
-\left\{\derv{t}(\rho {\bf u})
+\nabla{\bf\cdot}\left[\rho {\bf u}\otimes {\bf u}
+\left(p+\frac{B^2}{2\mu_0}\right)
\sf{I}-\frac{{\bf B}\otimes{\bf B}}{\mu_0}\right]\right\}^k=0, \label{eq:m51}
\end{equation}
i.e., the conservation law reduces to the MHD momentum conservation equation 
in the $x^k$- direction. 
\end{proposition}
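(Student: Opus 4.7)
The plan is to instantiate the general multi-symplectic energy--momentum identity (\ref{eq:m18}) with the explicit one-forms $\omega^0$ and $\omega^k$ from Proposition \ref{prop5.1} and then reorganise the resulting expression into the Eulerian conservation laws using the evolution equations (\ref{eq:Clebsch5})--(\ref{eq:Clebsch9}) together with the Clebsch velocity representation (\ref{eq:m21}). Proposition \ref{prop:4.1} already guarantees that (\ref{eq:m18}) holds on solutions, so the work is purely algebraic: identify the Clebsch-laden density and flux with the physical energy or momentum flux plus a collection of null divergences.

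For $\beta=0$ I would read off $L^0_\rho=\phi$, $L^0_S=\beta$, $L^0_\mu=\lambda$, $L^0_{{\bf B}}=\boldsymbol{\Gamma}$ from $\omega^0$ and the analogous $L^i_j$ from $\omega^i$, then form $D=L^0_j z^j_{,t}-L$ and $F^i=L^i_j z^j_{,t}$. Because the constrained Lagrangian is $L=\ell+L^\alpha_j z^j_{,\alpha}$ (cf.\ (\ref{eq:m42a})), we obtain $D=-\ell-L^k_j z^j_{,k}$, so the magnetic and internal energy contributions $B^2/(2\mu_0)+\varepsilon$ come directly from $-\ell$. The kinetic energy $\tfrac12\rho u^2$ is recovered by dotting the Clebsch expansion (\ref{eq:m21}) with ${\bf u}$ and recognising the resulting combination inside $-L^k_j z^j_{,k}$, up to the divergence $-\nabla\cdot({\bf E}\times\boldsymbol{\Gamma}+\rho\phi{\bf u})$ listed in (\ref{eq:m48}). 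A parallel rearrangement of $F^i$, using Bernoulli's equation (\ref{eq:Clebsch6}) for $\phi_t$ and Faraday's equation to eliminate ${\bf B}_t$, brings out the enthalpy flux $\rho{\bf u}(\tfrac12 u^2+h)$ and the Poynting flux $({\bf E}\times{\bf B})/\mu_0$, with the residue assembling into the $\partial_t(\cdots)-\nabla\times(\cdots)$ pieces displayed in (\ref{eq:m48}). Since $\partial_t\nabla\cdot{\bf X}=\nabla\cdot\partial_t{\bf X}$ and $\nabla\cdot\nabla\times{\bf Y}=0$, these null terms cancel in $\partial_t D+\nabla\cdot{\bf F}$ and (\ref{eq:m49}) follows.

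For $\beta=k$ the procedure is structurally identical but combinatorially heavier. The key step is to invoke the momentum-map identity (\ref{eq:m22}) with every $\nabla$ replaced by $\partial_k$: this identifies the linear combination $L^0_j z^j_{,k}=\phi\partial_k\rho+\beta\partial_k S+\lambda\partial_k\mu+\boldsymbol{\Gamma}\cdot\partial_k{\bf B}$ with $\rho u^k$ plus a total divergence, converting the multi-symplectic density into the physical momentum density modulo the null terms $\nabla_k(\rho\phi+\boldsymbol{\Gamma}\cdot{\bf B})-\nabla\cdot(\Gamma^k{\bf B})$ recorded in (\ref{eq:m50}). The Maxwell stress $-B^iB^k/\mu_0+(B^2/2\mu_0)\delta^{ik}$ emerges from the $\boldsymbol{\Gamma}\cdot\partial_k{\bf B}$ and ${\bf B}(\boldsymbol{\Gamma}\cdot\partial_k{\bf u})$ contributions in $L^i_j z^j_{,k}$ after substituting the evolution equation (\ref{eq:Clebsch9}) for $\boldsymbol{\Gamma}$, whose source term is $-{\bf B}/\mu_0$; the gas pressure arises through $p=\rho\varepsilon_\rho-\varepsilon$ from (\ref{eq:2.9}) combined with (\ref{eq:2.10}).

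The principal obstacle is the sheer algebraic bulk --- roughly a dozen Clebsch-variable terms in each of $D$ and ${\bf F}$ must be shown to organise into perfect $\partial_t(\cdots)+\nabla\cdot(\cdots)$ groupings. I would manage this with two recurring bookkeeping moves: the product rule $\partial_t(\phi\rho)+\nabla\cdot(\phi\rho{\bf u})=\rho(\phi_t+{\bf u}\cdot\nabla\phi)+\phi[\rho_t+\nabla\cdot(\rho{\bf u})]$ (and its analogues for the pairs $(\beta,S)$, $(\lambda,\mu)$, $(\boldsymbol{\Gamma},{\bf B})$), together with the vector identity $\nabla\cdot({\bf E}\times\boldsymbol{\Gamma})=\boldsymbol{\Gamma}\cdot(\nabla\times{\bf E})-{\bf E}\cdot(\nabla\times\boldsymbol{\Gamma})$ applied to ${\bf E}=-{\bf u}\times{\bf B}$. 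Once every Clebsch quantity is tucked inside such a perfect-divergence grouping, the Eulerian conservation laws (\ref{eq:m49}) and (\ref{eq:m51}) are all that remain.
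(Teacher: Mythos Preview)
Your proposal is correct and follows essentially the same approach as the paper: read off the $L^\alpha_j$ from the one-forms $\omega^0$ and $\omega^i$, substitute into the general multi-symplectic conservation law (\ref{eq:m18}), and use the Clebsch evolution equations together with the velocity representation (\ref{eq:m21}) to reorganise the result into the physical energy and momentum conservation laws plus null divergences. The paper's proof is terser---it records the key intermediate formulae $L^0_j z^j_{,\alpha}$, $L^i_j z^j_{,\alpha}$ (via the ${\bf E}$-form (\ref{eq:m29a}) of $\omega^i$) and the simplified $L=p-B^2/(2\mu_0)+\partial_t(\rho\phi)+\nabla{\bf\cdot}(\rho\phi{\bf u})$, then asserts the conclusion---whereas you spell out the product-rule and vector-identity bookkeeping that accomplishes the reorganisation; but the strategy is the same.
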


\begin{proof}
The multi-symplectic Hamiltonian density $H$, and the 1-forms 
$\omega^\alpha=\Gamma^\alpha_j dz^j$ from (\ref{eq:m28}) 
and (\ref{eq:m29}) give:
\begin{align}
L^0_j z^j_{,\alpha}=&\phi\rho_{,\alpha}+\beta S_{,\alpha}
+\lambda\mu_{,\alpha}+\Gamma_s B^s_{,\alpha}, \nonumber\\
L^i_j z^j_{,\alpha}=&u^i\left[\beta S_{,\alpha}+\lambda\mu_{,\alpha}
-\rho\phi_{,\alpha}\right]
+\boldsymbol{\Gamma}{\bf\cdot}{\bf u} B^i_{,\alpha}+(\rho\phi u^i)_{,\alpha}
-\epsilon_{ijk} \Gamma^j E^k_{,\alpha}\quad (\alpha=0,1,2,3).  
\label{eq:m52}
\end{align}
Using (\ref{eq:m52}) we obtain:
\begin{align}
L=&L^\alpha_j z^j_{,\alpha}-H\equiv p-\frac{B^2}{2\mu_0}+\derv{t}(\rho\phi)
+\nabla{\bf\cdot}(\rho\phi {\bf u}), 
\nonumber\\ 
H=&-\left(\frac{1}{2}\rho |{\bf u}|^2-\varepsilon(\rho,S)
-\frac{B^2}{2\mu_0}\right).
\label{eq:m53}
\end{align}
Using the results (\ref{eq:m52})-(\ref{eq:m53}) in the symplectic 
conservation law (\ref{eq:m18}) for $\beta=0$ and $\beta=k$ gives the energy 
and momentum conservation laws  (\ref{eq:m49}) and (\ref{eq:m51}).
\end{proof}

\begin{proposition}\label{prop5.3}
{\bf The symplecticity or structural conservation laws (\ref{eq:m19}) 
for MHD and gas dynamics can be written in the form:
\begin{equation}
D_\alpha\left(F^\alpha_{ab}\right)=0,\quad a<b \label{eq:m54}
\end{equation}
where
\begin{equation}
F^\alpha_{ab}={\bf z}^T_{,a} {\sf K}^\alpha {\bf z}_{,b}.
\label{eq:m55}
\end{equation}
 $F^\alpha_{ab}$ can be calculated by noting that:
\begin{equation}
d\omega^\alpha=F^\alpha_{ab} dx^a\otimes dx^b\quad\hbox{where}\quad a<b.
 \label{eq:m56}
\end{equation}
Using (\ref{eq:m42d}) and (\ref{eq:m42i}) for $\omega^0$ and $\omega^i$ 
($i=1,2,3$) and calculating  $d\omega^0$ and $d\omega^i$ gives the 
formulae:
\begin{align}
F^0_{ab}=&\frac{\partial(\phi,\rho)}{\partial(x^a,x^b)}
+\frac{\partial(\beta,S)}{\partial(x^a,x^b)}
+\frac{\partial(\lambda,\mu)}{\partial(x^a,x^b)}
+\frac{\partial(\Gamma_s,B^s)}{\partial(x^a,x^b)}, \label{eq:m57}\\
F^i_{ab}=&\frac{\partial(\phi u^i,\rho)}{\partial(x^a,x^b)}
+\frac{\partial(\beta u^i,S)}{\partial(x^a,x^b)}
+\frac{\partial(\lambda u^i,\mu)}{\partial(x^a,x^b)}\nonumber\\
&+\frac{\partial(\rho\phi,u^i)}{\partial(x^a,x^b)}
+ \frac{\partial(u^i\Gamma_s,B^s)}{\partial(x^a,x^b)}
-\frac{\partial(B^i\Gamma_s,u^s)}{\partial(x^a,x^b)}. \label{eq:m58}
\end{align}
}
\end{proposition}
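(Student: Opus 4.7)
The plan is to derive the symplecticity conservation law as a direct consequence of the closure of $\kappa^\alpha = d\omega^\alpha$ already established in Section 4 (see (\ref{eq:m7}) and (\ref{eq:m15})), and then compute the explicit coefficient functions $F^\alpha_{ab}$ by pulling back the 2-forms $d\omega^\alpha$ to the base manifold. The underlying abstract identity $D_\alpha \kappa^\alpha = 0$ is not in question, so the real content of the proposition lies in (a) the quadratic formula (\ref{eq:m55}) and (b) the Jacobian-form expressions (\ref{eq:m57})--(\ref{eq:m58}) specialised to the MHD one-forms $\omega^0$ and $\omega^i$.

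For part (a) I would start from $\kappa^\alpha = \tfrac{1}{2}{\sf K}^\alpha_{ij} dz^i\wedge dz^j$ and section the 2-form by the pullback $dz^i \mapsto z^i_{,a}\,dx^a$. Exploiting the antisymmetry of ${\sf K}^\alpha_{ij}$ in $(i,j)$ together with the antisymmetry of $dx^a\wedge dx^b$ in $(a,b)$ collapses the $1/2$ and yields $\kappa^\alpha = \sum_{a<b}({\sf K}^\alpha_{ij} z^i_{,a} z^j_{,b})\,dx^a\wedge dx^b$, which identifies $F^\alpha_{ab}$ with the quadratic form ${\bf z}^T_{,a}\,{\sf K}^\alpha\,{\bf z}_{,b}$. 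The component conservation laws $D_\alpha F^\alpha_{ab}=0$ for each pair $a<b$ then follow directly from $D_\alpha \kappa^\alpha=0$ because the $dx^a\wedge dx^b$ with $a<b$ form a basis of the 2-form fibre.

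For part (b) I would take the exterior derivatives of $\omega^0$ in (\ref{eq:m42d}) and $\omega^i$ in (\ref{eq:m42i}). For $\omega^0$ each summand is of the form $f\,dg$ whose exterior derivative is $df\wedge dg$, and the pullback identity $df\wedge dg = \sum_{a<b}\partial(f,g)/\partial(x^a,x^b)\,dx^a\wedge dx^b$ produces the four Jacobians of (\ref{eq:m57}) immediately. For $\omega^i$ the raw calculation is already partially recorded in (\ref{eq:m42j}): it yields cross terms $du^i\wedge(\beta\,dS+\lambda\,d\mu-\rho\,d\phi-B^s d\Gamma_s)$, a block $u^i\,d\omega^0$ multiplying the four basic 2-forms already computed, and the two magnetic pieces $-\Gamma_s\,dB^i\wedge du^s-B^i\,d\Gamma_s\wedge du^s$. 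The key algebraic step is the Leibniz identity $\partial(fh,g)/\partial(x^a,x^b) = h\,\partial(f,g)/\partial(x^a,x^b)+f\,\partial(h,g)/\partial(x^a,x^b)$, which inverts the pull-back: for example it recombines $u^i\,d\phi\wedge d\rho$ and $\phi\,du^i\wedge d\rho$ into $\partial(\phi u^i,\rho)/\partial(x^a,x^b)$, and analogously recombines the two magnetic pieces into the single Jacobian $-\partial(B^i\Gamma_s,u^s)/\partial(x^a,x^b)$. The leftover term $-\rho\,du^i\wedge d\phi$ combines with $u^i\,d\phi\wedge d\rho$-type contributions to produce the $\partial(\rho\phi,u^i)/\partial(x^a,x^b)$ piece after an application of $d(\rho\phi u^i)=0$ modulo the already-accounted-for Jacobians.

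The main obstacle is the bookkeeping in the $\omega^i$ calculation: the target formula (\ref{eq:m58}) disguises several couplings inside product factors such as $\phi u^i$, $u^i\Gamma_s$, and $B^i\Gamma_s$, and one must carefully arrange the $du^i\wedge$ terms so that they pair correctly with the $u^i\,d\omega^0$ block to produce the first three "Clebsch times $u^i$" Jacobians, while the two magnetic exterior-derivative pieces combine into the last Jacobian with the correct sign. Everything else is routine: the $1\times 1$ verification of $F^0_{ab}$, the identification of $F^\alpha_{ab}$ with ${\bf z}^T_{,a}{\sf K}^\alpha{\bf z}_{,b}$ via the pullback argument, and the final conclusion $D_\alpha F^\alpha_{ab}=0$ which is forced by the closure of the base-manifold 2-form $\kappa^\alpha$.
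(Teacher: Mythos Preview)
Your proposal is correct and follows essentially the same line as the paper's proof: pull back $\kappa^\alpha=d\omega^\alpha$ to the base manifold to identify $F^\alpha_{ab}$ with the quadratic form (\ref{eq:m55}), then read off the Jacobian sums from the explicit forms (\ref{eq:m42d}) and (\ref{eq:m42i}). The only remark is that your $\omega^i$ bookkeeping is more elaborate than necessary---since (\ref{eq:m42i}) already presents $\omega^i$ as a finite sum $\sum_p f_p\,dg_p$, the pullback of $d\omega^i$ is simply $\sum_p\partial(f_p,g_p)/\partial(x^a,x^b)$ termwise, and no Leibniz recombination or appeal to $dd(\rho\phi u^i)=0$ is needed to reach (\ref{eq:m58}).
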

\begin{proof}

{\bf We give the derivation of (\ref{eq:m57}).  
 Using (\ref{eq:m42e}) we obtain:
\begin{align}
d\omega^0=&\left(\deriv{\phi}{x^a}\deriv{\rho}{x^b}+\deriv{\beta}{x^a}\deriv{S}{x^b} 
+\deriv{\lambda}{x^a}\deriv{\mu}{x^b}
+\deriv{\Gamma_s}{x^a}\deriv{B^s}{x^b}\right) dx^a\wedge dx^b\nonumber\\
=&\biggl\{ \frac{\partial(\phi,\rho)}{\partial(x^a,x^b)}
+\frac{\partial(\beta,S)}{\partial(x^a,x^b)} 
+\frac{\partial(\lambda,\mu)}{\partial(x^a,x^b)}
+\frac{\partial(\Gamma_s,B^s)}{\partial(x^a,x^b)}\biggr\} 
dx^a\otimes dx^b. \label{eq:m59}
\end{align}
Using (\ref{eq:m59}) we obtain the formula (\ref{eq:m57}) for $F^0_{ab}$.  

A similar analysis gives:
\begin{equation}
d\omega^i=F^i_{ab} dx^a\otimes dx^b, \quad (i=1,2,3), \label{eq:m60}
\end{equation}
where $F^i_{ab}$ is given by (\ref{eq:m58}). This completes the proof.}
\end{proof}

\subsection*{5.2\ Advected ${\bf A}{\bf\cdot}d{\bf x}$ formulation}
In Section 3.2, we discussed the MHD variational equations that result 
when the condition that the one-form $\alpha={\bf A}{\bf\cdot}d{\bf x}$ 
is advected with the background flow is used instead of Faraday's 
equation which is equivalent to the condition that the magnetic flux 
2-form $\beta={\bf B}{\bf\cdot}d {\bf S}$ is Lie dragged with the 
flow, for the case where $\nabla{\bf\cdot}{\bf B}=0$ and 
${\bf B}=\nabla\times{\bf A}$ (e.g., \cite{Tur93}, 
\cite{Webb14a}). Note that 
$\beta=d\alpha={\bf B}{\bf\cdot}d{\bf S}$ in the above case. In addition
\beqn
H=\int_V \alpha\wedge\beta=\int_V({\bf A}{\bf\cdot}d{\bf x})\wedge 
({\bf B}{\bf\cdot}d{\bf S})=\int_V {\bf A}{\bf\cdot}{\bf B}\ d^3x, 
\label{eq:ada1}  
\eeqn
is the Hopf invariant or magnetic helicity (note if there is not a global 
 ${\bf A}$ with ${\bf B}=\nabla\times{\bf A}$, then the magnetic field 
has a non-trivial topology, e.g. \cite{Moffatt69}, \cite{Arnold98}).

\begin{proposition}\label{prop5.4}
Choose the gauge of ${\bf A}$ such that $\alpha={\bf A}{\bf\cdot}d{\bf x}$, 
is advected with the background flow. The Clebsch variational equations 
(\ref{eq:can3})-(\ref{eq:can5}) imply:
\begin{align}
&\beta\nabla S+\lambda\nabla\mu+\boldsymbol{\gamma}{\bf\cdot}(\nabla{\bf A})^T
-\boldsymbol{\gamma}{\bf\cdot}\nabla{\bf A}
-{\bf A}(\nabla{\bf\cdot}\boldsymbol{\gamma}) 
-\rho\nabla\phi={\cal H}_{\bf u}\equiv-\rho{\bf u}, \nonumber\\
&-D_t\phi={\cal H}_\rho\equiv-\left(\frac{1}{2}|{\bf u}|^2-h\right), 
\quad -\beta\nabla{\bf\cdot}{\bf u}-D_t(\beta)={\cal H}_S\equiv \rho T, 
\nonumber\\
&-\lambda\nabla{\bf\cdot}{\bf u}-D_t\lambda={\cal H}_\mu\equiv 0, \quad
-\boldsymbol{\gamma}\nabla{\bf\cdot}{\bf u}
+(\boldsymbol{\gamma}{\bf\cdot}\nabla){\bf u}
-D_t\boldsymbol{\gamma}={\cal H}_{\bf A}\equiv {\bf J}, \nonumber\\
&{\bf A}{\bf\cdot}(\nabla{\bf u})^T+D_t{\bf A}
={\cal H}_{\boldsymbol{\gamma}}\equiv 0,\quad
D_t\mu={\cal H}_\lambda=0, \quad\nonumber\\
&D_tS={\cal H}_\beta\equiv 0, \quad
\rho \nabla{\bf\cdot}{\bf u}+D_t\rho={\cal H}_\phi\equiv 0, 
\label{eq:ada2}
\end{align}
where ${\bf J}=\nabla\times{\bf B}/\mu_0$ is the MHD currrent, 
\beqn
{\cal H}=-\ell=-\int_V\biggl(\frac{1}{2}\rho |{\bf u}|^2
-\varepsilon (\rho,S)
-\frac{|\nabla\times{\bf A}|^2}{2\mu_0}
\biggr)\ d^3x\equiv \int_V H({\bf Z})\ d^3x, \label{eq:ada3}
\eeqn
is the multi-symplectic Hamiltonian and
\beqn
{\bf Z}=\left({\bf u}^T,\rho,S,\mu,{\bf A}^T,\boldsymbol{\gamma}^T,
\lambda,\beta,\phi\right)^T, \label{eq:ada4}
\eeqn
is the 15-dimensional state vector for the system. 
The transformations:
\beqn
\boldsymbol{\Gamma}^*=-{\bf A},\quad {\bf B}^*
=\boldsymbol{\gamma}, \label{eq:ada5}
\eeqn
formally maps the left handsides of the 
advected ${\bf A}$ variational equations (\ref{eq:can3})-(\ref{eq:can5}) 
(i.e. (\ref{eq:ada2})) onto the Clebsch  variational equations (\ref{eq:m31}) 
associated with the advection of the magnetic 
flux ${\bf B}{\bf\cdot}d{\bf S}$, where to obtain (\ref{eq:m31}) 
the replacements 
$\boldsymbol{\Gamma}^*\to \boldsymbol{\Gamma}$ and 
${\bf B}^*\to {\bf B}$ are made.
\end{proposition}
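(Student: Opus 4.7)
The plan is to establish Proposition \ref{prop5.4} in two stages: first, derive the multi-symplectic form (\ref{eq:ada2}) directly from the variational equations (\ref{eq:can3})--(\ref{eq:can5}) of Section 3.2.1; second, verify that the substitution (\ref{eq:ada5}) carries the left-hand side differential operators of (\ref{eq:ada2}) into those of the magnetic flux system (\ref{eq:m31}).

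For the first stage, I would go equation by equation. The Clebsch expansion (\ref{eq:can3}) for $\mathbf{u}$, after multiplication by $\rho$, needs to be rewritten using the index identity $\bigl(\boldsymbol{\gamma}\times(\nabla\times\mathbf{A})\bigr)_i=\gamma^j\partial_i A_j-\gamma^j\partial_j A_i$, which says $\boldsymbol{\gamma}\times(\nabla\times\mathbf{A})=\boldsymbol{\gamma}\cdot(\nabla\mathbf{A})^T-(\boldsymbol{\gamma}\cdot\nabla)\mathbf{A}$. Rearranging yields the first line of (\ref{eq:ada2}) with ${\cal H}_{\mathbf{u}}=-\rho\mathbf{u}$. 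The evolution equations (\ref{eq:can5}) for $\phi$, $\beta$, $\lambda$ already have the required form after identifying $-D_t\phi$, $-D_t\beta-\beta\nabla{\cdot}\mathbf{u}$ and $-D_t\lambda-\lambda\nabla{\cdot}\mathbf{u}$, using $D_t=\partial_t+\mathbf{u}{\cdot}\nabla$. For the $\boldsymbol{\gamma}$ equation, I would expand $\nabla\times(\mathbf{u}\times\boldsymbol{\gamma})=(\boldsymbol{\gamma}\cdot\nabla)\mathbf{u}-(\mathbf{u}\cdot\nabla)\boldsymbol{\gamma}+\mathbf{u}(\nabla{\cdot}\boldsymbol{\gamma})-\boldsymbol{\gamma}(\nabla{\cdot}\mathbf{u})$, which collapses (\ref{eq:can5}) into the required form with ${\cal H}_{\mathbf{A}}=\mathbf{J}$. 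For the $\mathbf{A}$ equation (\ref{eq:can1}), the identity $\nabla(\mathbf{u}\cdot\mathbf{A})-\mathbf{u}\times(\nabla\times\mathbf{A})=(\mathbf{u}\cdot\nabla)\mathbf{A}+A_j\nabla u^j$ gives $D_t A_i+A_j\partial_i u^j=0$, i.e., $\mathbf{A}\cdot(\nabla\mathbf{u})^T+D_t\mathbf{A}=0$, which is precisely the $\boldsymbol{\gamma}$-line of (\ref{eq:ada2}). Throughout, I would check that the right-hand sides are the correct variational derivatives of $\mathcal{H}=\int H\,d^3x$ built from (\ref{eq:ada3}); in particular $\delta\mathcal{H}/\delta\mathbf{A}=\nabla\times(\nabla\times\mathbf{A})/\mu_0=\mathbf{J}$, and $\delta\mathcal{H}/\delta S=\rho T$ by (\ref{eq:2.9}).

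For the second stage, I would apply the substitution $\boldsymbol{\Gamma}^*=-\mathbf{A}$, $\mathbf{B}^*=\boldsymbol{\gamma}$ termwise to the left-hand side of each line of (\ref{eq:ada2}) and match with the corresponding line of (\ref{eq:m31}). In the momentum map, the triple $\boldsymbol{\gamma}\cdot(\nabla\mathbf{A})^T-\boldsymbol{\gamma}\cdot\nabla\mathbf{A}-\mathbf{A}(\nabla{\cdot}\boldsymbol{\gamma})$ becomes $-\mathbf{B}^*\cdot(\nabla\boldsymbol{\Gamma}^*)^T+\mathbf{B}^*\cdot\nabla\boldsymbol{\Gamma}^*+\boldsymbol{\Gamma}^*(\nabla{\cdot}\mathbf{B}^*)$, which (after dropping the stars) coincides with the $\mathbf{u}$-line of (\ref{eq:m31}). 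The $\boldsymbol{\gamma}$-line transforms into minus the $\mathbf{B}$-line of (\ref{eq:m31}), and the $\mathbf{A}$-line into minus the $\boldsymbol{\Gamma}$-line. The overall signs are absorbed by the corresponding sign change in the variational derivatives on the right.

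The main obstacle is purely bookkeeping: keeping careful track of the index orderings in expressions like $\mathbf{B}\cdot\nabla\boldsymbol{\Gamma}$ versus $\mathbf{B}\cdot(\nabla\boldsymbol{\Gamma})^T$, and ensuring that the overall sign conventions in the Hamiltonian $\mathcal{H}=-\ell$ make all the right-hand side identifications consistent. A secondary subtlety is that the map $\boldsymbol{\Gamma}^*=-\mathbf{A}$, $\mathbf{B}^*=\boldsymbol{\gamma}$ exchanges the roles of a ``coordinate'' variable with its conjugate ``momentum'', so one must verify that the interchange of which variable plays which role in the canonical pairing (cf.\ the remark after (\ref{eq:H3})) is consistent with the symplectic matrices ${\sf K}^\alpha$ obtained from $d\omega^\alpha$. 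Once this is checked, the proposition follows immediately. The mapping provides the expected bridge between the two multi-symplectic forms and completes the connection alluded to in the introduction.
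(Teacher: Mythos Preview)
Your proposal is correct and follows essentially the same approach as the paper. The paper's proof focuses exclusively on your second stage---verifying that the substitution $\boldsymbol{\Gamma}^*=-\mathbf{A}$, $\mathbf{B}^*=\boldsymbol{\gamma}$ maps the left-hand sides of (\ref{eq:ada2}) onto those of (\ref{eq:m31})---and, noting that only the magnetic terms differ between the two systems, checks precisely the momentum-map line, the $\boldsymbol{\gamma}$-line, and the $\mathbf{A}$-line, just as you outline; your first stage (the explicit rewriting of (\ref{eq:can3})--(\ref{eq:can5}) into the form (\ref{eq:ada2}) via the vector identities for $\boldsymbol{\gamma}\times(\nabla\times\mathbf{A})$, $\nabla\times(\mathbf{u}\times\boldsymbol{\gamma})$, and $\nabla(\mathbf{u}\cdot\mathbf{A})-\mathbf{u}\times(\nabla\times\mathbf{A})$) is taken as read in the paper but is a welcome addition.
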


\begin{proof}
To verify that the map (\ref{eq:ada5}) maps the Clebsch 
equations (\ref{eq:ada2}) associated with the Lie dragging 
of ${\bf A}{\bf\cdot}d{\bf x}$ onto the Clebsch equations (\ref{eq:m31}),
it suffices to consider only the equations related to 
the evolution of ${\bf A}$ and $\boldsymbol{\gamma}$  
and of ${\bf B}$ and $\boldsymbol{\Gamma}$. 
The first equation in (\ref{eq:ada2}) under the map (\ref{eq:ada5}) 
becomes:
\beqn
\beta\nabla S+\lambda\nabla\mu
-{\bf B}^*{\bf\cdot}\left(\nabla\boldsymbol{\Gamma}^*\right)^T
+({\bf B}^*{\bf\cdot}\nabla)\boldsymbol{\Gamma}^*
+\boldsymbol{\Gamma}^*\left(\nabla{\bf\cdot}{\bf B}^*\right)
-\rho\nabla\phi={\cal H}_{\bf u}, \label{eq:ada6}
\eeqn
which is the first equation in (\ref{eq:m31}), but with ${\bf B}\to {\bf B}^*$ 
and $\boldsymbol{\Gamma}\to\boldsymbol{\Gamma}^*$. 

Similarly, the $5^{th}$ equation in (\ref{eq:ada2}) becomes:
\beqn
-\left\{{\bf B}^* (\nabla{\bf\cdot}{\bf u})-({\bf B}^*
{\bf\cdot}\nabla) {\bf u}
+D_t{\bf B}^*\right\}=-{\cal H}_{\boldsymbol{\Gamma}^*}, \label{eq:ada7}
\eeqn
which is equivalent to the sixth equation in (\ref{eq:m31}). 

The sixth equation in (\ref{eq:ada2}) becomes:
\beqn
-\boldsymbol{\Gamma}^*{\bf\cdot}(\nabla {\bf u})^T -D_t \boldsymbol{\Gamma}^* 
={\cal H}_{{\bf B}^*}, \label{eq:ada8}
\eeqn
which is the $5^{th}$ equation in (\ref{eq:m31}), but with 
$\boldsymbol{\Gamma}\to \boldsymbol{\Gamma}^*$ and ${\bf B}\to {\bf B}^*$. 
This completes the proof.
\end{proof}

\begin{remark}
The Hamiltonian for the system (\ref{eq:ada2}) is given by (\ref{eq:ada3}). Under the 
transformations (\ref{eq:ada5}) the Hamiltonian functional ${\cal H}$ 
becomes:
\beqn
{\cal H}=-\int_V\left(\frac{1}{2}\rho |{\bf u}|^2-\varepsilon(\rho,S)
-\frac{|\nabla\times\boldsymbol{\Gamma}^*|^2}{2\mu_0}\right)\ d^3x. \label{eq:ada9}
\eeqn
\end{remark}

{\bf 
\begin{proposition}
The one forms describing the multi-symplectic MHD system, using the advected ${\bf A}{\bf\cdot}d{\bf x}$ 
formalism are:
\begin{align}
\omega^0=&\phi d\rho+\beta dS+\lambda d\mu+\gamma_s dA^s, \label{eq:ada10}\\
\omega^k=&u^k(\beta dS+\lambda d\mu+\phi d\rho)+\phi\rho du^k
+\gamma_k A^s du^s+u^k \gamma_s dA^s,\quad k=1,2,3, \label{eq:ada11}
\end{align}
The forms (\ref{eq:ada10}) and (\ref{eq:ada11}) can then be used to determine 
the skew symmetric matrices ${\sf K}^\alpha$, for $\alpha=0,1,2,3$. 
\end{proposition}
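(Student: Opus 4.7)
The proof will essentially mirror the exterior differential forms recipe developed in Section 5.1.1 for the advected magnetic flux case, but applied to the constrained Lagrangian (\ref{eq:can2}) associated with the advected one-form $\alpha = {\bf A}\cdot d{\bf x}$. The plan is to write the Clebsch Lagrangian in the canonical form $L = L_0({\bf Z}) + L^\alpha_{Z^s}\,\partial_\alpha Z^s$ and then directly read off the coefficients $L^\alpha_{Z^s}$, from which $\omega^\alpha = L^\alpha_{Z^s}\,dZ^s$ follows.

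First I would handle the temporal component $\alpha=0$, which is routine: the time-derivative parts of the constraints in (\ref{eq:can2}) give $\phi\,\rho_t + \beta S_t + \lambda\mu_t + \gamma_s A^s_t$, so the non-zero entries are $L^0_\rho=\phi$, $L^0_S=\beta$, $L^0_\mu=\lambda$, $L^0_{A^s}=\gamma_s$, immediately yielding (\ref{eq:ada10}). Next, for $\alpha=k$ the mass, entropy and Lin constraints contribute $\phi(\rho\nabla_k u^k + u^k\nabla_k\rho) + \beta u^k\nabla_k S + \lambda u^k \nabla_k\mu$ exactly as in the ${\bf B}$-case. The main work, and the principal obstacle, is the careful expansion of the magnetic constraint term
\begin{equation}
\boldsymbol{\gamma}\cdot\bigl[-{\bf u}\times(\nabla\times{\bf A}) + \nabla({\bf u}\cdot{\bf A})\bigr].
\end{equation}
Using $\epsilon_{ijk}\epsilon_{klm}=\delta_{il}\delta_{jm}-\delta_{im}\delta_{jl}$, the bracket reduces to $u^j\partial_j A^i + A^j\partial_i u^j$, so that the full term becomes $\gamma_s u^k\partial_k A^s + \gamma_k A^s\partial_k u^s$. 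Reading off gives $L^k_\rho=\phi u^k$, $L^k_S=\beta u^k$, $L^k_\mu=\lambda u^k$, $L^k_{A^s}=\gamma_s u^k$, and most importantly $L^k_{u^s} = \rho\phi\,\delta^k_s + \gamma_k A^s$. Assembling $\omega^k = L^k_{Z^s}\,dZ^s$ then yields exactly (\ref{eq:ada11}).

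The main obstacle is really the bookkeeping in the cross-term $\gamma_k A^s\partial_k u^s$, because the free spatial index $k$ (which labels the one-form $\omega^k$) and the velocity-component index $s$ are distinct, so this contribution is not a diagonal $\delta^k_s$ term like $\rho\phi\,du^k$. This is precisely what produces the asymmetric piece $\gamma_k A^s du^s$ in (\ref{eq:ada11}), and corresponds (under the map $\boldsymbol{\Gamma}^\ast=-{\bf A}$, ${\bf B}^\ast=\boldsymbol{\gamma}$ of Proposition \ref{prop5.4}) to the $-\Gamma_i B^k$ piece of $L^k_{u^i}$ in (\ref{eq:m42h}).

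Finally, having established the 1-forms $\omega^\alpha$, the skew-symmetric matrices ${\sf K}^\alpha_{ij}$ are recovered via the standard identity ${\sf K}^\alpha_{ij} = \partial_{Z^i} L^\alpha_{Z^j} - \partial_{Z^j} L^\alpha_{Z^i}$, equivalently by computing $d\omega^\alpha = \tfrac{1}{2}{\sf K}^\alpha_{ij}\,dZ^i\wedge dZ^j$ just as in (\ref{eq:m42e}) and (\ref{eq:m42j}). Uniqueness up to an exact differential $d\Phi^\alpha({\bf Z})$ is noted but irrelevant, since it does not affect ${\sf K}^\alpha$. One may then verify consistency with Proposition \ref{prop5.4} by checking that the resulting ${\sf K}^\alpha$ are mapped onto those of Section 5.1 under (\ref{eq:ada5}), modulo the gauge ambiguity in the one-forms.
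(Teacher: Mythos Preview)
Your proposal is correct and follows essentially the same approach as the paper's own proof: write the constrained Lagrangian (\ref{eq:can2}) in the form $L=L_0({\bf Z})+L^\alpha_{Z^s}\partial_\alpha Z^s$, read off the coefficients $L^\alpha_{Z^s}$ from the constraint terms, and assemble $\omega^\alpha=L^\alpha_{Z^s}\,dZ^s$. Your explicit $\epsilon$-identity reduction of $\boldsymbol{\gamma}\cdot[-{\bf u}\times(\nabla\times{\bf A})+\nabla({\bf u}\cdot{\bf A})]$ to $\gamma_s u^k\partial_k A^s+\gamma_k A^s\partial_k u^s$ is exactly the step the paper takes implicitly between (\ref{eq:prla2}) and (\ref{eq:prla7})--(\ref{eq:prla8}); the paper then goes one step further and writes out $d\omega^k$ and the resulting ${\sf K}^k$ entries explicitly, which you correctly note is just the standard recipe ${\sf K}^\alpha_{ij}=\partial_{Z^i}L^\alpha_{Z^j}-\partial_{Z^j}L^\alpha_{Z^i}$.
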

}
\begin{proof}
Using the action principle (\ref{eq:can2}) we can write:
\beqn 
L=\frac{1}{2}\rho u^2-\varepsilon(\rho,S)-\frac{|\nabla\times{\bf A}|^2}{2\mu}
+ L^\alpha_{Z^s}\deriv{Z^s}{x^\alpha}, \label{eq:prla1}
\eeqn
where
\begin{align}
L^\alpha_{Z^s} \deriv{Z^s}{x^\alpha}=&\phi\left[\rho_t+\nabla(\rho {\bf u})\right]
+\beta(S_t+{\bf u}{\bf\cdot}\nabla S)+\lambda(\mu_t+{\bf u}{\bf\cdot}\nabla\mu)\nonumber\\
&+\boldsymbol{\gamma}{\bf\cdot}\left[{\bf A}_t-{\bf u}\times 
(\nabla\times{\bf A}) +\nabla({\bf u}{\bf\cdot}{\bf A})\right]. \label{eq:prla2}
\end{align}
Following the approach in (\ref{eq:m42a}) et seq., we identify:
\begin{align}
\omega^0=&L^0_\rho d\rho+L^0_S dS+L^0_\mu d\mu +L^0_{\bf A}{\bf\cdot} d{\bf A}, \nonumber\\
\equiv& \phi d\rho+\beta dS+\lambda d\mu+\gamma_s dA^s, \label{eq:prla3}
\end{align}
and hence:
\beqn
L^0_\rho=\phi,\quad L^0_S=\beta,\quad L^0_\mu=\lambda,\quad L^0_{A^s}=\gamma_s. \label{eq:prla4}
\eeqn
Taking the exterior derivative of (\ref{eq:prla3}) gives:
\beqn
d\omega^0=d\phi\wedge d\rho+d\beta\wedge dS+d\lambda\wedge d\mu+d\gamma_s\wedge dA^s, \label{prla5}
\eeqn
which implies:
\beqn
{\sf K}^0_{\phi,\rho}={\sf K}^0_{\beta,S}={\sf K}^0_{\lambda,\mu}={\sf K}^0_{\gamma_s,A^s}=1. 
\label{eq:prla6}
\eeqn

Similarly using (\ref{eq:prla2}) we obtain:
\beqn
\omega^k=L^k_{Z^s} dZ^s=u^k(\beta dS+\lambda d\mu+\phi d\rho)+\phi\rho du^k
+\gamma_k A^s du^s+u^k \gamma_s dA^s, \label{eq:prla7}
\eeqn
and hence:
\begin{align}
&L^k_S=\beta u^k,\quad L^k_\mu=\lambda u^k,\quad L^k_\rho= \phi u^k,
\quad L^k_{u^k}=\phi\rho+\gamma_k A^k, \nonumber\\
&L^k_{u^s}=\gamma_k A^s,\quad L^k_{A^s}= u^k \gamma_s. \label{eq:prla8}
\end{align}
Taking the exterior derivative of (\ref{eq:prla7}) gives:
\begin{align}
d\omega^k=&du^k[\beta dS+\lambda d\mu+ \phi d\rho]
+u^k[d\beta\wedge dS+d\lambda\wedge d\mu+ d\phi\wedge d\rho]
+(\phi d\rho+\rho d\phi)\wedge du^k \nonumber\\
&+\left(d\gamma_k A^s+\gamma_k dA^s\right)\wedge du^s
+\left(du^k\gamma_s+ u^k d\gamma_s\right)\wedge dA^s, \label{eq:prla9}
\end{align}
which can be used to determine the ${\sf K}^k_{Z^\mu,Z^\nu}$, i.e.
\begin{align}
&{\sf K}^k_{u^k,S}=\beta,\quad {\sf K}^k_{u^k,\mu}=\lambda, \nonumber\\
&{\sf K}^k_{\beta,S}={\sf K}^k_{\lambda,\mu}={\sf K}^k_{\phi,\rho}=u^k, 
\quad {\sf K}^k_{\phi,u^k}=\rho, \nonumber\\
&{\sf K}^k_{A^s,u^s}=\gamma_k, \quad {\sf K}^k_{u^k,A^s}=\gamma_s, 
\quad (k\neq s),\nonumber\\
&{\sf K}^k_{\gamma_k,u^s}=A^s,\quad {\sf K}^k_{\gamma_k,A^s}=u^k. \label{eq:prla10}
\end{align}
\end{proof}

{\bf
\begin{proposition}\label{prop5.5}
The structural conservation laws for the advected ${\bf A}{\bf\cdot}d{\bf x}$ version of 
the MHD equations is given 
by (\ref{eq:m54}) in which:
\begin{align}
F^0_{ab}=&\frac{\partial(\phi,\rho)}{\partial(x^a,x^b)} 
+\frac{\partial(\beta,S)}{\partial(x^a,x^b)}
+\frac{\partial(\lambda,\mu)}{\partial(x^a,x^b)}
+\frac{\partial(\gamma_s,A^s)}{\partial(x^a,x^b)}, \label{eq:prla11}\\
F^i_{ab}=&\frac{\partial(\phi u^i,\rho)}{\partial(x^a,x^b)}
+\frac{\partial(\beta u^i,S)}{\partial(x^a,x^b)}
+\frac{\partial(\lambda u^i,\mu)}{\partial(x^a,x^b)}\nonumber\\
&+\frac{\partial(\rho \phi,u^i)}{\partial(x^a,x^b)}
+\frac{\partial(\gamma_i A^s,u^s)}{\partial(x^a,x^b)}
+\frac{\partial(\gamma_s A^i,A^s)}{\partial(x^a,x^b)}, \label{eq:prla12}
\end{align}
($i=1,2,3$ and $a,b=0,1,2,3$) 
are the conserved densities and fluxes respectively where $a<b$. 
\end{proposition}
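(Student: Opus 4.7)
The plan is to mirror the argument used for Proposition \ref{prop5.3}, but with the one-forms $\omega^\alpha$ from equations (\ref{eq:ada10})-(\ref{eq:ada11}) for the advected $\mathbf{A}\cdot d\mathbf{x}$ formulation in place of those from equations (\ref{eq:m42d}) and (\ref{eq:m42i}). The starting point is the general structural conservation law (\ref{eq:m19}), namely $D_\alpha F^\alpha_{ab}=0$ with $F^\alpha_{ab}=\mathbf{Z}^T_{,a}\,\mathsf{K}^\alpha\,\mathbf{Z}_{,b}$ for $a<b$, which was established in Section 4 from the pull-back to the base manifold of the closure relation $d\kappa^\alpha=0$. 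Since the multi-symplectic system (\ref{eq:ada2}) has been cast in the required canonical form by Proposition \ref{prop5.4}, this conservation law applies directly, and the only task is to evaluate the conserved densities and fluxes in terms of the Clebsch potentials and physical variables.

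First, I would use the relation $F^\alpha_{ab}=\mathsf{K}^\alpha_{\mu\nu}\,Z^\mu_{,a}Z^\nu_{,b}$, which coincides with the coefficient extraction
\begin{equation}
d\omega^\alpha=F^\alpha_{ab}\,dx^a\otimes dx^b,\qquad a<b,
\end{equation}
exactly as exploited in the proof of Proposition \ref{prop5.3} (compare (\ref{eq:m56})). Then I would take $\omega^0=\phi\,d\rho+\beta\,dS+\lambda\,d\mu+\gamma_s\,dA^s$ from (\ref{eq:ada10}) and compute $d\omega^0=d\phi\wedge d\rho+d\beta\wedge dS+d\lambda\wedge d\mu+d\gamma_s\wedge dA^s$; sectioning each wedge product $du\wedge dv$ to $(u_{,a}v_{,b}-u_{,b}v_{,a})\,dx^a\otimes dx^b$ with $a<b$ immediately reproduces the two-dimensional Jacobian form (\ref{eq:prla11}).

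For the spatial one-forms, I would expand $d\omega^k$ using (\ref{eq:prla9}), keeping in mind that terms of the form $u^k(\beta\,dS+\lambda\,d\mu+\phi\,d\rho)$ contribute both $du^k\wedge(\cdots)$ and $u^k\,d(\cdots)$ pieces. Grouping these pieces by the underlying pair of Clebsch/physical variables, one obtains four "convective" Jacobians $\partial(\phi u^i,\rho)/\partial(x^a,x^b)$, $\partial(\beta u^i,S)/\partial(x^a,x^b)$, $\partial(\lambda u^i,\mu)/\partial(x^a,x^b)$, and $\partial(\rho\phi,u^i)/\partial(x^a,x^b)$, together with the two magnetic Jacobians $\partial(\gamma_i A^s,u^s)/\partial(x^a,x^b)$ and $\partial(\gamma_s A^i,A^s)/\partial(x^a,x^b)$ arising from the last two terms of (\ref{eq:ada11}). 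This matches (\ref{eq:prla12}). Finally, since (\ref{eq:m19}) holds identically for any multi-symplectic system, the conservation law $D_\alpha F^\alpha_{ab}=0$ follows at once.

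The main obstacle, as in Proposition \ref{prop5.3}, is purely bookkeeping: the spatial one-form (\ref{eq:ada11}) contains mixed terms $\gamma_k A^s\,du^s$ and $u^k\gamma_s\,dA^s$ whose exterior derivatives produce cross terms among $\gamma$, $A$, and $u$ with subtle index contractions (the role of the free index $i=k$ versus the summed index $s$ must be tracked carefully). One must also be careful that the symmetric combinations such as $d(\rho\phi)\wedge du^k$ versus $\rho\,d\phi\wedge du^k+\phi\,d\rho\wedge du^k$ are consolidated correctly so that the Jacobian $\partial(\rho\phi,u^i)/\partial(x^a,x^b)$ appears cleanly, as it does in (\ref{eq:prla12}). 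Once this accounting is done, no further analytical input is needed beyond Proposition \ref{prop5.4} and the general machinery of Section 4.
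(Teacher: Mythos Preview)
Your proposal is correct and follows essentially the same approach as the paper: the paper's own proof simply states that the argument is identical to that of Proposition~\ref{prop5.3}, but with the one-forms $\omega^0$ and $\omega^i$ now taken from (\ref{eq:ada10})--(\ref{eq:ada11}). Your write-up spells out the sectioning of $d\omega^\alpha$ into Jacobians in more detail than the paper does, but the strategy and the key computation are the same.
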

\begin{proof}
The proof is the same as in Proposition \ref{prop5.3}, except that $\omega^0$ and $\omega^i$ 
are now given by (\ref{eq:ada10}) and (\ref{eq:ada11}).
\end{proof}
}

\section{Summary and Concluding Remarks}
In this paper we developed multi-symplectic equations for ideal MHD. 
A key ingredient was the use of Clebsch variable variational 
principles in which the constraint equations 
(mass continuity, entropy advection, Lin constraint, Faraday's equation, 
or its analogue for the magnetic vector potential ${\bf A}$), are ensured by 
using Lagrange multipliers. The connection between Clebsch variables 
and the momentum map in ideal fluid systems with constraints 
was used (Section 3.1). The Lin constraint can be viewed as defining a Lagrangian 
variable which is advected with the flow (more generally it is useful 
to include 3 Lagrangian labels, to include different possible initial 
data, that are not included in the usual Clebsch variable description (see
e.g. \cite{Cotter07}). 

{\bf Section 2 introduces the MHD equations. In Section 2 we also discuss 
one dimensional gas dynamics as an example of a multi-symplectic 
system. The example illustrates, that both time and space can be 
thought of as evolution variables. The example uses Clebsch variables
to describe the gas dynamic equations. The multi-symplectic formulation 
involves two skew symmetric matrices associated with the time and space 
evolution. The gas dynamic equations are obtained by finding the stationary point 
conditions for the action, including the mass continuity equation and entropy 
advection equation constraints by using Lagrange multipliers. 
 One forms  (i.e. differential forms) are constructed from the constraint equations 
which lead to the multi-symplectic form of the 1D gas dynamic equations. 
The theory implies that there are in general extra conservation laws 
that arise from using an expanded phase space involving the Clebsch variables. 
In 1D gas dynamics the symplecticity conservation law 
involves the space derivative of the energy conservation law, 
and the time derivative of the momentum conservation law. In this case and 
in more general cases (Sections 4 and 5), the extra symplecticity conservation laws 
imposes constraints that ensure that the equations involve the same effective number of 
dependent variables as the original Eulerian formulation of the fluid equations. 
The conserved densities and fluxes are written in terms of $2\times 2$ Jacobians of the 
dependent variables with respect to two of the independent space-time variables.} 

Two different formulations were investigated. In the first formulation, 
the constraint that the magnetic flux $\beta={\bf B}{\bf\cdot}d{\bf S}$ 
is Lie dragged
(i.e. conserved moving with the flow), leads to Faraday's equation in the form:
\beqn
\deriv{\bf B}{t}-\nabla\times({\bf u}\times{\bf B})
+{\bf u}(\nabla{\bf\cdot}{\bf B})=0, \label{eq:con1}
\eeqn
(for mathematical reasons it is useful to consider the 
case $\nabla{\bf\cdot}{\bf B}\neq 0$ as well as the 
physical case $\nabla{\bf\cdot}{\bf B}=0$). An alternative method to 
account for Faraday's equation, is to require that the gauge of the magnetic 
vector potential ${\bf A}$ is chosen so that 
the 1-form $\alpha={\bf A}{\bf\cdot}d{\bf x}$ is Lie dragged by the 
flow, i.e. ${\bf A}$ satisfies the evolution equation:
\beqn
\deriv{\bf A}{t}-{\bf u}\times(\nabla\times{\bf A})
+\nabla ({\bf u}{\bf\cdot}{\bf A})=0. \label{eq:con2}
\eeqn
For the case where ${\bf A}$ satisfies (\ref{eq:con2}), 
${\bf A}{\bf\cdot}{\bf B}/\rho$ is a scalar invariant advected by the 
flow, i.e.
\beqn
\frac{d}{dt}\left(\frac{{\bf A}{\bf\cdot}{\bf B}}{\rho}\right)=0, 
\label{eq:con3}
\eeqn
where $d/dt=\partial/\partial t+{\bf u}{\bf\cdot}\nabla$ is the Lagrangian 
time derivative following the flow and ${\bf B}=\nabla\times{\bf A}$. 

Following the approach of \cite{Cotter07}, we showed that the Clebsch 
variable evolution equations and the Clebsch representation for the 
the mass flux (momentum density) $\rho {\bf u}$ could be written in the 
multi-symplectic form:
\beqn
{\sf K}^\alpha_{ij}\deriv{z^j}{x^\alpha}=\frac{\delta{\cal H}}{\delta z^i}, 
\label{eq:con4}
\eeqn
where $x^\alpha$ ($\alpha=0,1,2,3$) denote the space-time coordinates 
$(t,x,y,z)$ and \\ 
${\bf z}=({\bf u}^T,\rho,S,
\mu,{\bf B}^T, \boldsymbol{\Gamma}^T, \lambda,\beta,\phi)^T$ is 
a 15-dimensional state vector describing the system. The covariant 
form of the multi-symplectic system (\ref{eq:con4}) 
for non-Cartesian spatial coordinates 
was discussed in Section 4 (see also \cite{Bridges10}).
The multi-symplectic Hamiltonian in (\ref{eq:con4}), given by
\beqn
{\cal H}=-\ell=-\int_V\left(\frac{1}{2}\rho |{\bf u}|^2-\varepsilon(\rho,S) -
\frac{B^2}{2\mu_0}\right)\ d^3x, \label{eq:con5}
\eeqn
is the negative of the MHD Lagrangian functional without constraints. The skew symmetric 
matrices ${\sf K}^\alpha_{ij}$ ($1\leq i,j\leq 15$) are related to 
one-forms $\omega^\alpha$, by the equations:
\beqn
\kappa^\alpha=\frac{1}{2} {\sf K}^\alpha_{ij}({\bf z}) dz^i\wedge dz^j
=d\omega^\alpha
\quad \hbox{and}\quad \omega^\alpha= L^\alpha_j dz^j, \label{eq:con6}
\eeqn
where the equation $\kappa^\alpha_{;\alpha}=0$ is the symplecticity 
 conservation law.

 Section 4  discusses multi-symplectic systems, 
based in part on the work of \cite{Hydon05}. This included a discussion of 
skew symmetric operators and matrices and Poisson brackets in 
Hamiltonian systems in which the time is the evolution variable. 
The conservation of the phase space element following the Hamiltonian flow 
and its  generalization for multi-symplectic 
systems (i.e. the symplecticity conservation law) were derived. 
Noether's theorem for 
multi-symplectic systems was discussed, and the form of 
the equations for generalized non-Cartesian space coordinates 
were studied (see \cite{Bridges10} for a study of multi-symplectic
systems and the variational bi-complex using the total exterior algebra tangent bundle (TEA)).   
 Proposition \ref{propforms}
shows that multi-symplectic systems can be written as a Cartan-Poincar\'e 
form equation using an $N+2$-form where $N$ is the number of 
independent space variables (see also \cite{Marsden99}), which  
involves the one-forms $\omega^\alpha$ 
and their exterior derivatives $d\omega^\alpha$ ($0\leq\alpha\leq N$) and the 
multi-symplectic Hamiltonian $H$. 
The Cartan-Poincar\'e form can be related to   
 Cartan's geometric formulation of partial differential equations 
(e.g. \cite{Harrison71}). 
Multi-symplectic systems for generalized (Cartesian and non-Cartesian) space 
coordinates were derived    
 from the  variational principle.    
 
In Section 5, we demonstrated that the multi-symplectic MHD equations 
obtained using the magnetic vector potential ${\bf A}$ 
satisfying (\ref{eq:con2}) are related to the multi-symplectic MHD 
equations using ${\bf B}$ and Faraday's law (\ref{eq:con1}) 
by the transformations (\ref{eq:ada5}), i.e.
\beqn
\boldsymbol{\Gamma}^*=-{\bf A} \quad \hbox{and}
\quad {\bf B}^*=\boldsymbol{\gamma}.  \label{eq:con7}
\eeqn
In (\ref{eq:con7}) $\psi^*$ is the image of $\psi$ under the map.
 ${\bf A}$ 
is the magnetic vector potential and $\boldsymbol{\gamma}$ is the corresponding
Lagrange multiplier in the advected ${\bf A}$ 
variational principle (\ref{eq:can2}). 
 ${\bf B}=\nabla\times{\bf A}$ and $\boldsymbol{\Gamma}$ 
is the  Lagrange multiplier for Faraday's equation in the
variational principle (\ref{eq:Clebsch1}). Under the map (\ref{eq:con7}),  
$\boldsymbol{\gamma}\to {\bf B}^*\equiv {\bf B}$ 
and ${\bf A}\to -\boldsymbol{\Gamma}^*\equiv-\boldsymbol{\Gamma}$.
{\bf In Section 5 we also obtained the 6 symplecticity conservation laws that 
occur when there are 4 independent (space-time) variables. These conservation 
laws are obtained from setting combinations of the derivatives of the momentum and energy 
conservation equations equal to zero, which ensures conservation of phase space. 
These conservation laws have densities $D$ and fluxes $F$ that consist of a sum 
of $2\times 2$ Jacobians of the dependent variables with respect to the space-time 
coordinates.}  
One can also derive conservation laws  
using the multi-symplectic versions of Noether's theorems. In particular, 
it is possible to derive the generalized non-local 
cross helicity conservation 
law for MHD and the generalized non-local 
helicity conservation law for ideal fluids, 
that apply for non-barotropic equations of state for the gas. These 
conservation laws were derived in \cite{Webb14b} (paper II) 
using Noether's theorem, fluid relabelling symmetries and gauge symmetries for the Lagrangian.  They depend on the nonlocal Clebsch variables.
For barotropic gases, these conservation laws reduce to the 
usual local 
cross helicity conservation law for MHD, 
and the helicity conservation law for ideal fluids.
However, because
 the Eulerian fluid velocity variation (Lie symmetry generator) 
for relabelling symmetries is zero, this implies a constraint on the 
Clebsch variable symmetry generators (see e.g. \cite{Calkin63}). 
Non-local conservation laws in partial differential equation systems 
 can arise from  Lie potential symmetries  
 of the cover system of equations, consisting of the 
the original system augmented by the differential equations for the 
potentials 
of the original system
 (e.g. \cite{Bluman10}, \cite{Sjoberg04}, 
\cite{Webb09}). This is worth further investigation. 
 
The Clebsch variable approach to MHD and fluid equations 
used in the present paper 
is not necessarily the only way that multi-symplectic systems of 
equations can be derived. \cite{Bridges06} in a study of elliptic 
partial differential equations (pdes) using the total exterior algebra bundle (TEA) 
makes the interesting 
observation that the symplectic matrices 
can sometimes have more obvious symmetry properties when higher order 
matrices are used to describe the system (e.g. the quaternion algebra 
is revealed when using $4\times 4$ matrices to describe 
the 2D Klein Gordon equation, which is not obvious when $3\times 3$ 
matrices are used to describe the multi-symplectic structure).  

\section*{Aknowledgements}
GMW acknowledges stimulating discussions of MHD conservation laws 
and multi-symplectic systems with Darryl Holm. {\bf We acknowledge discussions 
of the non-canonical MHD Poisson bracket and multi-symplectic MHD with 
Phillip Morrison.}
GPZ was supported
in part by NASA grants
NN05GG83G and NSF grant
nos. ATM-03-17509 and ATM-04-28880.
JFMcK acknowledges support by the NRF of South Africa.

\appendix
\section*{Appendix A}
\setcounter{section}{1}
In this appendix we derive the momentum and energy conservation equations
(\ref{eq:2.15a}) and  (\ref{eq:2.16a})  
for  one dimensional gas dynamics using the symplecticity pullback conservation 
laws (\ref{eq:m18}). We also derive the structural or simplecticity conservation 
law (\ref{eq:2.17a}) using (\ref{eq:m19}). 

The energy and momentum conservation laws (\ref{eq:2.15a}) and (\ref{eq:2.16a}) 
follow from (\ref{eq:m18}), i.e., 
\begin{equation}
D_\alpha\left(L^\alpha_j(z) z^j_{,\nu}-L\delta^\alpha_\nu\right)=0, \label{eq:A1}
\end{equation}
where the Lagrange density $L$ is given by (\ref{eq:2.9a}) and the one-forms 
$\omega^\alpha=L^\alpha_{z^s} dz^s$ are given by (\ref{eq:2.13a}) and $\alpha=0,1$ 
and $\nu=0,1$ for 1D gas dynamics and ${\bf z}=(u,\rho,S,\beta,\phi)^T$ are 
the dependent variables, $x^0=t$ and $x^1=x$. 
The fluid velocity 
$u$ is given by the Clebsch form (\ref{eq:2.1a}), i.e.
\begin{equation}
u=\deriv{\phi}{x}-\frac{\beta}{\rho} \deriv{S}{x}. \label{eq:A2}
\end{equation}

For $\nu=0$ (\ref{eq:A1}) gives the energy conservation equation:
\begin{equation}
\deriv{D_0}{t}+\deriv{F_0}{x}=0, \label{eq:A3}
\end{equation}
where
\begin{equation}
D_0=L^0_{z^j} z^j_{,0}-L,\quad F_0=L^1_{z^j} z^j_{,0}, \label{eq:A4}
\end{equation}
are the conserved density $D_0$ and flux $F_0$. The Lagrange density $L$ 
is given by (\ref{eq:2.9a})-(\ref{eq:2.10a}), i.e.:
\begin{equation}
L=\frac{1}{2}\rho u^2-\varepsilon(\rho,S)
+\phi\left[\deriv{\rho}{t}+\derv{x}(\rho u)\right]
+\beta \left(\deriv{S}{t}+u\deriv{S}{x}\right). \label{eq:A5}
\end{equation}
From (\ref{eq:A4}) we obtain:
\begin{equation}
D_0=\phi\rho_t+\beta S_t-L,\quad F_0=u\phi\rho_t+\beta u S_t+\phi\rho u_t. 
\label{eq:A6}
\end{equation}
Using $L$ from (\ref{eq:A5}) and using the Clebsch expansion (\ref{eq:A2}) 
for $u$ in (\ref{eq:A6}) we obtain:
\begin{align}
D_0=&\frac{1}{2}\rho u^2+\varepsilon (\rho,S)-D_x(\rho u\phi), \nonumber\\
F_0=&D_t(\rho u\phi)+\rho u\left(\frac{1}{2} u^2+h\right). \label{eq:A7}
\end{align}
Substitution of $D_0$ and $F_0$ from (\ref{eq:A7}) in (\ref{eq:A3}) 
gives the energy conservation law (\ref{eq:2.15a}). 

The momentum conservation law from (\ref{eq:A1}) has the form:
$D_t(D_1)+D_x(F_1)=0$ where
\begin{equation}
D_1=L^0_{z^j} z^j_{,1},\quad F_1=L^1_{z^j} z^j_{,1}-L. \label{eq:A8}
\end{equation}
We find:
\begin{align}
D_1=&\phi\rho_x+\beta S_x\equiv D_x(\rho\phi)-\rho u, \nonumber\\
F_1=&u\phi \rho_x+u\beta S_x+\phi\rho u_x-L\nonumber\\
=&-\left(\rho u^2+p\right)-D_t(\rho\phi)-\beta(S_t+uS_x). \label{eq:A9}
\end{align}
From (\ref{eq:A8})-(\ref{eq:A9}) we obtain:
\begin{equation}
\deriv{D_1}{t}+\deriv{F_1}{x}=-\left[\derv{t}(\rho u)+\derv{x}\left(\rho u^2+p\right)\right]=0, 
\label{eq:A10}
\end{equation}
which is the momentum conservation equation (\ref{eq:2.16a}). 

In general, the symplecticity conservation laws are given by (\ref{eq:m19}), i.e. 
\begin{equation}
D_\alpha\left(F^\alpha_{\nu\gamma}\right)=0,\quad\hbox{where}\quad \nu<\gamma, 
\label{eq:A11}
\end{equation}
and
\begin{equation}
F^\alpha_{\nu\gamma}={\sf K}^\alpha_{ij} z^i_{,\nu} z^j_{,\gamma}. \label{eq:A12}
\end{equation}
For the case of 1D gas dynamics, there is only one structural conservation law, namely:
\begin{equation}
D_t\left(F^0_{01}\right)+D_x\left(F^1_{01}\right)=0. \label{eq:A13}
\end{equation}
Using (\ref{eq:A12}) we find:
\begin{align}
F^0_{01}=& z^i_{,0} {\sf K}^0_{ij} z^j_{,1}\equiv [{\bf z}_t]^T{\sf K}^0 {\bf z}_x\nonumber\\
=&\phi_t\rho_x+\beta_tS_x-S_t\beta_x-\phi_x\rho_t\nonumber\\
=&\frac{\partial(\phi,\rho)}{\partial(t,x)} +\frac{\partial(\beta,S)}{\partial(t,x)}. \label{eq:A14}
\end{align}
Similarly we find:
\begin{align}
F^1_{01}=&z^i_{,0} {\sf K}^1_{ij} z^j_{,1}\equiv [{\bf z}_t]^T {\sf K}^1 {\bf z}_x\nonumber\\
=&\left(\rho\phi_t-\beta S_t\right) u_x+u\phi_t\rho_x+\derv{t}(\beta\rho) S_x
-u S_t\beta_x-\phi_x\derv{t}(\rho u)\nonumber\\
=&\rho \frac{\partial(\phi,u)}{\partial(t,x)}+ u \frac{\partial(\phi,\rho)}{\partial(t,x)} 
+\beta \frac{\partial(u,S)}{\partial(t,x)}+ u \frac{\partial(\beta,S)}{\partial(t,x)}, 
\label{eq:A15}
\end{align}
Using (\ref{eq:A14}) and (\ref{eq:A15}) in (\ref{eq:A13}) gives the symplecticity 
conservation law (\ref{eq:2.17a}).

\section*{Appendix B}
\setcounter{section}{2}
In this appendix we discuss the notation 
\beqn
V_{\bf B}={\bf B}{\bf\cdot}\nabla\circ-{\bf B}{\bf\cdot}(\nabla\circ)^T, 
\quad V_{\bf B}^\dagger ={\bf B}{\bf\cdot}(\nabla\circ)^T
-{\bf B}(\nabla{\bf\cdot}\circ)
, \label{eq:B1}
\eeqn
used in (\ref{eq:m34}). Consider the integral:
\begin{align}
\int_{R}V_{\bf B}({\bf W})\ d^3x=&\int_R\left[{\bf B}{\bf\cdot}\nabla\circ
-{\bf B}{\bf\cdot}(\nabla\circ)^T\right]{\bf W}\ d^3x\nonumber\\
=&\int_R\left(B^s\deriv{W^i}{x^s}-B^s\nabla_i W^s\right){\bf e}_i\ d^3x
\nonumber\\
=&\int_R\left\{\nabla{\bf\cdot}\left({\bf B}\otimes{\bf W}\right)
-\nabla({\bf B}{\bf\cdot}{\bf W})-{\bf W}(\nabla{\bf\cdot}{\bf B})
+{\bf W}{\bf\cdot}(\nabla{\bf B})^T\right\} d^3x\nonumber\\
=&\int_R\left\{{\bf W}{\bf\cdot}(\nabla\circ)^T
-{\bf W}\nabla{\bf\cdot}\circ\right\}
{\bf B}\ d^3x
=\int_R V_{\bf W}^\dagger({\bf B})\ d^3x, \label{eq:B2}
\end{align}
where
\beqn
V_{\bf W}^\dagger ={\bf W}{\bf\cdot}(\nabla\circ)^T-{\bf W}(\nabla
{\bf\cdot}\circ)
, \label{eq:B3}
\eeqn
and the $\{{\bf e}_i\}$ are unit base vectors along the $x$, $y$ and $z$ axes. 
In the derivation of (\ref{eq:B2}) we have used Gauss' divergence 
theorem to obtain:
\begin{align}
\int_R\nabla{\bf\cdot}\left({\bf B}\otimes{\bf W}\right)\ d^3x
=&\int_{\partial R}
\left({\bf B}{\bf\cdot}{\bf n}\right){\bf W}\ d{\bf S},\nonumber\\
\int_R\nabla({\bf B}{\bf\cdot}{\bf W})\ d^3x=&
\int_{\partial R} ({\bf B}{\bf\cdot}{\bf W}){\bf n}\ dS, \label{eq:B4}
\end{align}
where ${\bf n}$ is the outward normal to the region $R$, and we assume that the surface integrals (\ref{eq:B4}) vanish. The formula $V_{\bf B}^\dagger$ in 
(\ref{eq:B1}) now follows by using the replacement 
${\bf W}\to{\bf B}$ in (\ref{eq:B3}).

\end{document}